% 																												Alireza Vahid Cornell University 2010

\documentclass[journal]{IEEEtran}

\usepackage{amsthm}
\usepackage{amsmath}
\usepackage{amssymb}
\usepackage{amsmath}
\usepackage{amssymb}
\usepackage{epsfig}
\usepackage{epsf}
\usepackage{subfigure}
\usepackage{graphicx}
\usepackage{url}
\usepackage[]{authblk}

\def\BibTeX{{\rm B\kern-.05em{\sc i\kern-.025em b}\kern-.08em
    T\kern-.1667em\lower.7ex\hbox{E}\kern-.125emX}}

\newtheorem{claim}{Claim}

\newtheorem{theorem}{Theorem}
\newtheorem{definition}{Definition}

\newtheorem{lemma}{Lemma}

\newtheorem{remark}{Remark}

% graphics

\author{Alireza Vahid, Vaneet Aggarwal, A. Salman Avestimehr, and Ashutosh Sabharwal\thanks{A. Vahid and A. S. Avestimehr are with the School of Electrical and Computer Engineering, Cornell University, Ithaca, NY (email: av292@cornell.edu, avestimehr@ece.cornell.edu). V. Aggarwal is with AT\&T Labs - Research, Florham Park, NJ 07932 (email: vaneet@research.att.com). A. Sabharwal is with Department of Electrical and Computer Engineering, Rice University, Houston, TX (email: ashu@rice.edu).

The results in this paper were presented in part at the Allerton Conference \cite{AlirezaAllerton}.
}
}

\begin{document}

\title{Wireless Network Coding with Local Network Views: Coded Layer Scheduling}

\maketitle

\begin{abstract}
One of the fundamental challenges in the design of distributed wireless networks is the large dynamic range of network state. Since continuous tracking of global network state at all nodes is practically impossible, nodes can only acquire limited local views of the whole network to design their transmission strategies. In this paper, we study multi-layer wireless networks and assume that each node has only a limited knowledge, namely \emph{1-local view}, where each S-D pair has enough information to perform optimally when other pairs do not interfere, along with connectivity information for rest of the network. We investigate the information-theoretic limits of communication with such limited knowledge at the nodes. We develop a novel transmission strategy, namely \emph{Coded Layer Scheduling}, that solely relies on 1-local view at the nodes and incorporates three different techniques:  (1) per layer interference avoidance, (2) repetition coding to allow overhearing of the interference, and (3) network coding to allow interference neutralization. We show that our proposed scheme can provide a significant throughput gain compared with the conventional interference avoidance strategies. Furthermore, we show that our strategy maximizes the achievable normalized sum-rate for some classes of networks, hence, characterizing the normalized sum-capacity of those networks with 1-local view.
\end{abstract}

\section{Introduction}
\label{introduction}

In dynamic wireless networks, optimizing system efficiency requires information about the state of the network in order to determine what resources are actually available.  However, in large wireless networks, keeping track of the state for making optimal decisions is typically infeasible. Thus, in the absence of centralization of network state information, nodes have limited local views of the network and make decentralized decisions based on their own local view of the network. The key question then is, how do optimal decentralized decisions perform in comparison to the optimal centralized decisions which rely on full network state information.

In this paper, we consider multi-source multi-destination multi-layer wireless networks and seek sum-rate optimal transmission strategies when sources have only limited local view of the network. To model local views at the nodes, we use a generalization of the hop-count based model that we introduced in ~\cite{VaneetIT} for single-layer networks. In the hop-count based model each source knows the channel gains of those links that are up to certain number of hops away and beyond that it only knows whether a link exists or not. The hop-count based model was appropriate for single-layer networks where all destinations are within one-hop from their respective sources. For multi-layer networks, a more scalable approach is to model local views based on the knowledge about source-destination (S-D) routes in the network (instead of source-destination links). The motivation for the route-based model stems from coordination protocols like routing which are often employed in multi-hop networks to discover S-D routes in the network. Hence, a reasonable quanta for network state information is the number of such end-to-end routes that are known at the source nodes. In this paper, we consider the case where each S-D pair has enough information to perform optimally when other pairs do not interfere. Beyond that, the only other information available at each node is the global network connectivity. We refer to this model of local network knowledge as $1$-local view.

Since each channel gain can range from zero to a maximum value, our formulation is similar to compound channels \cite{blackwell,viswanath} with one major difference. In the multi-terminal compound network formulations, {\it all} nodes are missing identical information about the channels in the network, whereas, in our formulation, the $1$-local view results in {\it asymmetric} information about channels at different nodes.

In this paper, our metric to measure the performance of transmission strategies is {\it normalized sum-capacity} as defined in \cite{VaneetIT}, which represents the maximum fraction of the sum-capacity with full knowledge that can be always achieved when nodes only have partial knowledge about the network.

\subsection{Contributions}

Our main contribution is a new transmission scheme, named Coded Layer (CL) Scheduling, which only requires  $1$-local view at the nodes and combines coding with interference avoidance scheduling. Developed as a graph coloring algorithm on a route-extended graph, coded layer scheduling is a combination of three main techniques: (1) per layer interference avoidance, (2) repetition coding to allow overhearing of the interference, and (3) network coding to allow interference neutralization. 

We characterize the achievable normalized sum-rate of the CL scheduling as the solution to a new graph coloring problem and analyze its optimality for some classes of networks. In particular, we show that coded layer scheduling achieves the normalized sum-capacity in single-layer and two-layer $(K,m)$-folded chain networks (defined in Section~\ref{performance}). Furthermore, by considering $L$-nested folded-chain networks (defined in Section~\ref{performance}), we show that the gain from CL scheduling over  interference avoidance scheduling can be unbounded. 

We also investigate network topologies in which with $1$-local view at the nodes, interference avoidance scheduling is information-theoretically optimal. More specifically, we consider another class of networks, \emph{i.e.} $K \times \underbrace{2 \times \ldots \times 2}_M \times K$ networks, which is a $K$-flow network where all intermediate layers have only $2$ relays. We show that for this class, a simpler scheme based on only interference avoidance techniques, named Independent Layer (IL) scheduling, is optimal and coding is not required to achieve normalized sum-capacity with $1$-local view. In our limited experience, coding across time-slots can provide gains when there is some regular topological structure in the network and/or there is dense connecvitity.  However, the general connection between network topology, partial information and optimal schemes remains a largely open problem.

\subsection{Related Work}

In any network state learning algorithm, network state information is obtained via a form of message passing between the nodes. Since the channels through which the communication takes place are noisy and have delay, imprecise network information at the nodes becomes an important problem. Many models for imprecise network information have been considered for interference networks. These models range from having no channel state information at the sources \cite{V2009,Guo2010,jaf-blind,jafar-nocsit,vish-nocsit}, delayed channel state information \cite{retro,khand-delay,varanasi-delay,tse-delay} or analog feedback of channel state for fully-connected interference channels \cite{heath}. Most of these works assume fully connected network or a small number of users. A study to understand the role of limited network knowledge, was first initiated in \cite{messageisit,messageit} for general single-layer networks with {\it arbitrary} connectivity, where the authors used a message-passing abstraction of network protocols to formalize the notion of local view of the network at each node, such that the view at different nodes are mismatched from each others'. The key result was that local-view-based (decentralized) decisions can be either sum-rate optimal or can be arbitrarily worse than the global-view (centralized) sum-capacity.

The initial work in~\cite{messageisit,messageit} was strengthened for arbitrary $K$-user single-layer interference network in \cite{VaneetIT,aas09allerton,VaneetISIT}, where the authors proposed a new metric, normalized sum-capacity, to measure the performance of distributed decisions. Further,  the authors computed the normalized sum-capacity of distributed decisions for several network topologies with one-hop, two-hop and three-hop local-view information at each source. In this paper, we investigate the performance of decentralized decisions for multi-layer wireless networks.

The rest of the paper is organized as follows. In section \ref{problem}, we will introduce our network model and the new model to capture partial network knowledge and we define the notion of normalized sum-capacity. In Section~\ref{section:examples}, via a number of examples, we motivate our transmission strategies. In Section~\ref{section:CL}, we present our main result, \emph{i.e.} coded layer scheduling, and we charaterize its performance for multi-layer networks. In Section~\ref{performance}, we prove the optimality of our strategies (in terms of achieving normalized sum-capacity) for some networks. Finally, Section~\ref{conclusion} concludes the paper and presents some future directions. 

\section{Problem Formulation}
\label{problem}

In this section, we introduce our models for channel, network, and network knowledge at the nodes. We further define the notions of normalized sum-capacity introduced in \cite{VaneetIT}, which will be used to measure the performance of the strategies with partial network knowledge. %Finally, we define the notion of \emph{strategy set} that will be used to describe our transmission strategies.

%%%%%%%%%%%%%%%%%%%%%%%%%%%%%%%%%%%%%%%%%%%%%%%%%%%%%%%%%%%%%%%%%%%%%%%%%%%%%%
%%%%%%%%%%%%%%%%%%%%%%%%%%%%%%%%%%%%%%%%%%%%%%%%%%%%%%%%%%%%%%%%%%%%%%%%%%%%%%
\subsection{Network Model and Notations}
\label{subsec:netModel}

In this subsection, we will describe two channel models that will be studied in the paper, namely the linear deterministic model~\cite{ADTJ10}, and the Gaussian model. In both models, a network is represented by a directed graph $\mathcal{G} = (\mathcal{V}, \mathcal{E},\{ w_{ij} \}_{(i,j) \in \mathcal{E}})$, where $\mathcal{V}$ is the set of vertices representing nodes in the network, $\mathcal{E}$ is the set of directed edges representing links among the nodes, and  $\{ w_{ij} \}_{(i,j) \in \mathcal{E}}$ represents the channel gains associated with the edges.

We consider a layered network in this paper, \emph{i.e.} the nodes in this network can be partitioned into $L$ subsets $\mathcal{V}_1, \mathcal{V}_2, \ldots, \mathcal{V}_L$. Out of $|\mathcal{V}|$ nodes in the network, $K$ are denoted as sources and $K$ are destinations. We label these source and destination nodes by ${\sf S}_i$ and ${\sf D}_i$ respectively,  $i = 1,2,\ldots,K$. We set $\mathcal{V}_1 = \{ {\sf S}_1, {\sf S}_2, \ldots, {\sf S}_K \}$ and $\mathcal{V}_L = \{ {\sf D}_1, {\sf D}_2, \ldots, {\sf D}_K \}$. The remaining $|\mathcal{V}|-2K$ nodes are relay nodes which facilitate the communication between sources and destinations. We denote a specific relay in $\mathcal{V}_l$ by ${\sf V}^l_i$, $i = 1,2, \ldots, |\mathcal{V}_l|$ and $l = 2,3,\ldots, L-1$. Without loss of generality, we can also refer to a node in $\mathcal{V}$ simply as ${\sf V}_i$, $i = 1,2,\ldots,|\mathcal{V}|$.

The layered structure of the network imposes the following constraint on the edges in the network, 
\begin{align}
(i,j) \in \mathcal{E} &\Rightarrow \exists l \in \{1,2,\ldots,L-1\} \text{ such that: } \nonumber \\
\; &\left( {\sf V}_i \in \mathcal{V}_l \text{ and } {\sf V}_j \in \mathcal{V}_{l+1} \right).
\end{align}

The two channel models used in this paper are as follows.
\begin{enumerate}

\item
\emph{The Linear Deterministic Model}~\cite{ADTJ10}: In this model, there is a non-negative integer, $w_{ij} = n_{ij}$, associated with each link $(i,j) \in \mathcal{E}$, which represents its gain. Let $q$ be the maximum of all the channel gains in this network. In the linear deterministic model, the channel input at node ${\sf V}_i$ at time $t$ is denoted by $X_{{\sf V}_i}[t] = [ X_{{\sf V}_{i_1}}[t], X_{{\sf V}_{i_2}}[t], \ldots, X_{{\sf V}_{i_q}}[t] ]^T \in \mathbb{F}_2^q$. The received signal at node ${\sf V}_j$ at time $t$ is denoted by $Y_{{\sf V}_{j}}[t] = [ Y_{{\sf V}_{j_1}}[t], Y_{{\sf V}_{j_2}}[t], \ldots, Y_{{\sf V}_{j_q}}[t] ]^T \in \mathbb{F}_2^q$, and is given by
\begin{equation}
Y_{{\sf V}_{j}}[t] = \sum_{i: (i,j) \in \mathcal{E}}{{\bf S}^{q-n_{ij}} X_{{\sf V}_{i}}[t]},
\end{equation}
where ${\bf S}$ is the $q \times q$ shift matrix and the operations are in $\mathbb{F}_2^q$. If a link between ${\sf V}_i$ and ${\sf V}_j$ does not exist, we set $n_{ij}$ to be zero.

\item
\emph{The Gaussian Model:} In this model, the channel gain $w_{ij}$ is denoted by $h_{ij} \in \mathbb{C}$. The channel input at node ${\sf V}_i$ at time $t$ is denoted by $X_{{\sf V}_{i}}[t] \in \mathbb{C}$, and the received signal at node ${\sf V}_j$ at time $t$ is denoted by $Y_{{\sf V}_{j}}[t] \in \mathbb{C}$ given by
\begin{equation}
Y_{{\sf V}_{j}}[t] = \sum_i{h_{ij} X_{{\sf V}_{i}}[t] + Z_j[t]},
\end{equation}
where $Z_j[t]$ is the additive white complex Gaussian noise with unit variance. We also assume a power constraint of $1$ at all nodes, \emph{i.e.} $\lim_{n\to\infty}\frac{1}{n} \mathbb{E}(\sum_{t=1}^n{|X_{{\sf V}_{i}}[t]|^2}) \le 1$.

\end{enumerate}

A \emph{route} from a source ${\sf S}_i$ to a destination ${\sf D}_j$ is a set of nodes such that there exists an ordering of these nodes where the first one is ${\sf S}_i$, last one is ${\sf D}_j$, and any two consecutive nodes in this ordering are connected by an edge in the graph. 

\begin{definition}
An {\it induced subgraph $\mathcal{G}_{ij}$} is a subgraph of $\mathcal{G}$ with its vertex set being the union of all routes from source ${\sf S}_i$ to a destination ${\sf D}_j$, and its edge set being the subset of all edges in $\mathcal{G}$ between the vertices of $\mathcal{G}_{ij}$.
\end{definition}

We say that S-D pair $i$ and S-D $j$ are \emph{non-interfering} if $\mathcal{G}_{ii}$ and $\mathcal{G}_{jj}$ are two disjoint induced subgraphs of $\mathcal{G}$.

The in-degree function $d_{\mathrm{in}}({\sf V}_i)$, is the number of in-coming edges connected to node ${\sf V}_i$. Similarly, the out-degree function $d_{\mathrm{out}}({\sf V}_i)$, is the number of out-going edges connected to node ${\sf V}_i$. Note that the in-degree of a source and the out-degree of a destination are both equal to $0$. The maximum degree of the nodes in $\mathcal{G}$ is defined as
\begin{equation}
d_{\max} = \max_{i \in \{ 1,\ldots,|\mathcal{V}| \}} \left( d_{\mathrm{in}}({\sf V}_i), d_{\mathrm{out}}({\sf V}_i) \right).
\end{equation}

We also need the following definitions that will be used later in this paper.
\begin{definition}
At any node ${\sf V}_i \in \mathcal{G}$, we define the index set $\mathcal{J}_{{\sf V}_i}$ as follows
\begin{equation}
\mathcal{J}_{{\sf V}_i} := \{ j | {\sf V}_i \in \mathcal{G}_{jj}, j = 1,2,\ldots,K \}.
\end{equation}
In other words, $\mathcal{J}_{{\sf V}_i}$ is the set of indices of those S-D pairs that have ${\sf V}_i$ on a route between them.
\end{definition}

\begin{definition}
The \emph{route-expanded graph} $\mathcal{G}_{\sf exp} = \left( \mathcal{V}_{\sf exp}, \mathcal{E}_{\sf exp} \right)$ associated with a layered network $\mathcal{G} = (\mathcal{V}, \mathcal{E},\{ w_{ij} \}_{(i,j) \in \mathcal{E}})$ with sources in $\mathcal{V}_1 = \{ {\sf S}_1, {\sf S}_2, \ldots, {\sf S}_K \}$ and destinations in $\mathcal{V}_L = \{ {\sf D}_1, {\sf D}_2, \ldots, {\sf D}_K \}$ is constructed by replacing each node ${\sf V}_i \in \mathcal{V}$ with $|\mathcal{J}_{{\sf V}_i}|$ nodes represented by ${\sf V}_{i,j}$ where $j \in \mathcal{J}_{{\sf V}_i}$, and connect them according to
\begin{align*}
% & \bullet \mathcal{V}_{\sf exp} = \{ {\sf V}_{i,j} | {\sf V}_i \in \mathcal{V}, j \in \mathcal{J}_{{\sf V}_i} \}, \\
 ({\sf V}_{i,j},{\sf V}_{i^\prime,j^\prime}) \in \mathcal{E}_{\sf exp} \text{ iff } (i,i^\prime) \in \mathcal{E}.
\end{align*}
\end{definition}

We define $\bar{\mathcal{V}}_i = \{ {\sf V}_{i,j} | j \in \mathcal{J}_{{\sf V}_i} \}$, $i = 1,2,\ldots,|\mathcal{V}|$ (\emph{i.e.} all the duplicates of node ${\sf V}_i$) and we refer to it as a super-node (or equivalently a super-relay if ${\sf V}_i$ is a relay). 

For an illustration of the route-expanded graph see Figure~\ref{fig:route-expanded}. For simplicity, we have represented each pair with a shape, \emph{i.e.}  \raisebox{3pt}{\circle{6}}, $\triangle$, and $\Box$ for S-D pairs $1$, $2$, and $3$ respectively. The route-expanded graph of the network in Figure~\ref{fig:route-expanded}(a) is illustrated in Figure~\ref{fig:route-expanded}(b). Each relay is on a route for two S-D pairs, hence each super-relay contains two nodes.

\begin{figure}[ht]
\centering
\subfigure[]{\includegraphics[height = 3cm]{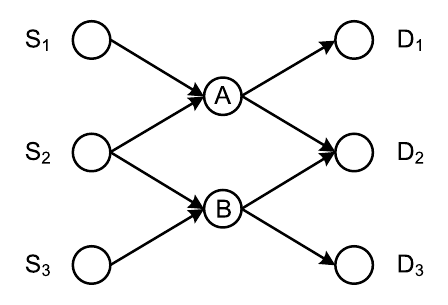}}
\hspace{0.3in}
\subfigure[]{\includegraphics[height = 3cm]{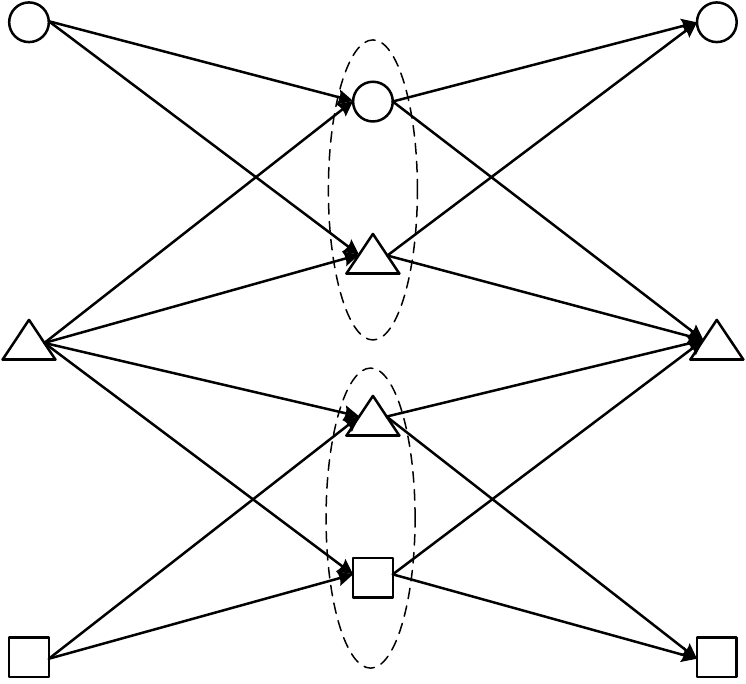}}
\caption{\it (a) A $2$-layer network, and (b) its route-expanded graph.\label{fig:route-expanded}}
\end{figure}
%\vspace{-7mm}

%%%%%%%%%%%%%%%%%%%%%%%%%%%%%%%%%%%%%%%%%%%%%%%%%%%%%%%%%%%%%%%%%%%%%%%%%%%%%%
%%%%%%%%%%%%%%%%%%%%%%%%%%%%%%%%%%%%%%%%%%%%%%%%%%%%%%%%%%%%%%%%%%%%%%%%%%%%%%
\subsection{Model of Partial Network Knowledge}
\label{localview}

In this subsection, we describe the model of \cite{AlirezaAllerton} for partial network information that will be used in this paper. We first define the \emph{route-adjacency graph} $\mathcal{G}^\prime$ of $\mathcal{G}$, which is an undirected bipartite graph consisting of all sources on one side and all destinations on the other side; see Figure~\ref{adjacency} for an example. A source ${\sf S}_i$ and a destination ${\sf D}_j$ are connected in $\mathcal{G}^\prime$, if there exists a route between them in $\mathcal{G}$. More formally, $\mathcal{G}^\prime = (\mathcal{V}^\prime,\mathcal{E}^\prime)$ where $\mathcal{V}^\prime = \mathcal{V}_1 \cup \mathcal{V}_L$ and $\mathcal{E}^\prime = \{ (i,j) | \exists \text{ a route from } {\sf S}_i \text{ to } {\sf D}_j \}$.

We now define the model for partial network knowledge that will be used in the paper,  namely \emph{$h$-local view}, as the following:

\begin{itemize}
\item All nodes have full knowledge of the network topology, $(\mathcal{V},\mathcal{E})$, i.e., which links are in $\mathcal{G}$, but not their channel gains. The network topology knowledge is denoted by side information $\mathsf{SI}$.
\item Each source, ${\sf S}_i$, knows the gains of all those channels that are in a route from source ${\sf S}_j$ to destination ${\sf D}_k$, such that  ${\sf S}_j$ and  ${\sf D}_k$  are at most $h$ hops away from ${\sf S}_i$ in $\mathcal{G}^\prime$. The $h$-hop channel knowledge at a source is denoted by $L_{{\sf S}_i}$.
\item  Each node ${\sf V}_i$ (which is not a source) has the union of the information of all those sources that have a route to it, and this knowledge at node is denoted by $L_{{\sf V}_i}$. %Thus, the local view available at any node ${\sf V}_k$ (source, relay or destination) in the network is denoted by $L_{{\sf V}_k}$.
\end{itemize}

Note that this model is a generalization of the hop-based model for partial network knowledge in single layer networks~\cite{VaneetIT}. While the partial information model is general, we will focus on the case where $h=1$. In other words, each S-D pair has enough information to perform optimally when other pairs do not interfere (i.e., it knows the channel gains of all links that are in a route to its own destination). However beyond that, each pair only knows the connectivity in the network (structure of interference). We are interested to find if one can outperform interference avoidance techniques with such limited knowledge. In the following subsection, we define the metrics we use to measure the performance of transmission strategies with $h$-local view.

\begin{figure}[hb]
\centering
\subfigure[]{\includegraphics[height = 3cm]{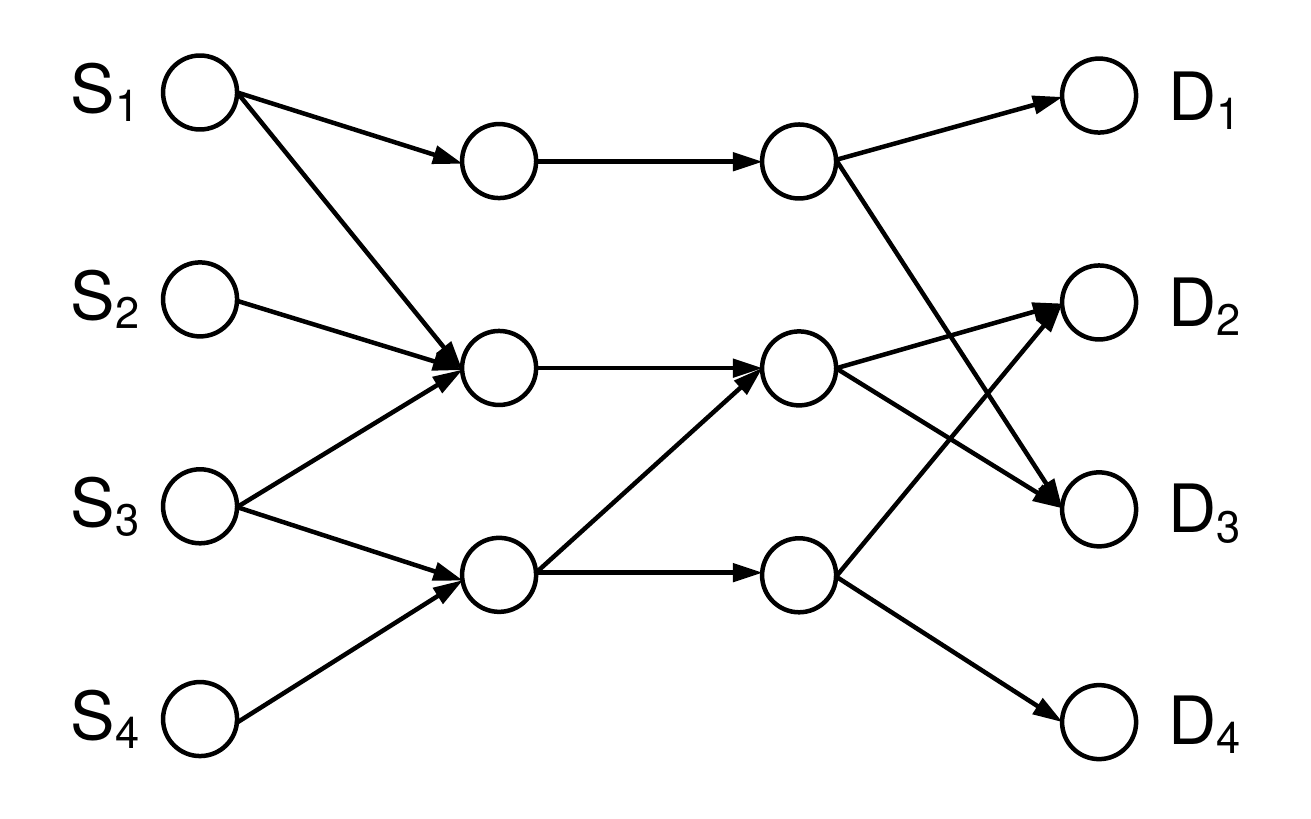}}
\hspace{0.3in}
\subfigure[]{\includegraphics[height = 3cm]{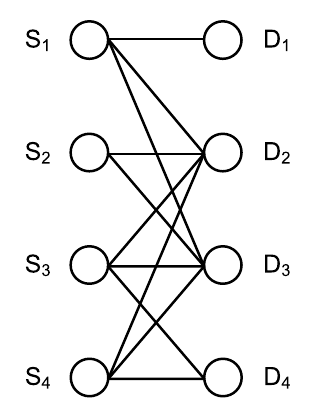}}
\caption{\it (a) A multi-layer network, and (b) its route-adjacency graph. \label{adjacency}}
\end{figure}
%\vspace{-5mm}

%%%%%%%%%%%%%%%%%%%%%%%%%%%%%%%%%%%%%%%%%%%%%%%%%%%%%%%%%%%%%%%%%%%%%%%%%%%%%%
%%%%%%%%%%%%%%%%%%%%%%%%%%%%%%%%%%%%%%%%%%%%%%%%%%%%%%%%%%%%%%%%%%%%%%%%%%%%%%
\subsection{Normalized Sum-Capacity}

We now define the notion of normalized sum-capacity, which is our metric for evaluating network capacity with partial network knowledge \cite{VaneetIT,VaneetISIT}. Normalized sum-capacity represents the maximum fraction of the sum-capacity with full knowledge that can be always achieved when nodes only have partial knowledge about the network, and is defined as follows.

%We next define this notion more specifically.

Consider the scenario in which source ${\sf S}_i$ wishes to reliably communicate message $\hbox{W}_i \in \{ 1,2,\ldots,2^{N R_i}\}$ to destination ${\sf D}_i$ during $N$ uses of the channel, $i=1,2,\ldots,K$. We assume that the messages are independent and chosen uniformly. For each source ${\sf S}_i$, let message $\hbox{W}_i$ be encoded as $X_{{\sf S}_i}^N$ using the encoding function $e_{i}(\hbox{W}_i|L_{{\sf S}_i},\mathsf{SI})$, which depends on the available local network knowledge, $L_{{\sf S}_i}$, and the global side information, $\mathsf{SI}$.

Each relay in the network creates its input to the channel $X_{{\sf V}_i}$, using the encoding function $f_{{\sf V}_i}[t](Y_{{\sf V}_i}^{(t-1)}|L_{{\sf V}_i},\mathsf{SI})$, which depends on the available network knowledge, $L_{{\sf V}_i}$, and the side information, $\mathsf{SI}$, and all the previous received signals at the relay $Y_{{\sf V}_i}^{(t-1)} = \left[ Y_{{\sf V}_i}[1], Y_{{\sf V}_i}[2], \ldots, Y_{{\sf V}_i}[t-1] \right]$. A relay strategy is defined as the union of of all encoding functions used by the relays, $\{ f_{{\sf V}_i}[t](Y_{{\sf V}_i}^{(t-1)}|L_{{\sf V}_i},\mathsf{SI}) \}$, $t=1,2,\ldots,N$ and ${\sf V}_i \in \bigcup_{j=1}^{L-1}{\mathcal{V}_j}$.

Destination ${\sf D}_i$ is only interested in decoding $\hbox{W}_i$ and it will decode the message using the decoding function $\widehat{\hbox{W}}_i = d_{i}(Y_{{\sf D}_i}^N |L_{{\sf D}_i},\mathsf{SI})$, where $L_{{\sf D}_i}$ is the destination ${\sf D}_i$'s network knowledge. Note that the local view can be different from node to node.

\begin{definition}
A \emph{Strategy} $\mathcal{S}_N$ is defined as the set of: (1) all encoding functions at the sources; (2) all decoding functions at the destinations; and (3) the relay strategy for $t=1,2,\ldots,N$, \emph{i.e.}
\begin{equation}
\label{strategydef}
\mathcal{S}_N =
\left\{ \begin{array}{ll}
e_{i}(\hbox{W}_i|L_{{\sf S}_i},\mathsf{SI}) & i = 1,2,\ldots,K \\
f_{{\sf V}_i}[t](Y_{{\sf V}_i}^{(t-1)}|L_{{\sf V}_i},\mathsf{SI}) & t=1,2,\ldots,N \\
& \text{and } {\sf V}_i \in \bigcup_{j=1}^{L-1}{\mathcal{V}_j} \\
d_{i}(Y_{{\sf D}_i}^N |L_{{\sf D}_i},\mathsf{SI}) & i = 1,2,\ldots,K
\end{array} \right\}.
\end{equation}
\end{definition}

An error occurs when $\widehat{\hbox{W}}_i \neq \hbox{W}_i$ and we define the decoding error probability, $\lambda_i$, to be equal to $P(\widehat{\hbox{W}}_i \neq \hbox{W}_i)$. A rate tuple $(R_1,R_2, \ldots,R_K)$ is said to be achievable, if there exists a set of strategies $\{ \mathcal{S}_j \}_{j=1}^N$ such that the decoding error probabilities $\lambda_1,\lambda_2,\ldots,\lambda_K$ go to zero as $N \rightarrow \infty$ for all network states consistent with the side information. Moreover, for any S-D pair $i$, denote the maximum achievable rate $R_i$ with full network knowledge by $C_i$. The sum-capacity $C_{\mathrm{sum}}$, is the supremum of $\sum_{i=1}^K{R_i}$ over all possible encoding and decoding functions with full network knowledge.

%\begin{definition}
%Consider source-destination pair $i$, \emph{Strategy} $\mathcal{S}_{i,N}$ is defined as a set of strategies $\{ \mathcal{S}_j \}_{j=1}^N$ that results in an achievable rate $R_i$ arbitrary close to $C_i$ as $N$ goes to infinity.
%\end{definition}

We will now define the normalized sum-rate and the normalized sum-capacity.

\begin{definition}[\cite{VaneetIT}]
\emph{Normalized sum-rate} of $\alpha$ is said to be achievable, if there exists a set of strategies $\{ \mathcal{S}_j \}_{j=1}^N$ such that following holds. As $N$ goes to infinity, strategy $\mathcal{S}_N$ yields a sequence of codes having rates $R_i$ at the source ${\sf S}_i$, $i=1,\ldots,K$, such that the error probabilities at the destinations, $\lambda_1, \cdots \lambda_K$, go to zero, satisfying
\[
\sum_{i=1}^K{R_i} \ge \alpha C_{\mathrm{sum}} -\tau
\]
for all the network states consistent with the side information, and for a constant $\tau$ that is independent of the channel gains.
\end{definition}

\begin{definition}[\cite{VaneetIT}]
\emph{Normalized sum-capacity} $\alpha^*$, is defined as the supremum of all achievable normalized sum-rates $\alpha$. Note that $\alpha^* \in [0,1]$.
\end{definition} 

\section{Motivating Examples}
\label{section:examples}

Before diving into the main results in Section~\ref{section:CL}, we will use a sequence of examples to arrive at the main ingredients of the proposed coded layer scheduling. The key point of the discussion is to understand the mechanisms that allow outperforming interference avoidance with only $1$-local view.

%In this section, we explore a number of examples to illustrate a set of successively more efficient strategies in the case of $1$-local view at the nodes. {\bf How about something like - In this section, we will motivate Coded Scheduling going over a set of strategies and understanding deficiencies to make them increasingly refined??}

As defined earlier, $1$-local view means that each S-D pair has enough information to perform optimally when other pairs do not interfere. However, beyond 1-local view, each node only knows the connectivity in the network (structure of interference). So, at first glance it seems that the optimal strategy is to avoid interference between the S-D pairs and at each time, schedule as many non-interfering pairs as possible. Through an example, we investigate the performance of the above strategy which maximizes spatial reuse while avoiding interference at each node.

\begin{figure}[ht]
\centering
\subfigure[]{\includegraphics[height = 2.5cm]{MIP.pdf}}
\hspace{0.4in}
\subfigure[]{\includegraphics[height = 2.5cm]{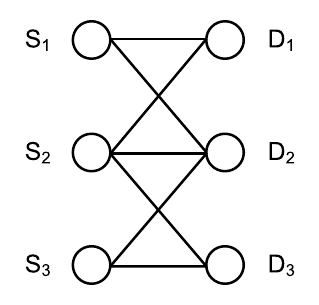}}
\caption{\it (a) A network where interference avoidance between S-D pairs is optimal, and (b) its route-adjacency graph.\label{MIP}}
\end{figure}

Consider the network depicted in Figure \ref{MIP}(a) with $1$-local view. From the route-adjacency graph of this network depicted in Figure~\ref{MIP}(b), we can see that S-D pairs $1$ and $3$ are non-interfering. We implement an achievability strategy described as follows. We split the communication block into two time-slots of equal length and represent each time-slot with a color, namely black and white. S-D pairs $1$ and $3$ communicate over time-slot black, whereas, S-D pair $2$ communicate over time-slot white. With this coloring, we have effectively seperated induced subgraphs of interfering pairs, see Figure~\ref{fig:MIPcoloring}. Now, since each pair can communicate interference-free over half of the communication block length, it can achieve half of its capacity with full network knowledge. Hence, we achieve a normalized sum-rate of $\alpha = 1/2$.

\begin{figure}[ht]
\centering
\subfigure[]{\includegraphics[height = 2.5cm]{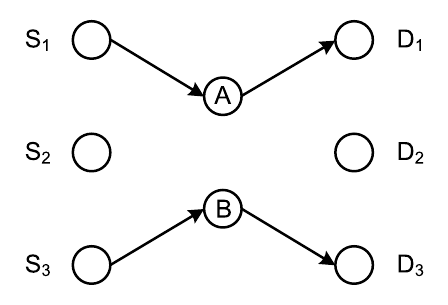}}
\hspace{0.4in}
\subfigure[]{\includegraphics[height = 2.5cm]{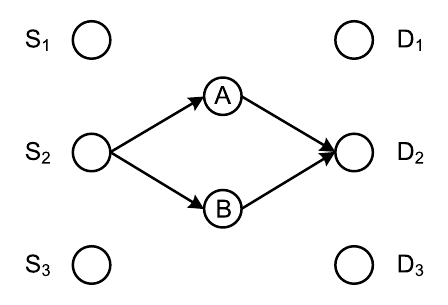}}
\caption{\it (a) S-D pairs $1$ and $3$ can simultaneously communicate interference-free over their induced subgraphs in the first time-slot (black time-slot), and (b) S-D pair $2$ can communicate interference-free over its induced subgraph in the second time-slot (white time-slot).\label{fig:MIPcoloring}}
\end{figure}

This scheduling strategy can be viewed as a specific coloring of nodes in the route-expanded graph (defined in Section~\ref{problem}). Consider the route-expanded graph of this example, as shown in Figure~\ref{fig:MIPexpanded}. The aforementioned scheduling strategy can be viewed as a coloring of nodes in the route-extended graph, such that (1) all nodes of the same shape, \emph{i.e.} pair ID, receive the same color, and (2) any two nodes with different shapes, \emph{i.e.} different pair IDs, that are connected to each other should have different colors. Note that since the nodes inside the same super-node are connected to the same nodes in $\mathcal{V}_{\sf exp}$ and they have different shapes, they will be assigned different colors. Figure~\ref{fig:MIPexpanded} illustrates such coloring of nodes in this example by using only two colors, B and W. In other words, we have assigned different colors to the induced subgraphs of interfering S-D pairs. Therefore, each S-D pair gets a chance to communicate over its induced subgraph interference-free during the time-slot associated with its color.

\begin{figure}[ht]
\centering
\includegraphics[height = 4cm]{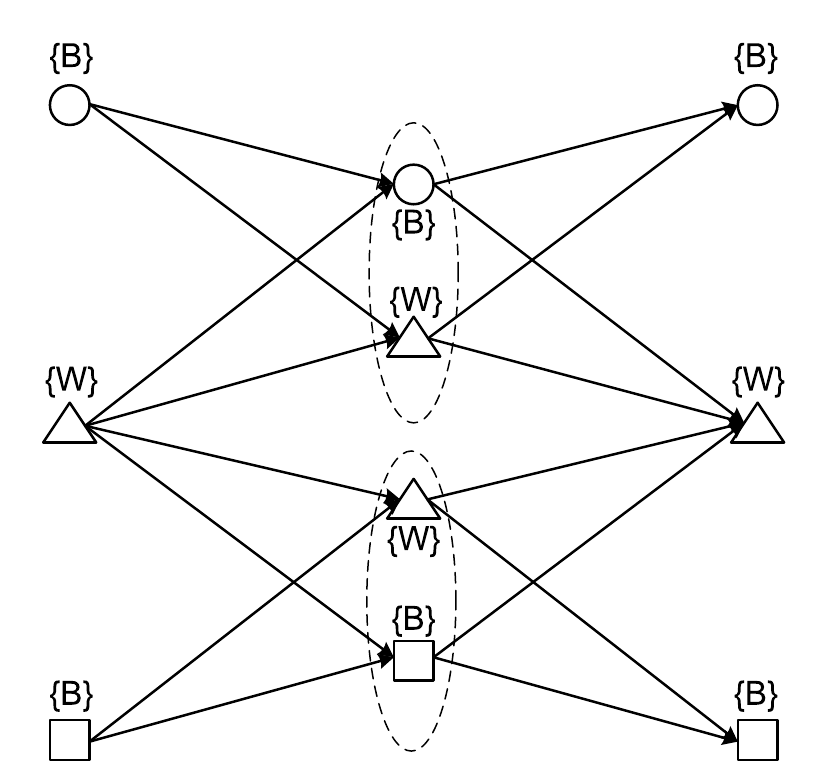}
%\hspace{0.4in}
%\subfigure[]{\includegraphics[height = 3.5cm]{}}
\caption{\it Route-expanded graph, $\mathcal{G}_{\sf exp}$, of the example in Figure~\ref{MIP}(a).
%and illustration of Maximal Independent Pair (MIP) scheduling as a coloring of nodes in this network.
\label{fig:MIPexpanded}}
\end{figure}

As we will see in the next lemma, which is proved in Appendix~\ref{Appendix:generalhalf}, $\alpha = 1/2$ is also an upper bound on the normalized sum-capacity of this network. Hence, scheduling non-interfering pairs performs optimally in this example. More generally, the following upper bound on $\alpha$ exists for a general class of multi-layer networks.

\begin{lemma}
\label{lemma:generalhalf}
In a $K$-user multi-layer network (linear deterministic or Gaussian) with $1$-local view, if there exists a path from ${\sf S}_i$ to ${\sf D}_j$, for some $i \neq j$, then the normalized sum-capacity is upper-bounded by $\alpha = 1/2$.
\end{lemma}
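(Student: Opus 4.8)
The plan is to prove the bound by exhibiting a single network state (in fact a one‑parameter family of states), all consistent with the side information $\mathsf{SI}$, in which the full‑knowledge sum‑capacity is roughly twice the best sum‑rate achievable with $1$‑local view. Since the normalized‑sum‑rate requirement $\sum_{k} R_k \ge \alpha\, C_{\mathrm{sum}} - \tau$ must hold \emph{in every} such state, producing one state $\mathcal{N}^\ast$ where any local‑view scheme attains at most $\tfrac12 C_{\mathrm{sum}}(\mathcal{N}^\ast) + O(1)$ immediately forces $\alpha \le \tfrac12(1+o(1))$, hence $\alpha^\ast\le 1/2$.

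First I would construct $\mathcal{N}^\ast$ around the given interfering path. Fix the gains along the route ${\sf S}_i \to {\sf D}_j$ and along ${\sf S}_i\to{\sf D}_i$ so that ${\sf S}_i$'s signal arrives at ${\sf D}_j$ at least as strongly (in the deterministic model, on at least as many levels) as it arrives at ${\sf D}_i$, and so that the min‑cut into each of ${\sf D}_i$ and ${\sf D}_j$ equals a common value $C$; the gains of all pairs other than $i,j$ are held at fixed constants, contributing a constant $D=\sum_{k\ne i,j} C_k$ to every sum‑capacity. The crucial feature of the layered construction is that \emph{with full network knowledge} the relays shared by the two flows can neutralize ${\sf S}_i$'s contribution before it reaches ${\sf D}_j$, so that ${\sf D}_i$ and ${\sf D}_j$ each receive a clean flow of rate $C$ and $C_{\mathrm{sum}}(\mathcal{N}^\ast)\ge 2C+D-O(1)$.

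The core step is to show that with $1$‑local view one instead has $R_i+R_j \le C + O(1)$ in $\mathcal{N}^\ast$. The mechanism is that neither ${\sf S}_j$ nor the relays feeding ${\sf D}_j$ ever learn the gain $b$ of the path ${\sf S}_i\to{\sf D}_j$: that route has endpoints ${\sf S}_i$ and ${\sf D}_j$ with ${\sf S}_i$ at distance $\ge 2$ from ${\sf S}_j$ in $\mathcal{G}'$, so $L_{{\sf S}_j}$ (and the knowledge propagated to the relays on pair $j$'s routes) excludes $b$. Considering a family of states differing only in $b$, the encoders of ${\sf S}_j$ and of those relays emit identical signals, so the network cannot tailor its transmissions to neutralize ${\sf S}_i$'s interference at ${\sf D}_j$. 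I would then close the bound with a genie‑aided Fano argument at ${\sf D}_j$: since ${\sf S}_i$ reaches ${\sf D}_j$ at least as well as it reaches the (correctly decoding) ${\sf D}_i$, a genie lets ${\sf D}_j$ also recover $W_i$, after which
\[
N(R_i+R_j)\le I\!\left(W_i,W_j; Y^{N}_{{\sf D}_j}\right)+N\epsilon_N \le N\cdot(\text{min-cut into } {\sf D}_j) = NC .
\]
With $\sum_{k\ne i,j}R_k \le D$ this gives $\sum_k R_k(\mathcal{N}^\ast)\le C+D+O(1)$, so $\alpha \le (C+D+\tau)/(2C+D)$; letting the pair‑$i,j$ gains scale while $D$ stays fixed drives the right side to $1/2$.

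The hardest part is this middle step: rigorously ruling out interference neutralization under $1$‑local view while preserving it under full knowledge, so that the genie bound $R_i+R_j\le C$ binds local‑view schemes but \emph{not} $C_{\mathrm{sum}}(\mathcal{N}^\ast)$ itself. This needs a careful indistinguishable‑states argument over the \emph{relay} strategies (not only the sources), establishing that their outputs cannot depend on $b$, together with a genie weak enough that the same bound does not also collapse the full‑knowledge capacity. Two secondary obstacles are: (i) in the Gaussian model every cut‑set and decode‑both inequality holds only up to a constant gap, which must be shown independent of the channel gains so it is absorbed into $\tau$; and (ii) verifying that the constructed $\mathcal{N}^\ast$ genuinely admits full‑knowledge neutralization to $2C$ for an arbitrary path ${\sf S}_i\to{\sf D}_j$ — this is exactly where the multi‑layer structure is essential, and where the same construction would (correctly) fail to yield a factor‑two gap in a single‑hop network.
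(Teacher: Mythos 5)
There is a genuine gap, and it is exactly the step you flag as ``the hardest part.'' Your whole architecture is a single-state comparison: you need a state $\mathcal{N}^\ast$ in which the \emph{full-knowledge} sum-capacity is $\approx 2C+D$ (claim (a), via relay neutralization of ${\sf S}_i$'s signal at ${\sf D}_j$) while every local-view scheme is held to $\approx C+D$ (claim (b)). Claim (a) is not available under the lemma's hypotheses. The lemma assumes only that \emph{some} path from ${\sf S}_i$ to ${\sf D}_j$ exists in an arbitrary layered network, and it is invoked in the paper for single-layer networks (e.g., the modified $(K,2)$ folded-chain network in Section V), where there are no relays at all and neutralization is impossible: there your construction is a Z-channel with equal gains, whose full-knowledge sum-capacity equals the very cut value $C$ produced by your genie argument, so your ratio is $1$, not $1/2$. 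Even in multi-layer networks the hypothesis gives no multiple-path structure to neutralize over; if the cross path reaches ${\sf D}_j$ through the same relay that serves ${\sf D}_i$, the broadcast constraint again forces $C_{\mathrm{sum}}(\mathcal{N}^\ast)\approx C+D$ under full knowledge. Worse, no single-state comparison can work in principle: in the equal-gain Z state there \emph{exists} a $1$-local-view scheme whose behavior on that state coincides with the full-knowledge optimum (hard-code it as a function of the local-view values); such schemes are excluded only by their failure in \emph{other} states, which your argument never invokes against them.

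The missing idea is the paper's cross-state ``hedging'' bound, which is where $1$-local view actually does its work. Since ${\sf S}_i$ cannot distinguish the true state from one in which all other pairs have capacity $0$, any strategy achieving normalized sum-rate $\alpha$ must have \emph{each source individually} transmit at rate at least $\alpha n-\tau$ (resp. $\alpha\log(1+|h|^2)-\tau$), in every state consistent with its view. The paper then fixes one state --- gain $n$ on one path ${\sf S}_i\to{\sf D}_i$, one path ${\sf S}_j\to{\sf D}_j$, and the cross path ${\sf S}_i\to{\sf D}_j$, and gain $0$ elsewhere --- and shows a single node ${\sf V}^\ast\in\mathcal{G}_{ij}\cap\mathcal{G}_{jj}$ can decode both $\hbox{W}_i$ and $\hbox{W}_j$ (decode $\hbox{W}_j$, strip it, and what remains matches the signal at the peer node on pair $i$'s route). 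The MAC bound at ${\sf V}^\ast$ then gives
\begin{equation*}
2(\alpha n-\tau)\le n \;\Rightarrow\; \alpha \le \tfrac{1}{2}+\tfrac{\tau}{n}\longrightarrow \tfrac{1}{2},
\end{equation*}
with the analogous Gaussian computation absorbing constants. Note that this argument never evaluates, and never needs, the full-knowledge $C_{\mathrm{sum}}$ of the constructed state, so the neutralization-versus-no-neutralization dichotomy you try to engineer simply never arises; your decode-both genie step survives, but the factor of two must come from the per-source rate lower bounds, which appear nowhere in your write-up. (A secondary fragility: your claim that the relays feeding ${\sf D}_j$ never learn the cross gain $b$ conflicts with the knowledge model, under which every non-source node inherits the union of the views of all sources with a route to it.)
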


While the aforementioned interference avoidance strategy performed optimally in the network depicted in Figure~\ref{MIP}(a), we now illustrate an example where it is \emph{not} optimal. A key observation is that the scheduling described above, ignores the available knowledge of interference structure in each layer of the network and it only schedules pairs that are non-interfering over all layers. To see how this knowledge of interference structure can be exploited, consider the network depicted in Figure~\ref{MIL}(a) with $1$-local view. Applying the previous scheduling to this network, we achieve a normalized sum-rate of $\alpha = \frac{1}{3}$. However, we show that it is possible to go beyond $\alpha=\frac{1}{3}$ and achieve a normalized sum-rate of $\alpha=\frac{1}{2}$ for Figure~\ref{MIL} example.

\begin{figure}[t]
\centering
\subfigure[]{\includegraphics[height = 2.5cm]{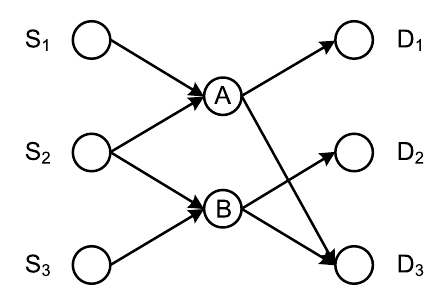}}
\hspace{0.4in}
\subfigure[]{\includegraphics[height = 2.5cm]{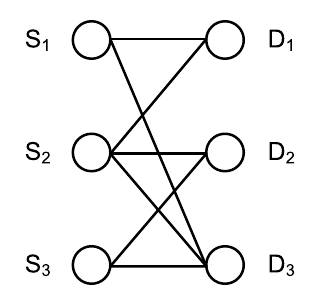}}
\caption{\it (a) A network where end-to-end interference avoidance is not optimal, and (b) its route-adjacency graph. \label{MIL}}
\end{figure}

We implement an achievability strategy described as follows. Similar to the previous example, we split the communication block into two time-slots of equal length and represent each time-slot with a color, namely black and white. Unlike the previous case where we assigned a color to each S-D pair, in this example, the color assignment is carried on in each layer seperately. We let sources $1$ and $3$ to communicate in the first layer over time-slot black and source $2$ over time-slot white. However, in the second layer, relays communicate to destinations ${\sf D}_1$ and ${\sf D}_2$ in time-slot black and to destination ${\sf D}_3$ in time-slot white, see Figures \ref{fig:MILcoloring}(a) and \ref{fig:MILcoloring}(b). With this strategy each S-D pair can communicate over its induced subgraph interference-free during half of the communication block length, see Figures \ref{fig:MILcoloring}(c), \ref{fig:MILcoloring}(d) and \ref{fig:MILcoloring}(e). Hence, we achieve a normalized sum-rate of $\alpha = 1/2$. By Lemma~\ref{lemma:generalhalf}, we know that $\alpha=\frac{1}{2}$ is also an upper-bound on the normalized sum-rate of this network, hence, we have achieved it normalized sum-capacity.

\begin{figure}[t]
\centering
\subfigure[]{\includegraphics[height = 2.5cm]{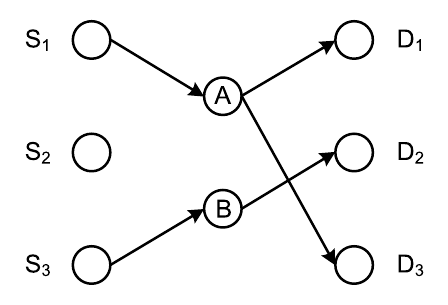}}
\hspace{0.4in}
\subfigure[]{\includegraphics[height = 2.5cm]{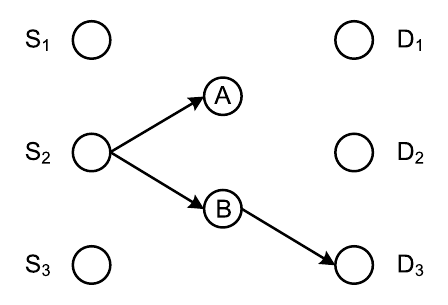}}

\subfigure[]{\includegraphics[height = 2.5cm]{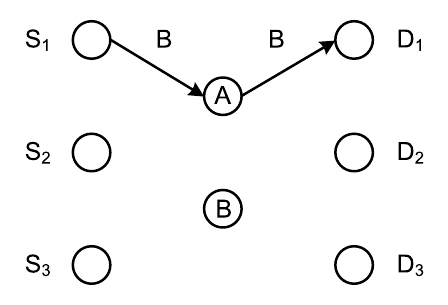}}
\hspace{0.4in}
\subfigure[]{\includegraphics[height = 2.5cm]{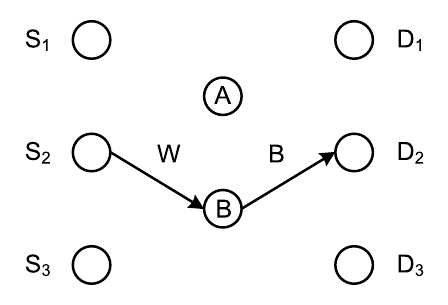}}

\subfigure[]{\includegraphics[height = 2.5cm]{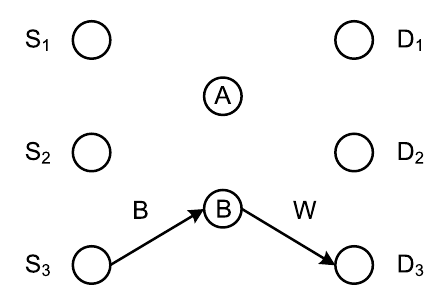}}

\caption{\it (a) Sources $1$ and $3$ can simultaneously communicate interference-free over time-slot black, and relays can communicate with destinations $1$ and $2$ interference-free over time-slot black, (b) Source $2$ can communicate interference-free over time-slot white, and destination $3$ can receive its signal interference-free over the same time-slot, and (b), (c), and (d) the interfernce-free induced subgraphs of S-D pair $1$, $2$, and $3$ respectively. \label{fig:MILcoloring}}
\end{figure}

This strategy can be viewed as a modification of our previous coloring of the nodes in the route-expanded graph as follows. Nodes with the same shape can be assigned different colors at different layers, however, still any two nodes with different shapes that are connected to each other should have different colors. In other words, we assign colors such that the induced subgraphs of different S-D pairs have different colors in each layer only if they are interfering at that layer. Figure~\ref{fig:MILexpanded} illustrates such coloring of nodes in this example by using only two colors, B and W. Since the induced subgraphs of interfering pairs have different colors in each layer, each S-D pair has a chance to communicate over its induced subgraph interference-free during half of the communication block.

\begin{figure}[ht]
\centering
\includegraphics[height = 4cm]{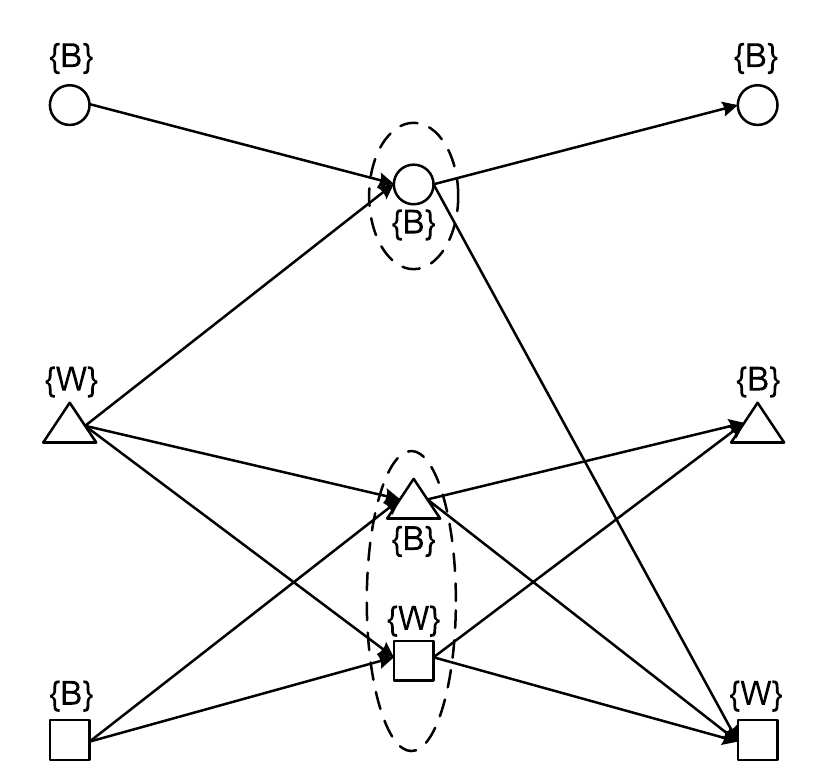}
%\hspace{0.4in}
%\subfigure[]{\includegraphics[height = 3.5cm]{}}
\caption{\it (a) Route-expanded graph, $\mathcal{G}_{\sf exp}$, of the network depicted in Figure~\ref{MIL}(a).\label{fig:MILexpanded}}
\end{figure}

The scheduling developed for the network depicted in Figure~\ref{MIL}, illustrates a major deficiency of the scheduling developed for the network depicted in Figure~\ref{MIP}, which is the restriction of applying the same scheduling to all nodes on a route between ${\sf S}_i$ and ${\sf D}_i$. By exploiting the available information of interference structure and scheduling nodes in different layers separately, we outperformed the first scheduling. In Section~\ref{section:CL}, we will formally define this new scheme and refer to it as Maximal Independent Layer (MIL) scheduling. 

So far, our proposed transmission strategies are based on interference avoidance either in an end-to-end manner or in a per-layer manner (MIL scheduling). But, can we go beyond interference avoidance with such limited knowledge at the nodes? To answer this question, first consider the single-layer network depicted in Figure~\ref{fig:33}(a). Since the conflict graph of this network is fully connected, using MIL scheduling we can only achieve $\alpha=\frac{1}{3}$. However, we now show that it is possible to achieve $\alpha=\frac{1}{2}$ by employing a coding strategy that only requires $1$-local view.

\begin{figure}[ht]
\centering
\subfigure[]{\includegraphics[height = 3cm]{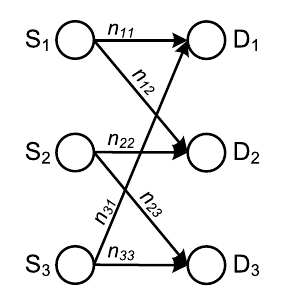}}
\hspace{0.3in}
\subfigure[]{\includegraphics[height = 3cm]{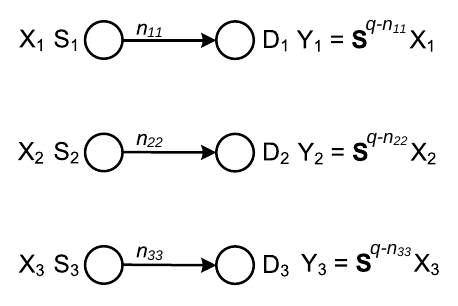}}
\caption{\it (a) A network in which coding is required to achieve normalized sum-capacity, and (b) the induced subgraphs.\label{fig:33}}
\end{figure}

Consider the linear deterministic model. By using repetition coding at the sources (as in \cite{VaneetIT}), we show that it is possible to achieve $\alpha=\frac{1}{2}$. Consider the induced subgraphs of all three S-D pairs, as shown in Figures \ref{fig:33}(b). We show that any transmission strategy over these three induced subgraphs can be implemented in the original network by using only two time-slots, such that all nodes receive the same signal as if they were in the induced subgraphs. This would immediately imply that a normalized sum-rate of $\frac{1}{2}$ is achievable.\footnote{Since any transmission strategy for the diamond networks can be implemented in the original network by using only two time-slots, we can implement the strategies that achieve the capacity for any S-D pair $i$ with full network knowledge, \emph{i.e.} $C_i$, over two time-slots. Hence, we can achieve $\frac{1}{2} \left( C_1 + C_2 + C_3 \right)$. On the other hand, we have $C_{\mathrm{sum}} \leq C_1 + C_2 + C_3$. As a result, we can achieve a set of rates such that $\sum_{i=1}^3{R_i} \geq \frac{1}{2} C_{\mathrm{sum}}$, and by the definition of normalized sum-rate, we achieve $\alpha = \frac{1}{2}$.}

To achieve $\alpha = \frac{1}{2}$, we split the communication block into two time-slots of equal length and represent each time-slot with a color, namely black and white. Sources $1$ and $2$ transmit the same codewords as if they are in the induced subgraphs over time-slot black. Destination ${\sf D}_1$ will receive the same signal as if it is only in the induced subgraph without any interference and destination ${\sf D}_3$ receives interference from source ${\sf S}_2$. Over time-slot white, source ${\sf S}_3$ transmits the same codewords as if they are in the induced subgraphs, and source ${\sf S}_2$ repeats its transmitted signal from time-slot black. Destination ${\sf D}_2$ will receive its signal interference-free. Now, if destination ${\sf D}_3$ adds its received signals over two time-slots, it recovers its intended signal interference-free, see Figure~\ref{fig:33-ach}. In other words, we have used interference cancellation at destination ${\sf D}_3$. Therefore, all S-D pairs can effectively communicate interference-free over two time-slots.

\begin{figure}[ht]
\centering
\includegraphics[height = 5cm]{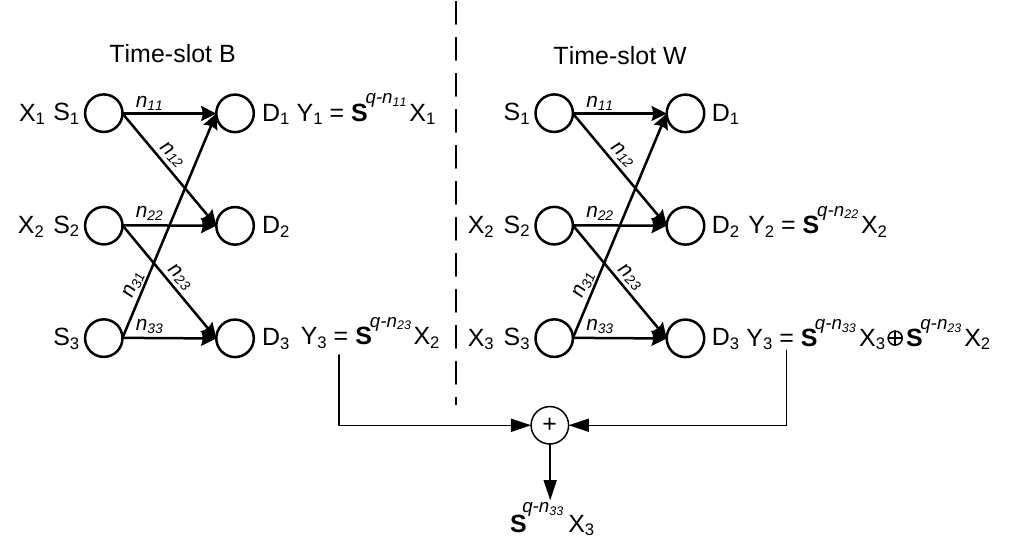}
\caption{\it Achievability strategy for the network depicted in Figure~\ref{fig:33}.\label{fig:33-ach}}
\end{figure}

Again, we can view this strategy as a modification of the previous colorings of the nodes in the route-expanded graph as follows. Each shape, \emph{i.e.} pair ID, can be assigned a subset of colors such that any two nodes with different shapes that are connected, have either different colors, or if they share a color, one them has a different color in its subset. Figure~\ref{fig:33-exp} illustrates such coloring of nodes in this example by using only two colors, $B$ and $W$. The subset $\{B,W\}$ assigned to source $2$ represents \emph{repetition coding}, \emph{i.e.} the transmitted signal in time-slot white is the same as the one transmitted in time-slot black. Since the interference can be cancelled out as described before, each S-D pair has a chance to communicate over its induced subgraph interference-free during half of the communication block.

\begin{figure}[ht]
\centering
\includegraphics[height = 3cm]{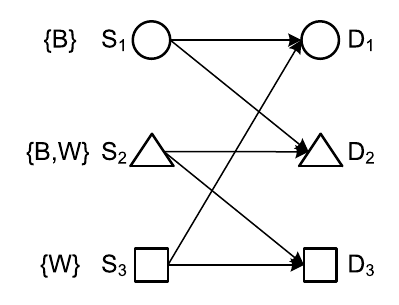}
\caption{\it Route-expanded graph of the network depicted in Figure~\ref{fig:33} and a coloring that yields $\alpha = \frac{1}{2}$.\label{fig:33-exp}}
\end{figure}

This example illustrated that with only 1-local view it is still possible to take advantage of (repetition) coding at the sources and go beyond interference avoidance. This raises a natural question: can we also exploit network coding at the relays with only $1$-local view? If so, what is a systematic procedure for doing that?

To shed light on the aforementioned questions, consider a multi-layer network as depicted in Figure~\ref{fig:333}(a). Assume linear deterministic model for the channels. It is straightforward to see that by using interference avoidance, we can at-most achieve normalized sum-rate of $\alpha=\frac{1}{3}$. We now show that by using repetition coding at the sources and linear coding at the relays, it is possible to achieve $\alpha=\frac{1}{2}$. Consider the induced subgraphs of all three S-D pairs, as shown in Figures \ref{fig:333}(b), \ref{fig:333}(c), and \ref{fig:333}(d). We now show that any transmission strategy over these three induced subgraphs can be implemented in the original network by using only two time-slots, such that all nodes receive the same signal as if they were in the diamond network. Therefore, a normalized sum-rate of $\frac{1}{2}$ is achievable.

\begin{figure}[ht]
\centering
\subfigure[]{\includegraphics[height = 2.5cm]{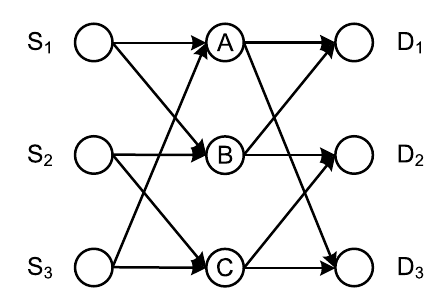}}
\hspace{-0.1in}
\subfigure[]{\includegraphics[width = 4.3cm]{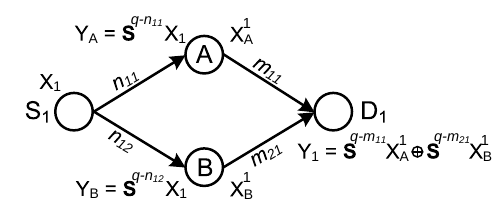}}

\subfigure[]{\includegraphics[width = 4.3cm]{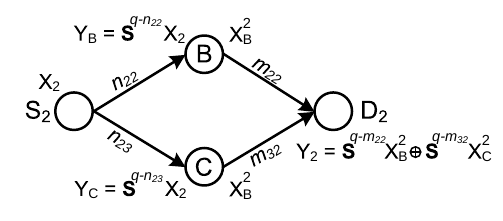}}
\hspace{-0.1in}
\subfigure[]{\includegraphics[width = 4.3cm]{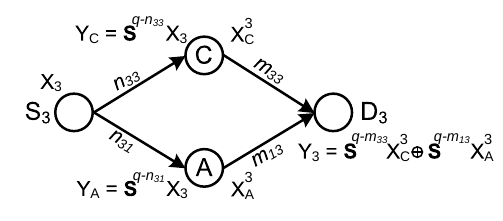}}
\caption{\it (a) a two-layer network in which we need to incorporate network coding to achieve the normalized sum-capacity, (b), (c) and (d) the induced subgraphs of S-D pairs 1,2,and 3 respectively.\label{fig:333}}
\end{figure}

Consider any strategy for S-D pairs $1$, $2$, and $3$ as illustrated in Figures \ref{fig:333}(b), \ref{fig:333}(c), and \ref{fig:333}(d). In the first layer, we implement the achievability strategy of Figure~\ref{fig:33-ach} and we have illustrated it in Figure~\ref{fig:333-ach1}. As it can be seen in this figure, at the end of the second time-slot, each relay has access to the same received signal as if it was in the diamond networks of Figures \ref{fig:333}(b), \ref{fig:333}(c), and \ref{fig:333}(d).

\begin{figure}[ht]
\centering
\includegraphics[height = 5cm]{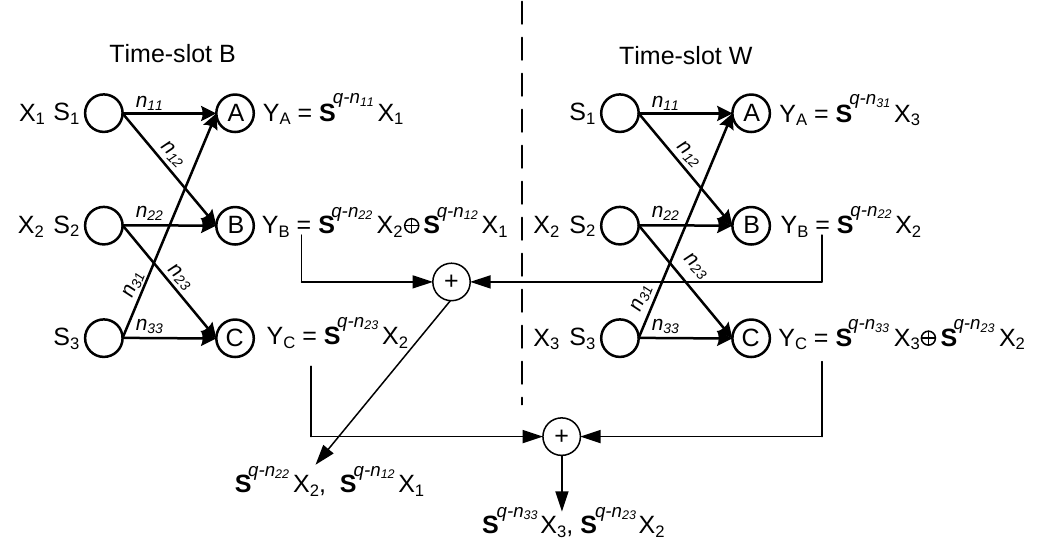}
\caption{\it Achievability strategy for the first layer of the network in Figure~\ref{fig:333}(a).\label{fig:333-ach1}}
\end{figure}

In the second layer, during time-slot black, relays $\sf A$ and $\sf B$ transmit $X_A^1$ and $X_B^1$ respectively, whereas, relay $\sf C$ transmits $X_C^2 \oplus X_C^3$, see Figure~\ref{fig:333-ach2}. Destination ${\sf D}_1$ receives the same signal as in Figure~\ref{fig:333}(b). During time-slot white, relays $\sf B$ and $\sf C$ transmit $X_B^2$ and $X_C^2$ respectively, whereas, relay $\sf A$ transmits $X_A^1 \oplus X_A^3$. Destination ${\sf D}_2$ receives the same signal as as in Figure~\ref{fig:333}(c). If destination ${\sf D}_3$ adds its received signals over the two time-slots, it recovers the same signal as in Figure~\ref{fig:333}(d). Therefore, each destination receives the same signal as if it was only in its corresponding diamond network, over two time-slots. Hence, the normalized sum-rate of $\alpha=\frac{1}{2}$ is achievable. By Lemma~\ref{lemma:generalhalf}, we know that $\alpha=\frac{1}{2}$ is also an upper-bound on the normalized sum-rate of this network, hence, we have achieved it normalized sum-capacity.

\begin{figure}[ht]
\centering
\includegraphics[height = 5cm]{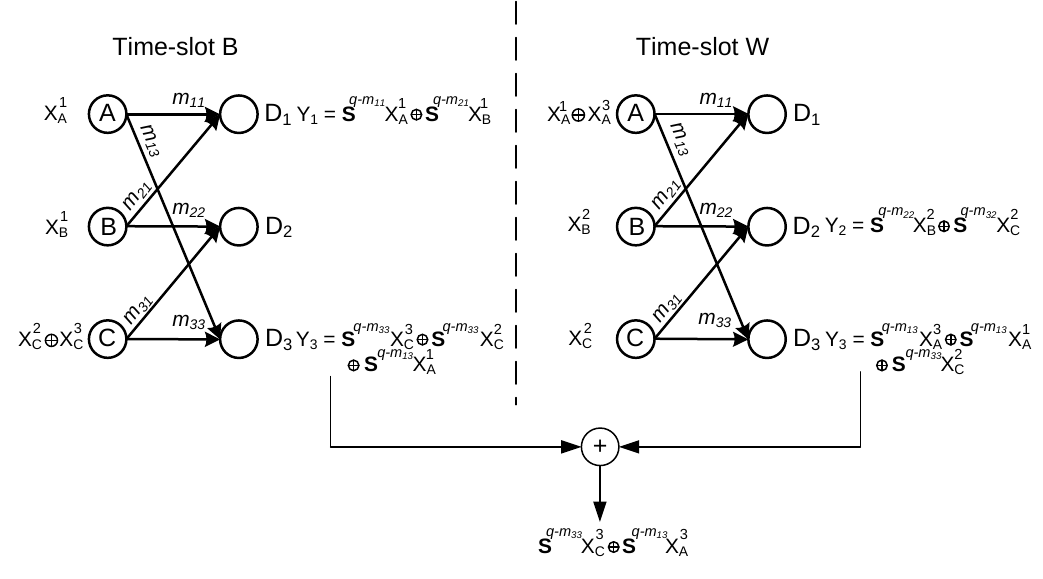}
\caption{\it Achievability strategy for the second layer of the network in Figure~\ref{fig:333}(a).\label{fig:333-ach2}}
\end{figure}

\begin{figure}[ht]
\centering
\includegraphics[height = 5cm]{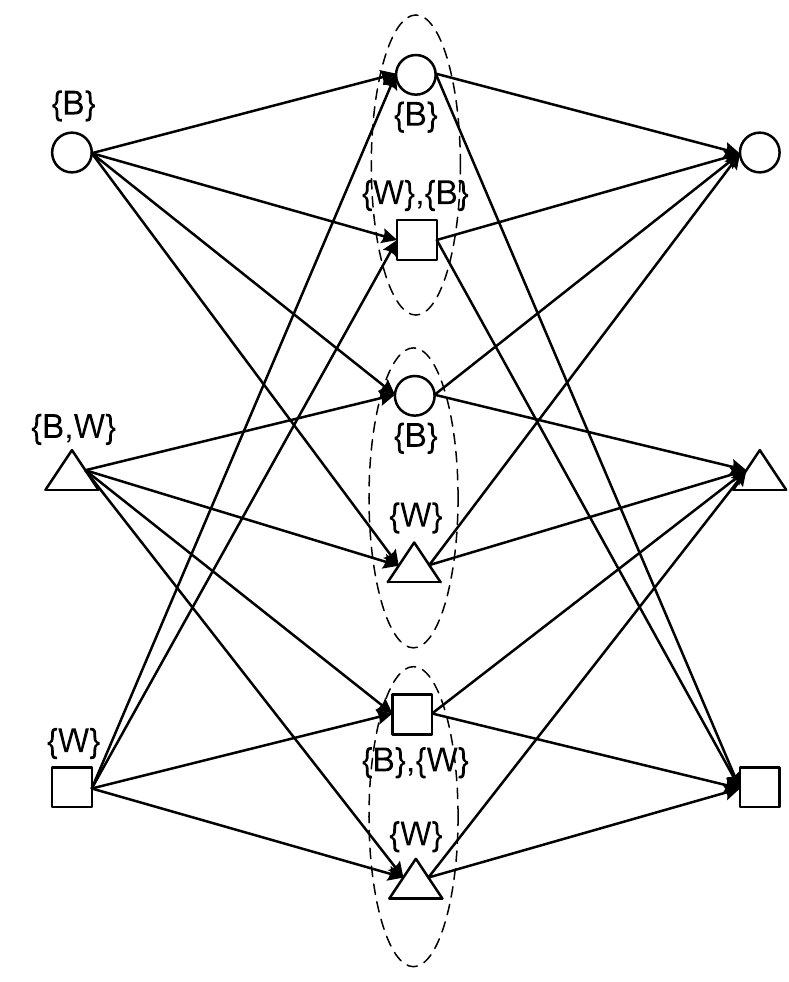}
\caption{\it The route-expanded graph for the the two-layer $(3,2)$ folded-chain network.\label{fig:MCLexpanded}}
\end{figure}

This strategy can be viewed as a new coloring of the nodes in the route-expanded graph as follows. Each shape, \emph{i.e.} pair ID, can be assigned two subsets of colors. Figure~\ref{fig:MCLexpanded} illustrates such coloring of nodes in this example by using only two colors, $B$ and $W$. The subset $\{B,W\}$ assigned to source $2$ represents repetition coding as before. The second subset of colors can be interpreted as the time instants from which we can add (or subtract for the Gaussian model) the codewords to perform network coding. To clarify, consider the first super-relay in Figure~\ref{fig:MCLexpanded}, node circle communicates the codeword of S-D pair $1$ over time-slot $B$. Over time-slot $W$, node square within this super-relay adds the codeword of S-D pair $1$ transmitted by the other node in the same super-node over time-slot $B$ to the codeword of S-D pair $3$ it has to send. Similar interpretation can be used for the other tuple of colors in this route-expanded graph. Since the interference can be cancelled out as described before, each S-D pair has a chance to communicate over its induced subgraph interference-free during half of the communication block.

In the following section, we incorporate all the ideas developed for the examples in this section to define a transmission strategy, \emph{i.e.} coded layer scheduling, which outperforms interference avoidance techniques with $1$-local view. We also characterize its performance and later in Section~\ref{performance}, we evaluate its performance for some network topologies. 

\section{Coded Layer Scheduling}
\label{section:CL}

Via the examples presented in Section~\ref{section:examples}, we saw that multiple ideas can be incorporated to enhance the achievablility scheme in multi-layer networks with $1$-local view: (1) per layer interference avoidance, (2) repetition coding to allow overhearing of the interference, and (3) network coding to allow interference neutralization. In this section, we define a general transmission strategy, named coded layer scheduling to incorporate all the aforementioned ideas. This scheduling can be represented by a specific coloring of nodes in the route-expanded graph (defined in Section~\ref{problem}). We refer to this coloring as the Coded Layer coloring and it is defined as follows.

Consider a multi-layer wireless network $\mathcal{G} = (\mathcal{V},\mathcal{E},\{ w_{ij} \}_{(i,j) \in \mathcal{E}})$, and its corresponding route-expanded graph $\mathcal{G}_{\sf exp} = (\mathcal{V}_{\sf exp},\mathcal{E}_{\sf exp})$. A \emph{Coded Layer coloring} of $\mathcal{G}_{\sf exp}$ with $T$ distinct colors $\{ c_0,c_1, \ldots,c_{T-1} \}$ assigns to any node ${\sf V}_{i,j} \in \mathcal{V}_{\sf exp}$,

\begin{enumerate}
\item a {\it transmit color set}, denoted by $\mathcal{T}_{i,j} \subseteq \mathcal{C}$, which represents the time instants in which ${\sf V}_i$ will be transmitting for S-D pair $j$ using repetition coding,
\item a {\it coding color set}, denoted by $\mathcal{C}_{i,j} \subseteq \mathcal{C}$, which represents the time instants from which node ${\sf V}_i$ will use the transmit signal to perform network coding for S-D pair $j$,
\item a {\it receive color set}, denoted by $\mathcal{R}_{i,j} \subseteq \mathcal{C}$, which represents the time instants in which it is listening.\footnote{We refer to the transmit color set, the coding color set, and the receive color set of source ${\sf S}_i$, $i=1,\ldots,K$, by $\mathcal{T}_{{\sf S}_i}$, $\mathcal{C}_{{\sf S}_i}$, and $\mathcal{R}_{{\sf S}_i}$ respectively, similar notations hold for destinations, \emph{i.e.} $\mathcal{T}_{{\sf D}_i}$, $\mathcal{C}_{{\sf D}_i}$, and $\mathcal{R}_{{\sf D}_i}$ for destination ${\sf D}_i$, $i=1,\ldots,K$.}
\end{enumerate}

To describe the conditions that these color assignments should satisfy, we need a few definitions.

\begin{definition}
\label{def:interferer}
At any node ${\sf V}_{i,j} \in \mathcal{V}_{\sf exp}$, a node ${\sf V}_{i^\prime,j^\prime} \in  \mathcal{V}_{\sf exp}$ is called an \emph{interferer} if $(i^\prime,i) \in \mathcal{V}$ and
\begin{enumerate}

\item $j^\prime \neq j$, \emph{i.e.} an interferer should have a S-D pair ID different from $j$,

\item $\nexists \hspace{1mm} {\sf V}_{i^\prime,j}: \mathcal{T}_{i^\prime,j^\prime} \cap \mathcal{C}_{i^\prime,j} \neq \emptyset$, \emph{i.e.} the colors used by an interferer are not used by any node in the same super-node that has S-D pair ID $j$ and performs network coding, otherwise its transmit signal will be neutralized.

\item $\mathcal{T}_{i^\prime,j^\prime} \cap \underbrace{ \left( \mathcal{R}_{i,j} \cap \left[ \cup_{i^\prime: (i^\prime,i) \in \mathcal{E}}{\left( \mathcal{T}_{i^\prime,j} \cup \mathcal{C}_{i^\prime,j} \right)} \right] \right) }_{\tilde{\mathcal{R}}_{i,j}} \neq \emptyset$, \emph{i.e.} an interferer transmits during a time instant that some node with S-D pair ID $j$ is transmitting to ${\sf V}_{i,j}$ and ${\sf V}_{i,j}$ is listening, the set of all such time instants is denoted by $\tilde{\mathcal{R}}_{i,j}$.

\end{enumerate}
\end{definition}

\begin{definition}
We define $\mathcal{N}_{i,j}$ as the set of all nodes in $\mathcal{V}_{\sf exp}$ that have pair ID $j$ and are connected to ${\sf V}_{i,j}$, \emph{i.e.}
\begin{equation}
\mathcal{N}_{i,j} = \{ {\sf V}_{i^\prime,j} \in \mathcal{V}_{\sf exp} | (i^\prime,i) \in \mathcal{E} \}.
\end{equation}
\end{definition}

% \begin{definition}
% For any color $c_k \in \mathcal{R}_{i,j}$, we define $\mathcal{N}_{i,j,c_k}$ as 
% \begin{equation}
% \mathcal{N}_{i,j,c_k} = \{ {\sf V}_{i^\prime,j} \in \mathcal{N}_{i,j} | c_k \in \mathcal{T}_{i^\prime,j} \}.
% \end{equation}
% In other words, $\mathcal{N}_{i,j,c_k}$ is the set of all nodes connected to ${\sf V}_{i,j}$ with pair ID $j$ (${\sf V}_{i,j}$ being the end point of the edge), such that color $c_k \in \mathcal{R}_{i,j}$ is in their transmit sets. 
% \end{definition}

% \begin{definition}
% We say that $\mathcal{R}_{i,j}$ partitions $\mathcal{N}_{i,j}$, if $\{ \mathcal{N}_{i,j,c_k} \}_{c_k \in \mathcal{R}_{i,j}}$ is a partition of $\mathcal{N}_{i,j}$.
% \end{definition}

The conditions on the assignment of $\mathcal{T}_{i,j}$, $\mathcal{C}_{i,j}$ and $\mathcal{R}_{i,j}$ of a coded layer coloring are as follows.

\textbf{C.1}: The transmit color sets assigned to the nodes that belong to the same super-node are disjoint, \emph{i.e.}
\begin{equation} 
\mathcal{T}_{i,j} \cap \mathcal{T}_{i,j^\prime} = \emptyset, \quad \forall j^\prime \neq j, i = 1,2, \ldots,|\mathcal{V}|.
\end{equation}

\textbf{C.2}: If a node is performing network coding, it only transmits once, \emph{i.e.} if $\mathcal{C}_{i,j} \neq \emptyset$, then $|\mathcal{T}_{i,j}| = 1$.

\textbf{C.3}: The coding color set $\mathcal{C}_{i,j}$ includes at most one color from each transmit color set of a node within the same super-node who is not performing network coding, \emph{i.e.}
\begin{align}
& \forall ~c_{k_1}, c_{k_2} \in \mathcal{C}_{i,j}, ~c_{k_1} \neq c_{k_2}, \exists ~j_1 \neq j_2: \nonumber \\
& c_{k_1} \in \mathcal{T}_{i,j_1}, c_{k_2} \in \mathcal{T}_{i,j_2}, \mathcal{C}_{i,j_1} = \mathcal{C}_{i,j_2} = \emptyset
\end{align}

\textbf{C.4}:  The receive color set $\mathcal{R}_{i,j}$ includes at least one color from each $\mathcal{T}_{i^\prime,j}$ such that $(i^\prime,i) \in \mathcal{E}$, \emph{i.e.}
\begin{equation}
\mathcal{R}_{i,j} \cap \mathcal{T}_{i^\prime,j} \neq \emptyset: \quad \forall ~i^\prime: (i^\prime,i) \in \mathcal{E}, j \in \{1,\ldots,K \}. 
\end{equation} 

\textbf{C.5}:  The receive color set $\mathcal{R}_{i,j}$ includes each $\mathcal{C}_{i^\prime,j}$ such that $(i^\prime,i) \in \mathcal{E}$, \emph{i.e.}
\begin{equation}
\mathcal{C}_{i^\prime,j} \subseteq \mathcal{R}_{i,j} \neq \emptyset: \quad \forall ~i^\prime: (i^\prime,i) \in \mathcal{E}, j \in \{1,\ldots,K \}. 
\end{equation}

\textbf{C.6}: If ${\sf V}_{i^\prime,j} \in \mathcal{N}_{i,j}$, then $|\mathcal{T}_{i^\prime,j} \cap \mathcal{R}_{i,j}| = 1$. %$\{ \mathcal{N}_{i,j,c_k} \}_{c_k \in \mathcal{R}_{i,j}}$ is a partition of $\mathcal{N}_{i,j}$. In other words, if ${\sf V}_{i^\prime,j} \in \mathcal{N}_{i,j}$, then $|\mathcal{T}_{i^\prime,j} \cap \mathcal{R}_{i,j}| = 1$.
% $\mathcal{R}_{i,j}$ partitions $\mathcal{N}_{i,j}$, $\forall ~i,j: {\sf V}_{i,j} \in \mathcal{V}_{\sf exp}$.

\textbf{C.7}: At each node ${\sf V}_{i,j} \in \mathcal{V}_{\sf exp}$ either there are no interferers, or all interferers share a common color in their transmit color sets, which is in $\mathcal{R}_{i,j} \setminus \cup_{i^\prime: (i^\prime,i) \in \mathcal{E}}{\left( \mathcal{T}_{i^\prime,j} \cup \mathcal{C}_{i^\prime,j} \right)}$, \emph{i.e.}
\begin{align}
\label{eq:share}
% & |\cap_{{\sf V}_{i^\prime,j}: \text{ an interferer at } {\sf V}_{i,j}}{\mathcal{T}_{i^\prime,j}}| = 1 \\
& \left|  \left( \bigcap_{{\sf V}_{i^\prime,j}: \text{ an interferer at } {\sf V}_{i,j}}{\mathcal{T}_{i^\prime,j}} \right) \bigcap \right. \nonumber \\
& \left. \left( \mathcal{R}_{i,j} \setminus \bigcup_{i^\prime: (i^\prime,i) \in \mathcal{E}}{\left( \mathcal{T}_{i^\prime,j} \cup \mathcal{C}_{i^\prime,j} \right)} \right) \right| = 1
\end{align}
moreover, the color that the interferers share should be exclusive to them, \emph{i.e.,} for $j \neq j$,
\begin{align}
c_{i,j}^\ast \notin \bigcup_{{\sf V}_{i^\prime,j^\prime}: \text{ not an interferer at } {\sf V}_{i,j}, \left( i^\prime, i \right) \in \mathcal{E}}{\left( \mathcal{T}_{i^\prime,j^\prime} \cup \mathcal{C}_{i^\prime,j^\prime} \right)}.
\end{align}

Based on the coded layer coloring of nodes in $\mathcal{G}_{\sf exp}$, we now define the coded layer scheduling of nodes in $\mathcal{G}$ as follows:

The transmission is broken into $N$ blocks of size $T$ time instants. At the beginning of the $\ell^{th}$ block, $\ell = 1,\ldots,N$,

\begin{itemize}

\item Source ${\sf S}_i$, $i=1,\ldots,K$, creates a signal $U_{{\sf S}_i}(\ell)$, which is a function of its message $\hbox{W}_i$ ($U_{{\sf S}_i}(\ell) \in \mathbb{F}^q_2$ for the linear deterministic model and $U_{{\sf S}_i}(\ell) \in \mathbb{C}$ for the Gaussian model such that it satisfies the average power constraint at transmit nodes). The choice of this function depends on the specific strategy that each source picks,

\item Each relay node ${\sf V}_i$ creates a signal $U_{{\sf V}_{i,j}}(\ell)$ for each S-D pair $j: j \in \mathcal{J}_{{\sf V}_i}$, which is a function of its received signals $\{ Y_{{\sf V}_i}[mT+r]: 0 \leq m \leq \ell-2, c_r \in \mathcal{R}_{i,j} \}$ and the global side information. The choice of this function depends on the specific strategy that each relay picks.

\end{itemize}

During the $\ell^{th}$ block,

\begin{itemize}

\item Source ${\sf S}_i$, $i=1,\ldots,K$, will transmit 
\begin{equation}
\label{eq:CLStransmit-source}
X_{{\sf S}_i}[(\ell-1)T+t] =
\left\{ \begin{array}{ll}
U_{{\sf S}_i}(\ell) & \text{if } c_t \in \mathcal{T}_{{\sf S}_i} \\
0 & \text{otherwise}
\end{array} \right.
\end{equation}
where $t=0,\ldots, T-1$.

\item Each relay node ${\sf V}_i$ will transmit 
\begin{equation}
\label{eq:CLStransmit-relay}
X_{{\sf V}_i}[(\ell-1)T+t] =
\left\{ \begin{array}{ll}
U_{{\sf V}_{i,j}}(\ell) \\
\quad \text{if } c_t \in \mathcal{T}_{i,j} \text{ and } \mathcal{C}_{i,j} = \emptyset \\
U_{{\sf V}_{i,j}}(\ell) - \sum_{j^\prime \in \mathcal{J}_{{\sf V}_i}, j^\prime \neq j}{U_{{\sf V}_{i,j^\prime}}(\ell)} \\
\quad \text{if } c_t \in \mathcal{T}_{i,j} \text{ and } \mathcal{C}_{i,j} \neq \emptyset \\
0 \qquad \qquad \text{otherwise}
\end{array} \right.
\end{equation}
where $t=0,\ldots, T-1$. Note that subtraction in $\mathbb{F}_2^q$ is the same as XOR operation.

\end{itemize}

Finally, each destination ${\sf D}_i$, $i=1,\ldots,K$, will decode $\hbox{W}_i$ based on its received signals $\{ Y_{{\sf D}_i}[mT+r]: 0 \leq m \leq N-1, c_r \in \mathcal{R}_{{\sf D}_i} \}$ and the global side information.

We next state our main result for the coded layer scheduling.

\begin{theorem}
\label{THM:CL}
For a multi-layer network (linear deterministic or Gaussian) $\mathcal{G} = (\mathcal{V},\mathcal{E},\{ w_{ij} \}_{(i,j) \in \mathcal{E}})$ with $1$-local view, if there exists a coded layer coloring of $\mathcal{G}_{\sf exp}$ with $T$ colors as defined above, then a normalized sum-rate of $\alpha = \frac{1}{T}$ is achievable by coded layer scheduling.
% there exists a Coded Layer scheduling (\emph{i.e.} specific choice of signals $U_{{\sf S}_i}(\ell)$'s and $U_{{\sf V}_{i,j}}(\ell)$'s , $\ell = 1,\ldots,N$), such that it achieves a normalized sum-rate of $\alpha = \frac{1}{T}$.
\end{theorem}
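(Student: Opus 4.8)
The plan is to reduce the theorem to a single \emph{simulation} claim: under coded layer scheduling driven by a valid $T$-coloring, each destination ${\sf D}_i$ recovers exactly the signal it would have received had it been alone on its induced subgraph $\mathcal{G}_{ii}$, while using only a $1/T$ fraction of the channel uses. Granting this claim, I would let each pair $i$ run, on $\mathcal{G}_{ii}$, a strategy that attains its isolated point-to-point capacity $C_i$; this is available with $1$-local view since each pair knows all gains on the routes to its own destination. Each pair then achieves rate $\tfrac{1}{T}C_i$, so $\sum_i R_i = \tfrac{1}{T}\sum_i C_i \ge \tfrac{1}{T} C_{\mathrm{sum}}$, where $C_{\mathrm{sum}}\le\sum_i C_i$ because the isolated-pair capacities upper bound any full-knowledge sum-rate (the same cut argument used in the footnote of the three-pair example). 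By the definition of achievable normalized sum-rate this gives $\alpha=1/T$; in the Gaussian case the noise that accumulates when a constant number $T$ of received slots are combined degrades each $C_i$ by only a constant, which is absorbed into $\tau$.

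The core is an induction on the layer index $l$ establishing the simulation claim. For each node ${\sf V}_i$ and pair $j\in\mathcal{J}_{{\sf V}_i}$, let $U_{{\sf V}_{i,j}}(\ell)$ denote the clean signal that ${\sf V}_i$ would produce for pair $j$ on $\mathcal{G}_{jj}$. The induction hypothesis is that, in every block $\ell$, each node up to layer $l$ correctly forms all of its clean signals, equivalently has reconstructed the induced-subgraph observation $Y^{\mathrm{ind}}_{i,j}=\sum_{{\sf V}_{k,j}\in\mathcal{N}_{i,j}}{\bf S}^{q-n_{ki}}U_{{\sf V}_{k,j}}(\ell)$ for each of its pairs (and the analogous Gaussian sum). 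The base case holds at the sources by construction of $U_{{\sf S}_i}(\ell)$ from ${\sf W}_i$.

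The inductive step is where all six coloring conditions are consumed, and it is the main obstacle. Fixing a node ${\sf V}_{i,j}$ in layer $l+1$, I would combine (XOR in $\mathbb{F}_2^q$, real addition in the Gaussian case) the signals ${\sf V}_i$ receives over the colors of $\mathcal{R}_{i,j}$ and verify the result equals $Y^{\mathrm{ind}}_{i,j}$. Condition C.5 guarantees that each desired neighbor ${\sf V}_{k,j}\in\mathcal{N}_{i,j}$ is heard in exactly one receive slot, so every intended term appears once; condition C.4 forces $\mathcal{R}_{i,j}$ to contain the upstream coding colors so that network-coded contributions are captured. The undesired in-neighbors are then classified using Definition~\ref{def:interferer}: those neutralized by an upstream network-coding copy (clause~2, realized through the subtraction in \eqref{eq:CLStransmit-relay}) cancel automatically; those whose transmit colors miss $\tilde{\mathcal{R}}_{i,j}$ (clause~3) are silent while ${\sf V}_i$ listens and cause no interference; and the remaining genuine interferers are handled by C.6, which isolates them in one exclusive color $c^\ast_{i,j}\in\mathcal{R}_{i,j}$ carrying only their superposition and no desired signal, so that subtracting the $c^\ast_{i,j}$ observation removes their (repeated, hence identical) contribution from the other receive slots.

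The delicate part of this step is checking that the three mechanisms do not collide. I would use the exclusivity requirement \eqref{eq:exclusive} together with the disjointness C.1 and the size constraints C.2 and C.3 on coding color sets to show that subtracting the interference measured at $c^\ast_{i,j}$ neither deletes nor double-counts any desired term, and that a node performing network coding, which by C.2 transmits only once, delivers its neutralizing term in precisely the slot that the downstream receive set anticipates. Once the combined observation at ${\sf V}_{i,j}$ is shown to equal $Y^{\mathrm{ind}}_{i,j}$, the node forms $U_{{\sf V}_{i,j}}(\ell)$ exactly as on $\mathcal{G}_{jj}$, closing the induction; applying it at layer $L$ yields the simulation claim for every destination and completes the proof.
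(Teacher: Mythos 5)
Your proposal is correct and follows essentially the same route as the paper's own proof: the identical simulation claim (each node recovers its induced-subgraph received signal by combining the slots of its receive color set, with C.4/C.5 ensuring every desired and network-coded term appears exactly once, and C.6 together with the interferer definition forcing genuine interference to appear in canceling pairs via the exclusive shared color), followed by the identical rate accounting $\sum_{i=1}^{K} R_i \ge \frac{1}{T}\sum_{i=1}^{K} C_i \ge \frac{1}{T}C_{\mathrm{sum}}$. The differences are only presentational: you organize the reconstruction as a layer-by-layer induction where the paper verifies a single transmission snapshot node by node, and the Gaussian ``constant degradation'' you assert without proof is exactly what the paper establishes separately in Claim~\ref{claim:kappa} of Appendix~\ref{Appendix:kappa} (noise variance $T$ and power $P/T$ cost at most $|\mathcal{V}|\left(2\log T + 17\right)$ bits per pair).
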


\begin{proof}

% {\bf The proof seems not organized, and is hard to follow. Split the proof into subcategories. 1. Transmission Strategy. 2. Decoding at the receivers to see that one message can be decoded at each receiver. }

We first prove the theorem for the linear deterministic model. Assume that there exists a coded layer coloring of nodes ${\sf V}_{i,j} \in \mathcal{V}_{exp}$ with colors $ \{ c_0,c_1,\ldots,c_{T-1} \}$, denoted by $\mathcal{T}_{i,j}$, $\mathcal{C}_{i,j}$ and $\mathcal{R}_{i,j}$. Suppose $\mathcal{G}$ has $K$ S-D pairs and consider the induced subgraphs of all S-D pairs, \emph{i.e.} $\mathcal{G}_{jj}$, $j=1,2,\ldots,K$. We will show that by using the coded layer scheduling, any transmission snapshot over these induced subgraphs can be implemented in the original network $\mathcal{G}$ over $T$ time instants, such that all nodes receive the same signal as if they were in the induced subgraphs. 

Consider a transmission snapshot in the $K$ induced subgraphs where
\begin{itemize}
\item Node ${\sf V}_i$ in the induced subgraph $\mathcal{G}_{j,j}$ transmits $X^j_{V_i}$,
\item Node ${\sf V}_i$ in the induced subgraph $\mathcal{G}_{j,j}$ receives
\begin{align}
\label{eq:receive-j}
Y^j_{{\sf V}_i} = \sum_{i^\prime : (i^\prime,i) \in \mathcal{E}}{{\bf S}^{q-n_{i^\prime i}} X^j_{{\sf V}_{i^\prime}}}.
\end{align}
\end{itemize}

\noindent \textbf{Transmission strategy}: At any time instant $t=0,1,\ldots,T-1$, node ${\sf V}_i \in \mathcal{V}$ will choose $U_{{\sf V}_{i,j}} = X^j_{{\sf V}_i}$ and will transmit
\begin{equation}
\label{eq:transmit}
X_{{\sf V}_i}[t] =
\left\{ \begin{array}{ll}
X^j_{{\sf V}_i} \oplus \sum_{k : \mathcal{T}_{i,k} \cap \mathcal{C}_{i,j} \neq \emptyset}{X^k_{{\sf V}_i}} & \text{if }c_t \in \mathcal{T}_{i,j}, \text{ and}\\
& j \in \{ 1,2,\ldots,K \}, \\
0 & \text{otherwise}
\end{array} \right.
\end{equation}
and it will receive
\begin{equation}
\label{eq:receive}
Y_{{\sf V}_i}[t] = \sum_{i^\prime: (i^\prime,i) \in \mathcal{E}}{{\bf S}^{q-n_{i^\prime i}} X_{{\sf V}_{i^\prime}}[t]}, \quad t=0,\ldots,T-1
\end{equation}
where summation is carried on in $\mathbb{F}_2^q$.

\noindent \textbf{Constructing the received signals}: Based on the transmission strategy described above, we need to show that at any node ${\sf V}_i$, the received signal $Y^j_{{\sf V}_i}$ can be obtained. At any node ${\sf V}_i \in \mathcal{G}_{jj}$, $j=1,2,\ldots,K$, we create $\tilde{Y}^j_{{\sf V}_i}$ as follows,
\begin{align}
\label{eq:receive-tilde}
\tilde{Y}^j_{{\sf V}_i} = \sum_{t: c_t \in \mathcal{R}_{i,j}}{Y_{{\sf V}_i}[t]},
\end{align}
where $Y_{{\sf V}_i}[t]$ is given by (\ref{eq:receive}). We will show that $\tilde{Y}^j_{{\sf V}_i} = Y^j_{{\sf V}_i}$. We have
\begin{align}
\label{eq:simplify}
\tilde{Y}^j_{{\sf V}_i} & = \sum_{t: c_t \in \mathcal{R}_{i,j}}{Y_{{\sf V}_i}[t]} \nonumber \\
& = \sum_{t: c_t \in \mathcal{R}_{i,j}}{\sum_{i^\prime: (i^\prime,i) \in \mathcal{E}}{{\bf S}^{q-n_{i^\prime i}} X_{{\sf V}_{i^\prime}}[t]}} \nonumber \\
& = \sum_{i^\prime: (i^\prime,i) \in \mathcal{E}}{{\bf S}^{q-n_{i^\prime i}} \sum_{t: c_t \in \mathcal{R}_{i,j}}{ X_{{\sf V}_{i^\prime}}[t]}} 
\end{align}

Compairing (\ref{eq:receive-j}) and (\ref{eq:simplify}), we conclude that in order to show $\tilde{Y}^j_{{\sf V}_i} = Y^j_{{\sf V}_i}$, it is sufficient to prove
\begin{align}
\sum_{t: c_t \in \mathcal{R}_{i,j}}{ X_{{\sf V}_{i^\prime}}[t]} = X^j_{{\sf V}_{i^\prime}}.
\end{align}

Consider a node ${\sf V}_{i^\prime}$ such that $(i^\prime,i) \in \mathcal{E}$; we face 2 cases:

\noindent \underline{Case 1}: $\mathcal{C}_{i^\prime,j} = \emptyset$, then from condition \textbf{C.1} and (\ref{eq:transmit}), we get
\begin{equation}
\label{eq:repetition1}
X_{{\sf V}_i}[t] = X^j_{{\sf V}_{i^\prime}} \qquad \forall ~t \text{ s.t. } c_t \in \mathcal{T}_{i^\prime,j}.
\end{equation}
Then based on condition \textbf{C.7}, we have 2 sub-cases:

\underline{Case 1-a}: There are no interferers. In this case since no interfering signal is received during time instants associated with colors in $\mathcal{R}_{i,j}$, hence,
\begin{equation}
\label{eq:repetition2}
X_{{\sf V}_i}[t] = 0 \quad \forall ~t \text{ s.t. } c_t \in \left( \mathcal{R}_{i,j} \setminus \mathcal{T}_{i^\prime,j} \right). 
\end{equation}
Then, we have
\begin{align}
\label{eq:Case1-a}
\sum_{t: c_t \in \mathcal{R}_{i,j}}{X_{{\sf V}_{i^\prime}}[t]} \overset{(\ref{eq:repetition2})}{=}  \sum_{t: c_t \in \left( \mathcal{R}_{i,j} \cap \mathcal{T}_{i^\prime,j} \right)}{X_{{\sf V}_{i^\prime}}[t]} \overset{\textbf{C.6}, (\ref{eq:repetition1})}{=} X^j_{{\sf V}_{i^\prime}}.
\end{align}

\underline{Case 1-b}: All interferers share a common color that is in $\mathcal{R}_{i,j}$ but not in $\cup_{i^\prime: (i^\prime,i) \in \mathcal{E}}{\left( \mathcal{T}_{i^\prime,j} \cup \mathcal{C}_{i^\prime,j} \right)}$. In this case, the transmit signal of any interferer ${\sf V}_{i^\prime,j^\prime}$ appears exactly twice in time instants associated with colors in $\mathcal{R}_{i,j}$, once during the time instant associated with the color that all interferers share, see (\ref{eq:share}), and once during a different time instant, see Definition~\ref{def:interferer}. Hence, when adding the received signals over all time instants associated with colors in $\mathcal{R}_{i,j}$, the transmit signal of any interferer gets canceled. Moreover, condition \textbf{C.6} guarantees that the desired signal, \emph{i.e.} $X^j_{{\sf V}_{i^\prime}}$, is transmitted exactly once during time instants associated with colors in $\mathcal{R}_{i,j}$. Hence, from (\ref{eq:repetition1}) and the argument presented above, we have
\begin{align}
\label{}
\sum_{t: c_t \in \mathcal{R}_{i,j}}{X_{{\sf V}_{i^\prime}}[t]} = X^j_{{\sf V}_{i^\prime}}.
\end{align}

\noindent \underline{Case 2}: $\mathcal{C}_{i^\prime,j} \neq \emptyset$. Note that according to condition \textbf{C.3} the colors in $\mathcal{C}_{i^\prime,j}$ can only appear in $\mathcal{T}_{i^\prime,j^\prime}$ where $\mathcal{C}_{i^\prime,j^\prime} = \emptyset$. As a result, each one of those codewords added to $X^j_{{\sf V}_{i^\prime}}$ as in (\ref{eq:transmit}), are also transmitted during time instants corresponding to colors in $\mathcal{C}_{i,j}$. Hence, if $\mathcal{C}_{i^\prime,j} \neq \emptyset$,
\begin{align}
\label{eq:coding1}
\sum_{t: c_t \in \mathcal{T}_{i^\prime,j} \cup \mathcal{C}_{i^\prime,j}}{X_{{\sf V}_{i^\prime}}[t]} =  X^j_{{\sf V}_{i^\prime}}.
\end{align}
Again based on condition \textbf{C.7}, we have 2 sub-cases:

\underline{Case 2-a}: There are no interferers. In this case since no interfering signal is received during time instants associated with colors in $\mathcal{R}_{i,j}$, hence,
\begin{equation}
\label{eq:coding2}
X_{{\sf V}_i}[t] = 0 \quad \forall ~t \text{ s.t. } c_t \in \left[ \mathcal{R}_{i,j} \setminus \left( \mathcal{T}_{i^\prime,j} \cup \mathcal{C}_{i^\prime,j} \right) \right]. 
\end{equation}
Then, we have
\begin{align}
\label{}
\sum_{t: c_t \in \mathcal{R}_{i,j}}{X_{{\sf V}_{i^\prime}}[t]} & \hspace{2mm} \overset{(\ref{eq:coding2})}{=}  \sum_{ t: c_t \in \left[ \mathcal{R}_{i,j} \cap \left( \mathcal{T}_{i^\prime,j} \cup \mathcal{C}_{i^\prime,j} \right) \right]}{X_{{\sf V}_{i^\prime}}[t]} \\
& \overset{\textbf{C.6}, (\ref{eq:coding1})}{=} X^j_{{\sf V}_{i^\prime}}.
\end{align}

% \begin{align}
% \label{eq:Case2-a}
% \sum_{t: c_t \in \mathcal{R}_{i,j}}{X_{{\sf V}_{i^\prime}}[t]} = X^j_{{\sf V}_{i^\prime}}.
% \end{align}

\underline{Case 2-b}: All interferers share a common color that is in $\mathcal{R}_{i,j}$ but not in $\cup_{i^\prime: (i^\prime,i) \in \mathcal{E}}{\left( \mathcal{T}_{i^\prime,j} \cup \mathcal{C}_{i^\prime,j} \right)}$. The argument is similar to that of case 1-b, \emph{i.e.} the transmit signal of any interferer ${\sf V}_{i^\prime,j^\prime}$ appears exactly twice in time instants associated with colors in $\mathcal{R}_{i,j}$, once during the time instant associated with the color that all interferers share, see (\ref{eq:share}), and once during a different time slot, see Definition~\ref{def:interferer}. Hence, when adding the received signals over all time instants associated with colors in $\mathcal{R}_{i,j}$, the transmit signal of any interferer gets canceled. Moreover, condition \textbf{C.6} guarantees that the desired signal, \emph{i.e.} $X^j_{{\sf V}_{i^\prime}}$, is transmitted exactly once during time instants associated with colors in $\mathcal{R}_{i,j}$. Hence, from (\ref{eq:coding1}) and the argument presented above, we have
\begin{align}
\label{}
\sum_{t: c_t \in \mathcal{R}_{i,j}}{X_{{\sf V}_{i^\prime}}[t]} = X^j_{{\sf V}_{i^\prime}}.
\end{align}

Therefore, we have shown that in all cases we have $\sum_{t: c_t \in \mathcal{R}_{i,j}}{ X_{{\sf V}_{i^\prime}}[t]} = X^j_{{\sf V}_{i^\prime}}$, which as described before proves that $\tilde{Y}^j_{{\sf V}_i} = Y^j_{{\sf V}_i}$. As a result, at any node ${\sf V}_i \in \mathcal{G}_{jj}$, the received signal $Y^j_{{\sf V}_i}$ can be obtained interference-free over $T$ time instants. This implies that the transmit signal $X^j_{{\sf V}_i}$, which is only a function of $Y^j_{{\sf V}_i}$ and the global side information, can be created over $T$ time instants as well. Hence, any transmission snapshot over the induced subgraphs can be implemented in the original network $\mathcal{G}$ over $T$ time instants, such that all nodes receive the same signal as if they were in the induced subgraphs. 

Moreover, any transmission strategy with block length $N$ for the induced subgraphs can be implemented in the original network $\mathcal{G}$ over $N T$ time instants in a similar manner. Hence, we can implement the strategies that achieve the capacity for any S-D pair $i$ with full network knowledge, \emph{i.e.} $C_i$, as $N \rightarrow \infty$ over $N T$ time instants. Therefore, by choosing $U_{{\sf V}_{i,j}}$'s according to the optimal transmission strategies and creating $X_{{\sf V}_i}[t]$ as in (\ref{eq:transmit}), we can achieve $\frac{1}{T} \left( C_1 + C_2 + \ldots + C_K \right)$. On the other hand, we have $C_{\mathrm{sum}} \leq C_1 + C_2 + \ldots + C_K$. As a result, we can achieve a set of rates such that $\sum_{i=1}^K{R_i} \geq \frac{1}{T} C_{\mathrm{sum}}$, and by the definition of the normalized sum-rate, we achieve $\alpha = \frac{1}{T}$.

We will next prove the theorem for the Gaussian model. We will show that by using the coded layer scheduling, any transmission snapshot over the induced subgraphs $\mathcal{G}_{jj}$, $j=1,2,\ldots,K$, can be implemented in the original network $\mathcal{G}$ over $T$ time instants, such that all nodes receive the same signal as if they were in the induced subgraphs. 

Consider a transmission snapshot in the $K$ induced subgraphs where
\begin{itemize}
\item Node ${\sf V}_i$ in the induced subgraph $\mathcal{G}_{j,j}$ transmits $X^j_{V_i}$,
\item Node ${\sf V}_i$ in the induced subgraph $\mathcal{G}_{j,j}$ receives
\begin{align}
\label{eq:receive-gaussian-1}
Y^j_{{\sf V}_i} = \sum_{i^\prime : (i^\prime,i) \in \mathcal{E}}{ h_{i^\prime i} X^j_{{\sf V}_{i^\prime}}} + Z^\prime_i,
\end{align}
\end{itemize}
where $Z^\prime_i$ is the additive white complex Gaussian noise with variance $T$. We also assume a power constraint of $\frac{P}{T}$ at the transmit nodes in the induced subgraphs.

\noindent \textbf{Transmission strategy}: % For any node ${\sf V}_i$ at any time instant $t=1,2,\ldots,T$, set $\beta_{i,t} = 1 + |\mathcal{C}_{i,j}|$ where $j: c_t \in \mathcal{T}_{i,j}$. Then, 
At any time instant $t=0,1,\ldots,T-1$, node ${\sf V}_i$ will choose $U_{{\sf V}_{i,j}} = X^j_{{\sf V}_i}$ and will transmit
\begin{equation}
\label{eq:transmit-gaussian}
X_{{\sf V}_i}[t] =
\left\{ \begin{array}{ll}
X^j_{{\sf V}_i} - \sum_{k : \mathcal{T}_{i,k} \cap \mathcal{C}_{i,j} \neq \emptyset}{X^k_{{\sf V}_i}} \\
\text{ if } c_t \in \mathcal{T}_{i,j}, \text{ and } j \in \{ 1,2,\ldots,K \}, \\
0 \qquad \qquad \text{otherwise}
\end{array} \right.
\end{equation}
note that number of transmit signals at each time instant is less than $T$ and due to the power constraint of $\frac{P}{T}$ in the induced subgraphs, the power constraint in the original network $\mathcal{G}$ is satisfied. At any time instant $t=1,2,\ldots,T$, node ${\sf V}_i$ will receive
\begin{equation}
\label{eq:receive-gaussian}
Y_{{\sf V}_i}[t] = \sum_{i^\prime: (i^\prime,i) \in \mathcal{E}}{ h_{i^\prime i} X_{{\sf V}_{i^\prime}}[t]} + Z_i[t],
\end{equation}
where $Z_i[t]$ is the additive white complex Gaussian noise with unit variance.

\noindent \textbf{Constructing the received signals}: Based on the transmission strategy described above, we need to show that at any node ${\sf V}_i$, $\sum_{i^\prime : (i^\prime,i) \in \mathcal{E}}{ h_{i^\prime i} X^j_{{\sf V}_{i^\prime}}} + \tilde{Z}_i$ can be obtained where $\tilde{Z}_i$ is an additive white complex Gaussian noise that has the same distribution as $Z^{\prime}_i$. At any node ${\sf V}_i \in \mathcal{G}_{jj}$, $j=1,2,\ldots,K$, we create $\tilde{Y}^j_{{\sf V}_i}$ as follows,
\begin{align}
\label{}
\tilde{Y}^j_{{\sf V}_i} = \sum_{t: c_t \in \mathcal{R}_{i,j}}{Y_{{\sf V}_i}[t]},
\end{align}
where $Y_{{\sf V}_i}[t]$ is given by (\ref{eq:receive-gaussian}). We will show that $\tilde{Y}^j_{{\sf V}_i} = \sum_{i^\prime : (i^\prime,i) \in \mathcal{E}}{ h_{i^\prime i} X^j_{{\sf V}_{i^\prime}}} + \tilde{Z}_i$, \emph{i.e.} $\tilde{Y}^j_{{\sf V}_i}$ is equal to $Y^j_{{\sf V}_i}$ defined in (\ref{eq:receive-gaussian-1}), ignoring the noise terms. We have
\begin{align}
\label{}
\tilde{Y}^j_{{\sf V}_i} & = \sum_{t: c_t \in \mathcal{R}_{i,j}}{Y_{{\sf V}_i}[t]} \nonumber \\
& = \sum_{t: c_t \in \mathcal{R}_{i,j}}{\left[ \left( \sum_{i^\prime: (i^\prime,i) \in \mathcal{E}}{ h_{i^\prime i} X_{{\sf V}_{i^\prime}}[t] } \right) + Z_i \right] } \nonumber \\
& = \left( \sum_{i^\prime: (i^\prime,i) \in \mathcal{E}}{h_{i^\prime i} \sum_{t: c_t \in \mathcal{R}_{i,j}}{  X_{{\sf V}_{i^\prime}}[t]}} \right) + \sum_{t: c_t \in \mathcal{R}_{i,j}}{Z_i[t]}.
\end{align}
Note that $\sum_{t: c_t \in \mathcal{R}_{i,j}}{Z_i[t]}$ is an additive white complex Gaussian noise with variance $|\mathcal{R}_{i,j}| \le T$. Therefore, by adding noise we can create $\tilde{Z}_i$ that has the same distribution as $Z^{\prime}_i$. Hence, it is sufficient to prove that
\begin{align}
\label{eq:toprovegaussian}
\sum_{t: c_t \in \mathcal{R}_{i,j}}{ X_{{\sf V}_{i^\prime}}[t]} = X^j_{{\sf V}_{i^\prime}}.
\end{align}

%%%%%%%%%%%%%%%%%%%%%%%%%%%%%%%%%%%%%%%%%%%%%%%%%

The argument to prove (\ref{eq:toprovegaussian}) is similar to that presented for the linear deterministic model with minor modifications as follows. Consider a node ${\sf V}_{i^\prime}$ such that $(i^\prime,i) \in \mathcal{E}$; we face 2 cases:

\noindent \underline{Case 1}: $\mathcal{C}_{i^\prime,j} = \emptyset$, then from condition \textbf{C.1} and (\ref{eq:transmit-gaussian}), we get
\begin{equation}
\label{eq:repetition-gaussian1}
X_{{\sf V}_i}[t] = X^j_{{\sf V}_{i^\prime}} \qquad \forall ~t \text{ s.t. } c_t \in \mathcal{T}_{i^\prime,j}.
%\forall ~t: c_t \in \mathcal{T}_{i^\prime,j}: X_{{\sf V}_i}[t] = X^j_{{\sf V}_{i^\prime}},
\end{equation}
then based on condition \textbf{C.7}, we have 2 sub-cases:

\underline{Case 1-a}: There are no interferers. In this case since no interfering signal is received during time instants associated with colors in $\mathcal{R}_{i,j}$, hence,
\begin{equation}
\label{eq:repetition-gaussian2}
X_{{\sf V}_i}[t] = 0 \quad \forall ~t \text{ s.t. } c_t \in \left( \mathcal{R}_{i,j} \setminus \mathcal{T}_{i^\prime,j} \right). 
\end{equation}
Then, we have
\begin{align}
\label{eq:Case1-a}
\sum_{t: c_t \in \mathcal{R}_{i,j}}{X_{{\sf V}_{i^\prime}}[t]} \overset{(\ref{eq:repetition-gaussian2})}{=}  \sum_{t: c_t \in \left( \mathcal{R}_{i,j} \cap \mathcal{T}_{i^\prime,j} \right)}{X_{{\sf V}_{i^\prime}}[t]} \overset{\textbf{C.6}, (\ref{eq:repetition-gaussian1})}{=} X^j_{{\sf V}_{i^\prime}}.
\end{align}

\underline{Case 1-b}: All interferers share a common color that is in $\mathcal{R}_{i,j}$ but not in $\cup_{i^\prime: (i^\prime,i) \in \mathcal{E}}{\left( \mathcal{T}_{i^\prime,j} \cup \mathcal{C}_{i^\prime,j} \right)}$. In this case, the transmit signal of any interferer ${\sf V}_{i^\prime,j^\prime}$ appears exactly twice in time instants associated with colors in $\mathcal{R}_{i,j}$, once with positive sign during the time instant associated with the color that all interferers share, see (\ref{eq:share}), and once with negative sign during a different time slot, see Definition~\ref{def:interferer}  and (\ref{eq:transmit-gaussian}). Hence, when adding the received signals over all time instants associated with colors in $\mathcal{R}_{i,j}$, the transmit signal of any interferer gets canceled. Moreover, condition \textbf{C.6} guarantees that the desired signal, \emph{i.e.} $X^j_{{\sf V}_{i^\prime}}$, is transmitted exactly once during time instants associated with colors in $\mathcal{R}_{i,j}$. Hence, from (\ref{eq:repetition-gaussian1}) and the argument presented above, we have
\begin{align}
\label{}
\sum_{t: c_t \in \mathcal{R}_{i,j}}{X_{{\sf V}_{i^\prime}}[t]} = X^j_{{\sf V}_{i^\prime}}.
\end{align}

\noindent \underline{Case 2}: $\mathcal{C}_{i^\prime,j} \neq \emptyset$. Note that according to condition \textbf{C.3} the colors in $\mathcal{C}_{i^\prime,j}$ can only appear in $\mathcal{T}_{i^\prime,j^\prime}$ where $\mathcal{C}_{i^\prime,j^\prime} = \emptyset$. As a result, each one of those codewords subtracted from $X^j_{{\sf V}_{i^\prime}}$ as in (\ref{eq:transmit-gaussian}), are also transmitted during time instants corresponding to colors in $\mathcal{C}_{i,j}$ with positive sign. Hence, if $\mathcal{C}_{i^\prime,j} \neq \emptyset$,
\begin{align}
\label{eq:coding-gaussian1}
\sum_{t: c_t \in \mathcal{T}_{i^\prime,j} \cup \mathcal{C}_{i^\prime,j}}{X_{{\sf V}_{i^\prime}}[t]} =  X^j_{{\sf V}_{i^\prime}}.
\end{align}
again based on condition \textbf{C.7}, we have 2 sub-cases:

\underline{Case 2-a}: There are no interferers. In this case since no interfering signal is received during time instants associated with colors in $\mathcal{R}_{i,j}$, hence,
\begin{equation}
\label{eq:coding-gaussian2}
X_{{\sf V}_i}[t] = 0 \quad \forall ~t \text{ s.t. } c_t \in \left[ \mathcal{R}_{i,j} \setminus \left( \mathcal{T}_{i^\prime,j} \cup \mathcal{C}_{i^\prime,j} \right) \right]. 
\end{equation}
Then, we have
\begin{align}
\label{}
\sum_{t: c_t \in \mathcal{R}_{i,j}}{X_{{\sf V}_{i^\prime}}[t]} & \hspace{2mm} \overset{(\ref{eq:coding-gaussian2})}{=}  \sum_{ t: c_t \in \left[ \mathcal{R}_{i,j} \cap \left( \mathcal{T}_{i^\prime,j} \cup \mathcal{C}_{i^\prime,j} \right) \right]}{X_{{\sf V}_{i^\prime}}[t]} \\
& \overset{\textbf{C.6}, (\ref{eq:coding-gaussian1})}{=} X^j_{{\sf V}_{i^\prime}}.
\end{align}

% There are no interferers. In this case since no interfering signal is received during time instants associated with colors in $\mathcal{R}_{i,j}$, then from condition \textbf{C.6} and (\ref{eq:coding-gaussian}), we have
% \begin{align}
% \label{}
% \sum_{t: c_t \in \mathcal{R}_{i,j}}{X_{{\sf V}_{i^\prime}}[t]} = X^j_{{\sf V}_{i^\prime}}.
% \end{align}

\underline{Case 2-b}: All interferers share a common color that is in $\mathcal{R}_{i,j}$ but not in $\cup_{i^\prime: (i^\prime,i) \in \mathcal{E}}{\left( \mathcal{T}_{i^\prime,j} \cup \mathcal{C}_{i^\prime,j} \right)}$. The argument is similar to that of case 1-b, \emph{i.e.} the transmit signal of any interferer ${\sf V}_{i^\prime,j^\prime}$ appears exactly twice in time instants associated with colors in $\mathcal{R}_{i,j}$, once with positive sign during the time instant associated with the color that all interferers share, see (\ref{eq:share}), and once with negative sign during a different time slot, see Definition~\ref{def:interferer} and (\ref{eq:transmit-gaussian}). Hence, when adding the received signals over all time instants associated with colors in $\mathcal{R}_{i,j}$, the transmit signal of any interferer gets canceled. Moreover, condition \textbf{C.6} guarantees that the desired signal, \emph{i.e.} $X^j_{{\sf V}_{i^\prime}}$, is transmitted exactly once during time instants associated with colors in $\mathcal{R}_{i,j}$. Hence, from (\ref{eq:coding-gaussian1}) and the argument presented above, we have
\begin{align}
\label{}
\sum_{t: c_t \in \mathcal{R}_{i,j}}{X_{{\sf V}_{i^\prime}}[t]} = X^j_{{\sf V}_{i^\prime}}.
\end{align}

Therefore, we have shown that in all cases we have $\sum_{t: c_t \in \mathcal{R}_{i,j}}{ X_{{\sf V}_{i^\prime}}[t]} = X^j_{{\sf V}_{i^\prime}}$. As a result, effectively we have decomposed the induced dubgraphs $\mathcal{G}_{jj}$, $j=1,\ldots,K$, from the original network $\mathcal{G}$ over $T$ time instants. % Note that in these decomposed induced subgraphs, the noise variances and transmit power constraints are not the same as the induced subgraphs.

Now, in order to show that we can achieve a normalized sum-rate of $\alpha = \frac{1}{T}$, we need to prove that by changing the noise variances from $1$ to $T$ and the power constraints from $P$ to $\frac{P}{T}$, the capacity of each iduced subgraph is decreased by at most a constant that is independent of the channel gains, this has been shown in Claim~\ref{claim:kappa} in Appendix~\ref{Appendix:kappa}. Therefore, via coded layer scheduling, we achieve a sum-rate $\sum_{i=1}^K{R_i} \ge \frac{1}{T} \sum_{i=1}^K{C_i} - \tau$, where $\tau = \frac{K}{T} |\mathcal{V}| \left( 2 \log T + 17 \right)$ is a constant independent of channel gains. Hence by the definition of normalized sum-rate, we achieve $\alpha = \frac{1}{T}$.

\end{proof}

We refer to the coded layer coloring that maximizes $\frac{1}{T}$ as the {\it Maximal} Coded Layer coloring, and to the coded layer scheduling that is defined based on the Maximal Coded Layer coloring as the {\it Maximal} Coded Layer (MCL) scheduling.

\begin{remark}
A special subclass of MCL scheduling is to avoid coding (either repetition or network coding) at the nodes and just perform interference avoidance at each layer. This is simply obtained by imposing the following constraints to the coloring of nodes in $\mathcal{G}_{\sf exp}$: $$\mathcal{C}_{i,j} = \emptyset, \text{ and } |\mathcal{T}_{i,j}| = 1, \forall ~{\sf V}_{i,j} \in \mathcal{V}_{\sf exp}.$$ We refer to this coloring as the {\it Independent Layer} coloring and its corresponding scheduling as  the {\it Independent Layer} (IL) scheduling. We refer to the IL scheduling with minimum number of colors as the {\it Maximal} Independent Layer (MIL) scheduling. 

Finally, by expressing coded layer scheduling as a coloring algorithm, the special case and its relation to known schedulers in the literature becomes apparent. As explained above, MIL coloring uses only one color in coding and receive sets. Thus MIL scheduling has the feel of multi-path routing and hop-by-hop scheduling (see e.g.~\cite{LL03}) which generalizes single-path routing and scheduling~\cite{TE92,CLCD06} in multi-hop networks. However MIL scheduling differs from all routing based methods, multi-path and otherwise. In MIL scheduling, the information combining occurs at the signal level and thus without interference, each flow can be information-theoretically optimal if operating in isolation. However, in multi-path routing, even though multiple paths are used, the basic building block is a point-to-point link and thus, the smallest information block is a packet. Thus, in the absence of interference, a flow which has multiple possible routes does not achieve information-theoretic capacity since not all degrees of freedom are used at the signal level.
\end{remark}

In the following section, we evaluate the performance of MCL and MIL scheduling for some sample networks. As we will, MIL scheduling (that is based on the idea of per layer interference avoidance) is optimal for a class of networks, namely $K \times \underbrace{2 \times \ldots \times 2}_M \times K$ networks. However, MCL scheduling is optimal for a larger class of networks in which interference avoidance techniques fail to achieve the normalized sum-capacity.

\section{Optimality of the Strategies}
\label{performance}
In the previous section, we introduced the coded layer scheduling and the Independent Layer scheduling as its special case. In this section, we show the optimality of the aforementioned transmission strategies for some sample networks. We start by a class of networks where MIL scheduling is optimal. We then consider a class of networks in which coding is required to achieve the normalized sum-capacity. By characterizing the achievable normalized sum-rate via MIL scheduling, we will show that MIL scheduling is {\it not} always optimal in these networks. We then introduce a class of networks in which implementing MCL scheduling provides us with unbounded gain as opposed to MIL scheduling.

\subsection{Two-layer $K$-user networks with two relays per layer}

In this subsection, we define a class of networks where MIL scheduling is optimal but end-to-end interference avoidance is not necessarily optimal. This class of networks is an extension of those in Figures \ref{MIP} and \ref{MIL}, \emph{i.e.} two-layer $K$-user networks with two relays per layer.

\begin{definition}
A {\it $K \times \underbrace{2 \times \ldots \times 2}_M \times K$} network is a multi-layer network (as defined in Section \ref{subsec:netModel}) with $L = M + 2$, $|\mathcal{V}_1| = |\mathcal{V}_L| = K$, and $|\mathcal{V}_2| = |\mathcal{V}_3| = \ldots = |\mathcal{V}_{M+1}| = 2$. See Figure~\ref{k222k} for a depiction.
\end{definition}

\begin{definition}
A {\it non-interfering $K \times \underbrace{2 \times \ldots \times 2}_M \times K$} network is a $K \times \underbrace{2 \times \ldots \times 2}_M \times K$ network where if there exists a path from ${\sf V}^2_i$ to ${\sf V}^{M+1}_j$, then there is no path from ${\sf V}^2_{\bar{i}}$ to ${\sf V}^{M+1}_j$, $i,j \in \{ 1, 2 \}$ and $\bar{i} = 3 - i$.
\end{definition}

\begin{figure}[h]
\centering
\includegraphics[height= 3cm]{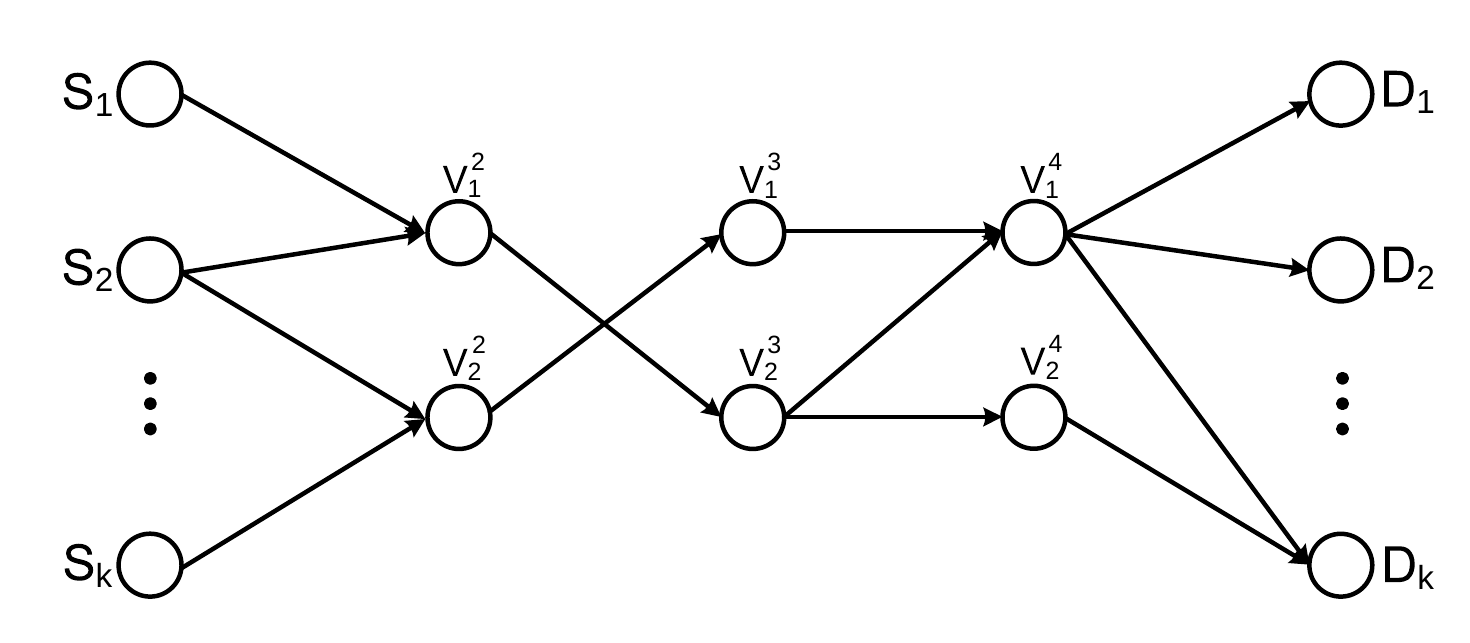}
\caption{A $K \times 2 \times 2 \times 2 \times K$ network.}
\label{k222k}
\end{figure}

\begin{theorem} \label{THM:K22K}
The normalized sum-capacity of a $K \times \underbrace{2 \times \ldots \times 2}_M \times K$ network (linear deterministic or Gaussian) with $1$-local view, is
\begin{equation}
\alpha^*=
\left\{ \begin{array}{ll}
\frac{1}{d_{\max}} & \text{if the network is non-interfering,}\\
\frac{1}{K} & \textrm{otherwise}\\
\end{array} \right.
\end{equation}
and is achieved by MIL scheduling.
\end{theorem}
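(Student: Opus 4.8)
The plan is to prove achievability and the matching converse separately, treating the non-interfering and interfering cases within each. For achievability I would exhibit explicit Independent Layer colorings and invoke Theorem~\ref{THM:CL}, which delivers $\alpha = 1/T$ from any valid coloring with $T$ colors; since IL is the special case $\mathcal{C}_{i,j} = \emptyset$, $|\mathcal{T}_{i,j}| = 1$, it suffices to produce a proper coloring. When the network is non-interfering the two layer-$2$ relays feed disjoint chains through the middle, so $\mathcal{G}$ splits into two ``highways''; I assign each pair a single color used at all its nodes, give the pairs on a common highway distinct colors, and let the two highways reuse the same palette. Conditions \textbf{C.1}, \textbf{C.5} and \textbf{C.6} then hold trivially---no interferer ever transmits in a pair's unique listening slot---and the number of colors can be taken to be $d_{\max}$, since the congestion of the busier highway equals the in/out-degree of its bottleneck relay. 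In the interfering case I simply give each of the $K$ pairs its own color, which satisfies every condition and gives $\alpha = 1/K$; these minimal colorings are exactly the MIL colorings.

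For the converse I would exploit the compound nature of the model: every encoder depends only on its local view $(L_{{\sf S}_i}, \mathsf{SI})$, so a single strategy must emit identical signals on any two states sharing the same $1$-local view, while being scored against the full-knowledge $C_{\mathrm{sum}}$ of each state separately. The plan is to construct a family of network states that are pairwise indistinguishable under $1$-local view---differing only in gains lying on no source's own route, hence outside every $L_{{\sf S}_i}$---but that rotate which pairs can be served interference-free. In the non-interfering case I build $d_{\max}$ such states around the degree-$d_{\max}$ node and the $d_{\max}$ pairs through it; since the transmit signals cannot adapt to the realized state, an averaging and Fano argument over these $d_{\max}$ states caps the guaranteed sum-rate at $\tfrac{1}{d_{\max}} C_{\mathrm{sum}} + \tau$. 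In the interfering case the middle mixing couples all $K$ pairs into one cyclic conflict, yielding $K$ indistinguishable states and the bound $\tfrac{1}{K} C_{\mathrm{sum}} + \tau$. Lemma~\ref{lemma:generalhalf} is precisely the two-state instance of this scheme, and the theorem follows by extending it from two states to $d_{\max}$ (respectively $K$).

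The hard part is this converse---constructing the family of indistinguishable states and pushing the averaging bound through against the full strength of coded layer scheduling rather than merely IL, for arbitrary depth $M$. Two issues dominate. First, the states must agree on everything visible in each node's local view while still rotating the served pairs, which for $M$ intermediate layers forces a careful choice of interfering gains deep in the network and a layer-by-layer accounting of what each relay can infer from its received signal. Second, one must show that repetition and network coding at the relays cannot recover the lost rate; here the restriction to two relays per layer is essential, since---unlike the richer topologies of Section~\ref{section:examples}---it leaves no spare signal dimension for the interference-neutralization trick. By comparison, the achievability and the combinatorial fact that $d_{\max}$ (respectively $K$) is the required number of colors are routine.
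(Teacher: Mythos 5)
Both halves of your proposal contain genuine gaps.

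\textbf{Achievability.} Your scheme assigns each S-D pair a \emph{single} color used at all of its nodes, i.e., end-to-end interference avoidance. That is not MIL scheduling, and it cannot achieve $\tfrac{1}{d_{\max}}$ in general: for the $3\times 2\times 3$ network of Figure~\ref{MIL} (which is a non-interfering instance with $d_{\max}=2$), the end-to-end conflict graph is a triangle, so any per-pair coloring needs $3$ colors and yields only $\alpha=\tfrac13$, while the theorem claims $\tfrac12$. The point of the paper's construction---and of the Independent Layer notion itself---is that the color of a pair may \emph{change from layer to layer}: in the first layer the paper pairs up members of $\mathcal{J}_1$ with members of $\mathcal{J}_2$ (source--relay connectivity), and in the second layer it pairs up members of $\mathcal{J}^{\prime}_1$ with members of $\mathcal{J}^{\prime}_2$ (relay--destination connectivity); these two pairings are generally different (in Figure~\ref{MIL}, pairs $1,3$ share a slot in layer $1$ but pairs $1,2$ share a slot in layer $2$). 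Your ``two highways'' picture also breaks down because sources in $\mathcal{J}_{12}$ and destinations in $\mathcal{J}^{\prime}_{12}$ attach to \emph{both} relay chains, so the network does not decompose.

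\textbf{Converse.} Your framework---a family of states indistinguishable to \emph{all} sources (differing only in gains outside every $L_{{\sf S}_i}$), followed by averaging/Fano---cannot produce the bound, and it is not what Lemma~\ref{lemma:generalhalf} does. The paper's converse has two ingredients you are missing. First, \emph{per-source hedging}: to guarantee normalized sum-rate $\alpha$, each source must transmit at rate at least $\alpha n-\tau$ (resp. $\alpha\log(1+|h|^2)-\tau$), because a state in which all \emph{other} pairs have capacity $0$ is consistent with \emph{its own} $1$-local view. Those hypothetical states differ in gains that other sources do see---indistinguishability is only required at the one source being bounded---so your requirement that the states agree on every $L_{{\sf S}_i}$ forbids exactly this step. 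Without it, averaging over all-source-indistinguishable states only upper-bounds sums of rates; it cannot force every source to be active, and no such state even has $C_{\mathrm{sum}}$ of order $d_{\max}\cdot n$ to normalize against (the two-relay cut caps $C_{\mathrm{sum}}$ at roughly $2n$). Second, a structural decoding step in one \emph{fixed} adversarial state: because the two relays in a layer receive (up to noise) the same signal, a relay or destination that can decode its own messages can strip them and decode \emph{all} messages arriving at it; the MAC bound at that node then gives $d_{\max}(\alpha n-\tau)\le n$, i.e., $\alpha\le \tfrac{1}{d_{\max}}$ (and $\alpha\le\tfrac1K$ in the interfering case via a node connected to all sources). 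This also disposes of your worry about ``showing coding at the relays cannot help'': the MAC bound at the decode-all node applies to arbitrary strategies, so no scheme-specific argument is needed. Lemma~\ref{lemma:generalhalf} is proved by this same hedging-plus-decode-all argument, not as a two-state instance of an averaging scheme.
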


\begin{proof}

\begin{figure}[h]
\centering
\includegraphics[height= 3cm]{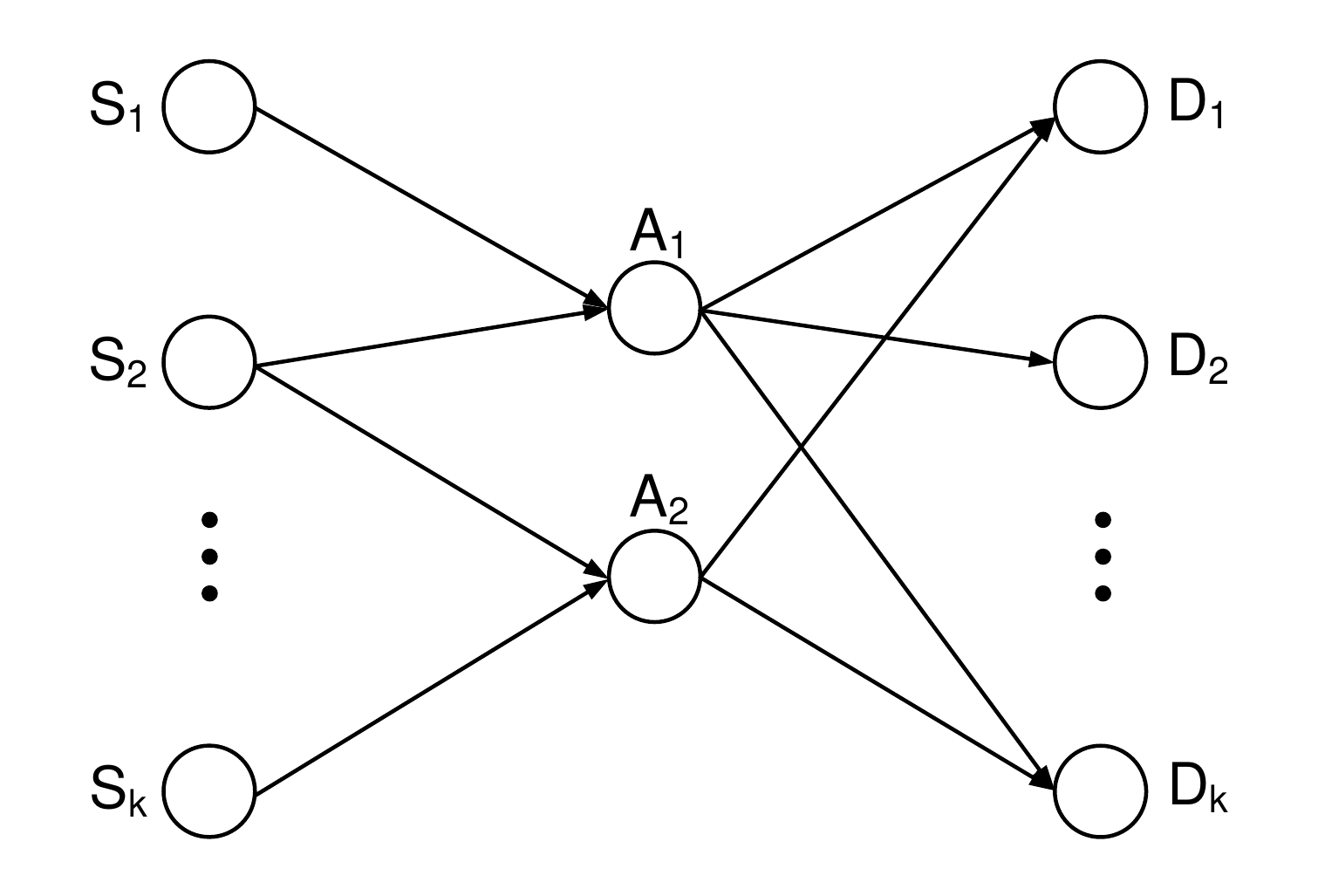}
\caption{A $K \times 2 \times K$ network.}
\label{k2k}
\end{figure}

The result for $K = 1$ is trivial. For $K \geq 2$, we will first prove the result for the special case of $K \times 2 \times K$ networks, see Figure~\ref{k2k}, and then we extend the result to $K \times \underbrace{2 \times \ldots \times 2}_M \times K$ networks.

\noindent  {\bf Converse:} Assume that a normalized sum-rate of $\alpha$ is achievable, \emph{i.e.} there exists a transmission strategy with $1$-local view, such that for all channel realizations, it achieves a sum-rate satisfying $\sum_{i=1}^K{R_i} \ge \alpha C_{\mathrm{sum}} -\tau$ with error probabilities going to zero as $N \rightarrow \infty$ and for some constant $\tau \in R$ independent of the channel gains. We will show that $\alpha \leq \frac{1}{d_{\max}}$.

% Assume that a normalized sum-rate of $\alpha$ is achievable, \emph{i.e.} there exists a constant $\tau \in \mathbb{R}$ independent of the channel gains and we have $\sum_{i=1}^K{R_i} \ge \alpha C_{\mathrm{sum}} -\tau$ with error probabilities going to zero as $N \rightarrow \infty$.

Without loss of generality, it is sufficient to consider two cases: (1) $d_{\max} = d_{\mathrm{in}}({\sf A}_1)$, and (2) $d_{\max} = d_{\mathrm{out}}({\sf A}_1)$.

\noindent  {\bf Converse proof for case (1)}: We divide the S-D pair ID's into 3 disjoint subsets as follows: $\mathcal{J}_i$ is the set of all the S-D pair ID's such that the corresponding source is connected to relay ${\sf A}_i$, $i=1,2$, and $\mathcal{J}_{12}$ is the set of all the other S-D pair ID's. In other words, $\mathcal{J}_{12}$ is the set of all the S-D pair ID's where the corresponding source is connected to both relays. % Without loss of generality, suppose that $d_{\mathrm{in}}({\sf A}_1) \ge d_{\mathrm{in}}({\sf A}_2)$.

Our goal is to derive an upper bound on the normalized sum-capacity of this network by specific assignment of channel gains. Consider the corresponding destinations of the S-D pairs in $\mathcal{J}_1$. Any such destination is either connected to relay ${\sf A}_1$ or to both relays, since otherwise it cannot get its message. If it is connected to both, then set the channel gain of the link from relay ${\sf A}_2$ equal to $0$. Follow the similar steps for the members of $\mathcal{J}_2$. The corresponding destinations of the sources in $\mathcal{J}_{12}$ are either connected to only one relay or to both relays. If such a destination is connected to both relays, assign the channel gain of $0$ to one of the links connecting it to a relay (pick this link at random). Assign channel gain of $n$ (in linear deterministic model), and $h$ (in Gaussian model) to all other links in the network, where $n \in \mathbb{N}$ and $h \in \mathbb{C}$. 

Suppose each destination ${\sf D}_i$ is able to decode its message $\hbox{W}_i$, $i=1,\ldots,K$. With the channel gain assignment described above, all destinations corresponding to members of $\mathcal{J}_1$ are {\it only} connected to relay ${\sf A}_1$. Hence, relay ${\sf A}_1$ has all the information that each one of the destinations corresponding to members of $\mathcal{J}_1$ requires in order to decode its message. Therefore, relay ${\sf A}_1$ should be able to decode all the messages coming from sources corresponding to the members of $\mathcal{J}_1$. A similar claim is valid for relay ${\sf A}_2$, \emph{i.e.} it should be able to decode all the messages coming from $\mathcal{J}_2$. Therefore, relays ${\sf A}_1$ and ${\sf A}_2$ can decode the messages coming from members of $\mathcal{J}_1$ and $\mathcal{J}_2$ respectively. They decode these messages and remove them from the received signals.

Now, each relay should be able to decode the rest of the messages (in the linear deterministic case, relays have the same received signals and in the Gaussian case, they receive the same codewords with different noise terms, however, since the destinations are able to decode messages, relays should be able to do so). This means that relay ${\sf A}_1$ is able to decode all the messages from $\mathcal{J}_1$ and $\mathcal{J}_{12}$, note that $d_{\mathrm{in}}({\sf A}_1) = |\mathcal{J}_1| + |\mathcal{J}_{12}|$.

Given the assumption of $1$-local view, in order to achieve a normalized sum-rate of $\alpha$, each source should transmit at a rate greater than or equal to $\alpha n - \tau$ (in linear deterministic model) where $\tau$ is described at the beginning of the converse. This is due to the fact that from each source's point of view, it is possible that the other S-D pairs have capacity $0$, therefore in order to achieve a normalized sum-rate of $\alpha$, it should transmit at a rate of at least $\alpha n - \tau$. The MAC capacity at relay ${\sf A}_1$, gives us
\begin{equation}
d_{\mathrm{in}}({\sf A}_1) ( \alpha n - \tau ) \leq n \Rightarrow ( d_{\mathrm{in}}({\sf A}_1) \alpha - 1 ) n \leq d_{\mathrm{in}}({\sf A}_1) \tau.
\end{equation}
Since this has to hold for all values of $n$, and $\alpha$ and $\tau$ are independent of $n$, we get $\alpha \leq \frac{1}{d_{\mathrm{in}}({\sf A}_1)}$.

In the Gaussian case, each source should transmit at a rate greater than or equal to $\alpha \log(1+|h|^2) - \tau$ since from each source's point of view, it is possible that the other S-D pairs have capacity $0$. From the MAC capacity at relay ${\sf A}_1$, we get
\begin{equation}
d_{\mathrm{in}}({\sf A}_1) ( \alpha \log(1+|h|^2) - \tau ) \leq \log(1+d_{\mathrm{in}}({\sf A}_1) \times |h|^2),
\end{equation}
which results in
\begin{equation}
d_{\mathrm{in}}({\sf A}_1) ( \alpha \log(1+|h|^2) - \tau ) \leq \log(d_{\mathrm{in}}({\sf A}_1)) + \log(1 + \times |h|^2).
\end{equation}
Hence, we have
\begin{equation}
( d_{\mathrm{in}}({\sf A}_1) \alpha - 1 ) \log(1+|h|^2) \leq \log(d_{\mathrm{in}}({\sf A}_1)) + d_{\mathrm{in}}({\sf A}_1) \tau.
\end{equation}
Since this has to hold for all values of $h$, and $\alpha$ and $\tau$ are independent of $h$, we get $\alpha \leq \frac{1}{d_{\mathrm{in}}({\sf A}_1)}$.

\noindent {\bf Coverse proof for case (2)}: If a destination is only connected to relay ${\sf A}_2$, assign channel gain of $0$ to the link from ${\sf A}_2$ to such destination. Set all the other channel gains equal to $n$ (in the linear deterministic model), and equal to $h$ (in the Gaussian model), where $n \in \mathbb{N}$ and $h \in \mathbb{C}$. We claim that in such network, a destination connected to both relays should be able to decode all messages (note that with our choice of channel gains, there is no message for destinations that are only connected to relay ${\sf A}_2$).

Destinations that are connected to both relays receive the exact same signal (in the linear deterministic model), and the same codewords plus different noise terms (in the Gaussian model). Therefore, since each one of them is able to decode its message, then it should be able to decode the rest of the messages intended for destinations that are connected to both relays. They decode and remove such messages from the received signal. The remaining signal is the same codeword (plus different noise term in Gaussian model) received at the destinations that are only connected to relay ${\sf A}_1$. Therefore, those messages are also decodable at a destination that is connected to both relays.

We assume $1$-local view at the sources, therefore to achieve a normalized sum-rate of $\alpha$, each source should transmit at a rate greater than or equal to $\alpha n - \tau$ (in linear deterministic model). This is due to the fact that from each source's point of view, it is possible that the other S-D pairs have capacity $0$, therefore in order to achieve a normalized sum-rate of $\alpha$, it should transmit at a rate of at least $\alpha n - \tau$. The above argument alongside the MAC capacity at a destination connected to both relays, results in
\begin{equation}
d_{\mathrm{out}}({\sf A}_1) ( \alpha n - \tau ) \leq n \Rightarrow ( d_{\mathrm{out}}({\sf A}_1) \alpha - 1 ) n \leq d_{\mathrm{out}}({\sf A}_1) \tau.
\end{equation}
Since this has to hold for all values of $n$, and $\alpha$ and $\tau$ are independent of $n$, we get $\alpha \leq \frac{1}{d_{\mathrm{out}}({ \sf A}_1)}$.

In the Gaussian case, each source should transmit at a rate greater than or equal to $\alpha \log(1+|h|^2) - \tau$, since from each source's point of view, it is possible that the other S-D pairs have capacity $0$. Similar to the linear deterministic case, we get
\begin{equation}
d_{\mathrm{out}}({\sf A}_1) ( \alpha \log(1+|h|^2) - \tau ) \leq \log(1+d_{\mathrm{out}}({\sf A}_1) \times |h|^2),
\end{equation}
or equivalently
\begin{equation}
( d_{\mathrm{out}}({\sf A}_1) \alpha - 1 ) \log(1+|h|^2) \leq \log(d_{\mathrm{out}}({\sf A}_1)) + d_{\mathrm{out}}({\sf A}_1) \tau.
\end{equation}
Since this has to hold for all values of $h$, and $\alpha$ and $\tau$ are independent of $h$, we get $\alpha \leq \frac{1}{d_{\mathrm{out}}({\sf A}_1)}$.

Now that we have proved the converse for cases (1) and (2), we get
\begin{equation}
\alpha \leq \frac{1}{d_{\max}}.
\end{equation}
This completes the proof of the converse.

\noindent {\bf Achievability:} From Theorem~\ref{THM:CL}, we know that if we create a coded layer coloring with $T = d_{\max}$ in this netwrok, we can achieve the upper bound of $\frac{1}{d_{\max}}$. To do so, consider a set of $T = d_{\max}$ distinct colors, \emph{i.e.} $\mathcal{C} = \{ c_0,c_1,\ldots,c_{T-1} \}$.

Without loss of generality assume that $d_{\mathrm{in}}({\sf A}_1) \geq d_{\mathrm{in}}({\sf A}_2)$. Consider the route-expanded graph $\mathcal{G}_{\sf exp}$. To any node ${\sf V}_{i,j} \in \mathcal{V}_{\sf exp}$, we assign $\mathcal{C}_{i,j} = \emptyset$. We pick one member of $\mathcal{J}_1$ and one member of $\mathcal{J}_2$ randomly, and we assign to the corresponding sources the same color from $\mathcal{C}$ as their transmit color sets; we remove these members from $\mathcal{J}_1$ and $\mathcal{J}_2$. We keep picking two members and assign an unused member of $\mathcal{C}$ to the corresponding sources till $\mathcal{J}_2$ is empty. We assign to each remaining source an unused member of $\mathcal{C}$ randomly as its transmit color set. Note that to do so, we need $d_{\mathrm{in}}({\sf A}_1)$ number of colors. With this choice of color assignment, we have $|\mathcal{T}_{{\sf S}_i}| = 1$, $i=1,\ldots,K$.

In the second layer, we divide S-D pair ID's based on the connection of destinations to relays, \emph{i.e.} $\mathcal{J}^{\prime}_i$ is the set of all the S-D pair ID's such that the corresponding destination is connected to relay ${\sf A}_i$, $i=1,2$, and $\mathcal{J}^{\prime}_{12}$ is the set of all the other S-D pair ID's. Without loss of generality assume that $d_{\mathrm{out}}({\sf A}_1) \geq d_{\mathrm{out}}({\sf A}_2)$. In the second layer of the route-expanded graph $\mathcal{G}_{\sf exp}$, we pick one member of $\mathcal{J}^{\prime}_1$ and one member of $\mathcal{J}^{\prime}_2$ randomly, and we assign to the all nodes with these S-D pair ID's, the same color from $\mathcal{C}$ as their transmit color sets. We remove these members from $\mathcal{J}^{\prime}_1$ and $\mathcal{J}^{\prime}_2$. We keep picking two members and assign an unused member of $\mathcal{C}$ to the corresponding nodes in the route-expanded graph $\mathcal{G}_{\sf exp}$ till $\mathcal{J}^{\prime}_2$ is empty. We assign to the nodes corresponding to each remaining S-D pair ID an unused member of $\mathcal{C}$ randomly as their transmit color sets. Therefore, we need $d_{\mathrm{out}}({\sf A}_1)$ colors. At any node ${\sf V}_{i,j} \in \mathcal{V}_{\sf exp}$, we have $|\mathcal{T}_{{\sf V}_{i,j}}| = 1$. The total number of colors $T$ is therefore equal to the maximum degree of the nodes in $\mathcal{G}$, \emph{i.e.} $d_{\max}$. We also set
\begin{align}
\label{eq:receivesetK2K}
\mathcal{R}_{i,j} = \mathcal{T}_{i^\prime,j}: \forall {\sf V}_{i,j} \in \mathcal{V}_{\sf exp}, \text{ and } (i^\prime,i) \in \mathcal{E}.
\end{align}
Note that with this color assignment, we have 
\begin{align}
\label{eq:DreceivesetK2K}
\mathcal{R}_{{\sf D}_j} = \mathcal{T}_{i^\prime,j} \text{ such that } {\sf V}_{i^\prime,j} \text {is in the second layer}.
\end{align}

This coloring guarantees that the nodes with different pair ID's connected to the same receive node, are assigned different colors. Moreover, we only assign colors such that $\mathcal{C}_{i,j} = \emptyset$, $|\mathcal{T}_{i,j}|  = 1$, and $|\mathcal{R}_{{\sf D}_j}| = 1$. With the given argument, it is straight forward to verify that the described coloring satisfies conditions \textbf{C.1-C.7} in Section~\ref{section:CL}:
\begin{itemize}
\item \textbf{C.1} is satisfied since to any two nodes that have different S-D pair ID's, we have assigned different colors.

\item \textbf{C.2}, \textbf{C.3}, and \textbf{C.5} are satisfied since we have set $\mathcal{C}_{i,j} = \emptyset$ for all ${\sf V}_{i,j} \in \mathcal{V}_{\sf exp}$.

\item \textbf{C.4} is satisfied due to (\ref{eq:DreceivesetK2K}).

\item \textbf{C.6} is satisfied due to (\ref{eq:receivesetK2K}) and that all nodes in the same layer with the same S-D pair ID are assigned the same transmit color.

\item \textbf{C.7} is satisfied since with the given coloring at any node ${\sf V}_{i,j} \in \mathcal{V}_{\sf exp}$, there are no interferers.
\end{itemize}

Hence by Theorem~\ref{THM:CL}, we know that a normalized sum-rate of $\alpha = \frac{1}{T}$ is achievable. This completes the proof of the theorem for the case of $K \times 2 \times K$ networks. Note that are coloring is in fact, an Independent Layer coloring and hence, the normalized sum-capacity is achievable by MIL scheduling.

\noindent {\bf Extension}: Now, we need to extend our result to $K \times \underbrace{2 \times \ldots \times 2}_M \times K$ networks. Consider a non-interfering network as defined before. Then for the proof of converse, we set all the channel gains among relays equal to $n$ (in the linear deterministic model) and equal to $h$ (in the Gaussian model), and all the other channel gains as in the proof of the converse for the $K \times 2 \times K$ network. Since each destination is able to decode its message, each relay ${\sf V}_i \in \mathcal{V}$ can decode any message $\hbox{W}_j: j \in \mathcal{J}_{{\sf V}_i}$ (note that with the choice of channel gains, there exists at most one path for each S-D pair). Hence, we can consider all connected relays as one relay. % This can be seen in Figure~\ref{k2kmultilayer}(a), here we can merge relays ${\sf A}_1$ and ${\sf B}_1$ and the same thing for relays ${\sf A}_2$ and ${\sf B}_2$. 
Then the argument for the converse of the $K \times 2 \times K$ network is also valid here. An achievability similar to the one provided for the $K \times 2 \times K$ network works here.

\begin{figure}[ht]
\centering
\subfigure[]{\includegraphics[height = 3cm]{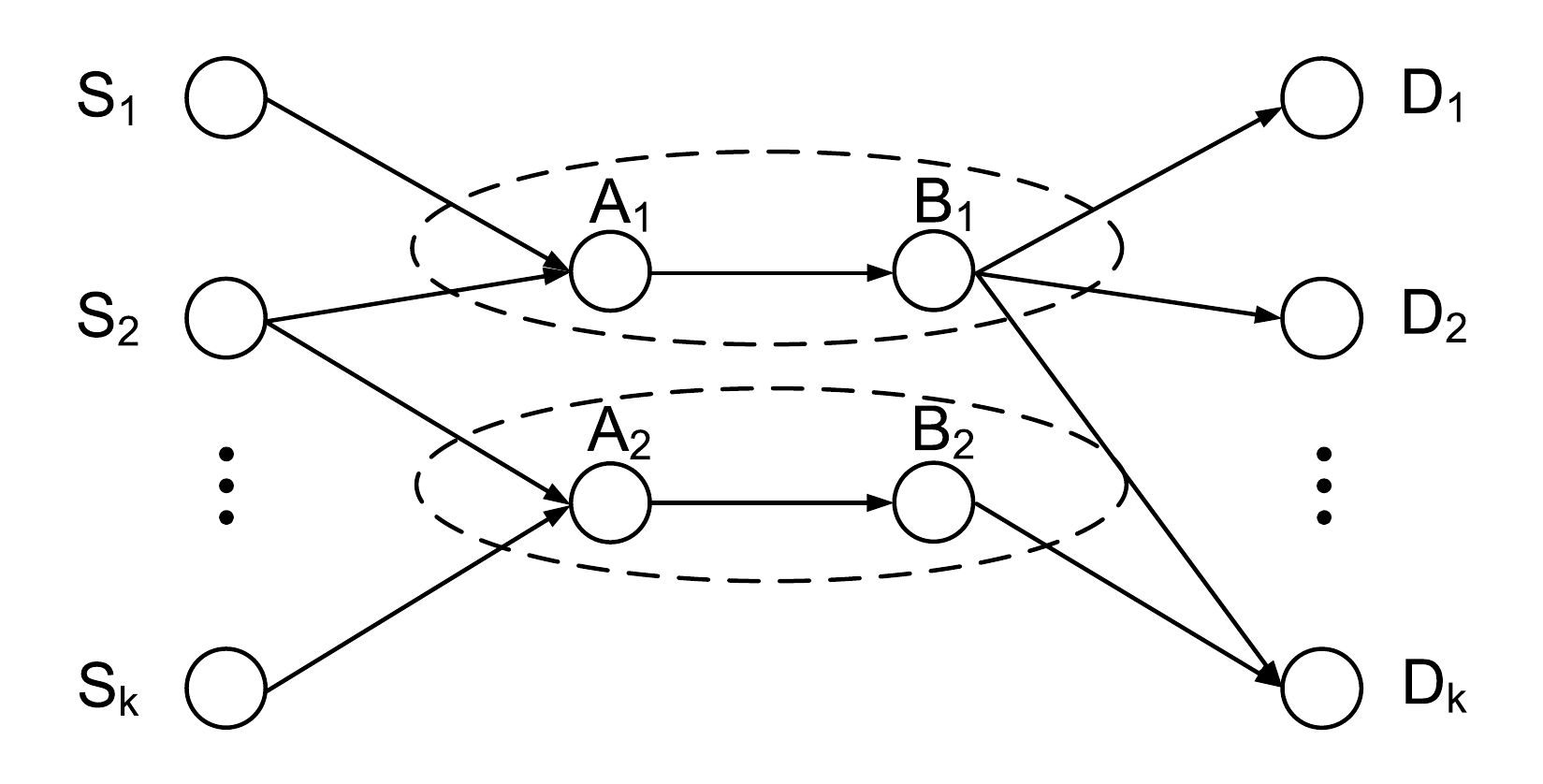}}
\hspace{0.5in}
\subfigure[]{\includegraphics[height = 3cm]{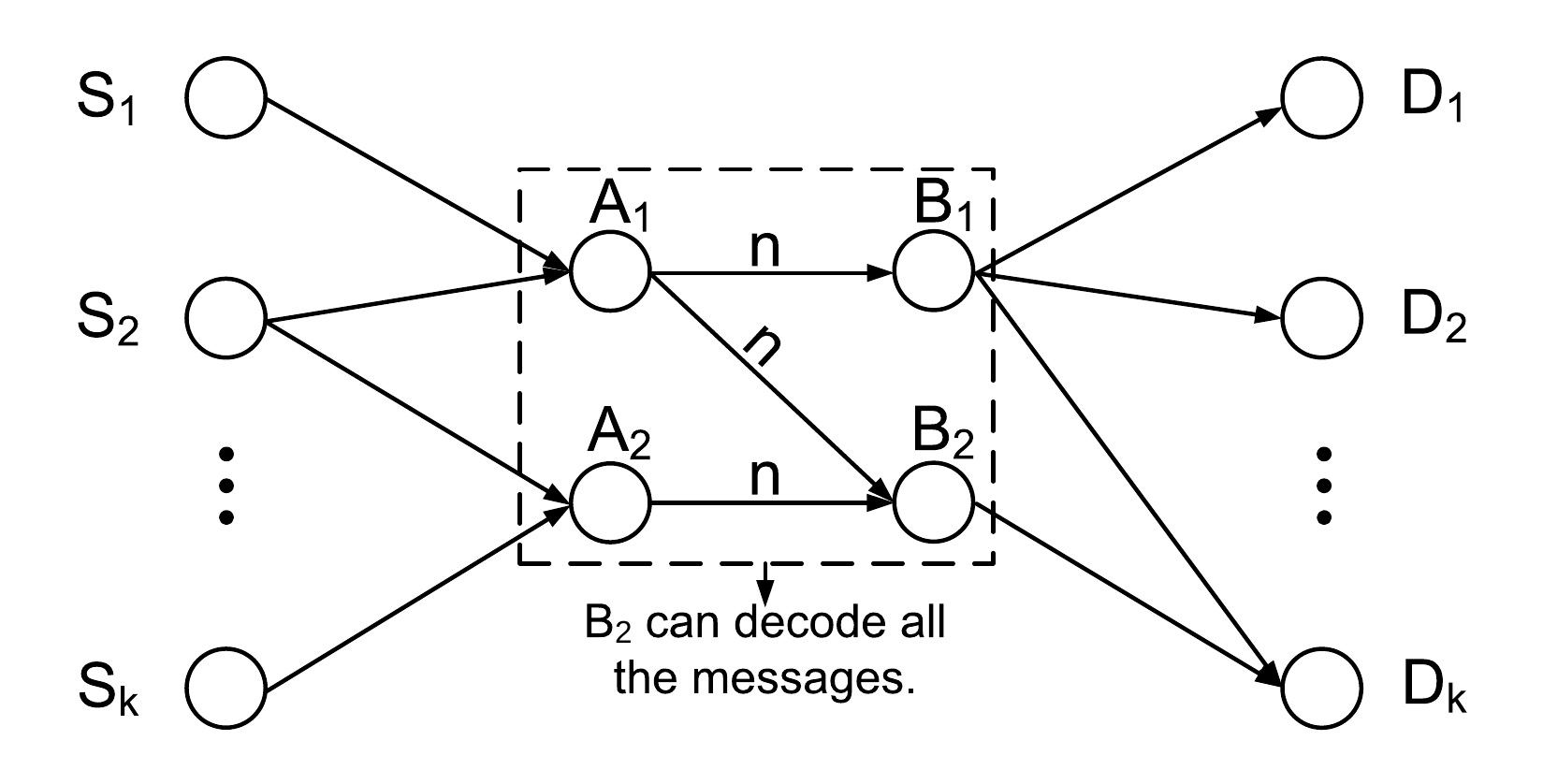}}
\caption{\it (a) A non-interfering $K \times 2 \times 2 \times K$ network, and (b) Relay ${\sf B}_2$ can decode all the messages.\label{k2kmultilayer}}
\end{figure}

If the condition does not hold, \emph{i.e.} the network is not non-interfering, we pick two paths from nodes in $\mathcal{V}_2$ to nodes in $\mathcal{V}_{L-1}$, one connecting 
${\sf V}^2_i$ to ${\sf V}^{M+1}_j$, and one connecting ${\sf V}^2_{\bar{i}}$ to ${\sf V}^{M+1}_{\bar{j}}$, $i \neq j$, $i,j \in \{ 1,2 \}$. Since the network is not non-interfering, there exists a path from ${\sf V}^2_i$ to ${\sf V}^{M+1}_{\bar{j}}$ or from ${\sf V}^2_{\bar{i}}$ to ${\sf V}^{M+1}_j$, pick such a path as well. % a path the two paths share at least one node, call the first node (the node with the smallest layer index) that they share ${\sf V}_{i^\ast}$. 
Set all the channel gains on these three paths equal to $n$ (for the linear deterministic model) and equal to $h$ (for the Gaussian model). Use the channel gain assignment as in the proof of the converse for the $K \times 2 \times K$ network, for the links in the first and the last layer. Set all the other channel gains equal to $0$. 

With the assignment of channel gains described above, we can find a node ${\sf V}^l_{i^\ast}$ that is connected to all sources, see Figure~\ref{k2kmultilayer}(b). This node can decode all the messages of those S-D pairs $j$ such that $j \in \mathcal{J}_{{\sf V}^l_{i^\ast}}$ and $j \notin \mathcal{J}_{{\sf V}^l_{\bar{i^\ast}}}$. After decoding and removing these messages, node ${\sf V}^l_{i^\ast}$ receives the same codeword (with different noise term for the Gaussian model) as node ${\sf V}^l_{\bar{i^\ast}}$. Therefore, it should be able to decode all other messages as well. From the MAC upper bound at node ${\sf V}^l_{i^\ast}$ that can decode all messages, we have $\alpha \leq \frac{1}{K}$ (linear deterministic or Gaussian). We can achieve this upper bound by TDMA (note that TDMA is a special case of our coloring), and this completes the proof.

\end{proof}

\subsection{Folded-chain networks}

The previous theorem proved the optimality of interference avoidance techniques for a specific class of networks. In this subsection, we consider networks in which we need to incorporate coding in order to achieve the normalized sum-capacity with $1$-local view. We start with a single layer network that is motivated by a downlink cellular system as follows. Consider $3$ base stations (sources), \emph{i.e.} ${\sf S}_1, {\sf S}_2$ and ${\sf S}_3$, and $3$ receivers (destinatinations), \emph{i.e.} ${\sf D}_1, {\sf D}_2$ and ${\sf D}_3$, as in Figure~\ref{fig:downlink}. Each source ${\sf S}_i$ wishes to comunicate its message to destination ${\sf D}_i$, $i=1,2,3$. The coverage area for each one of the base stations is denoted by circle around it. Here, each destination is close to the boundaries of the coverage area of its correponding source, such that it only receives interference from one other source. Note that in this case, interference is comparable to the desired signal at each destination and hence, interference management is of crutial importance. The corresponding network $\mathcal{G}$ for the downlink cellular system described above, is depicted in Figure~\ref{fig:downlinkG}. From Section~\ref{section:examples}, we know that the normalized sum-capacity of this network with $1$-local view is $\alpha^\ast = \frac{1}{2}$ and can be achieved by implementing coding at the nodes; we also remind that interference avoidance techniques can only achieve a normalized sum-rate of $\alpha = \frac{1}{3}$ in this network. 

We extend the idea of such networks to $K$-user case via the following definition and as we will show, the normalized sum-capacity for this class of networks is achievable using repetition coding at sources.

\begin{figure}[ht]
\centering
\includegraphics[width = 5.5cm]{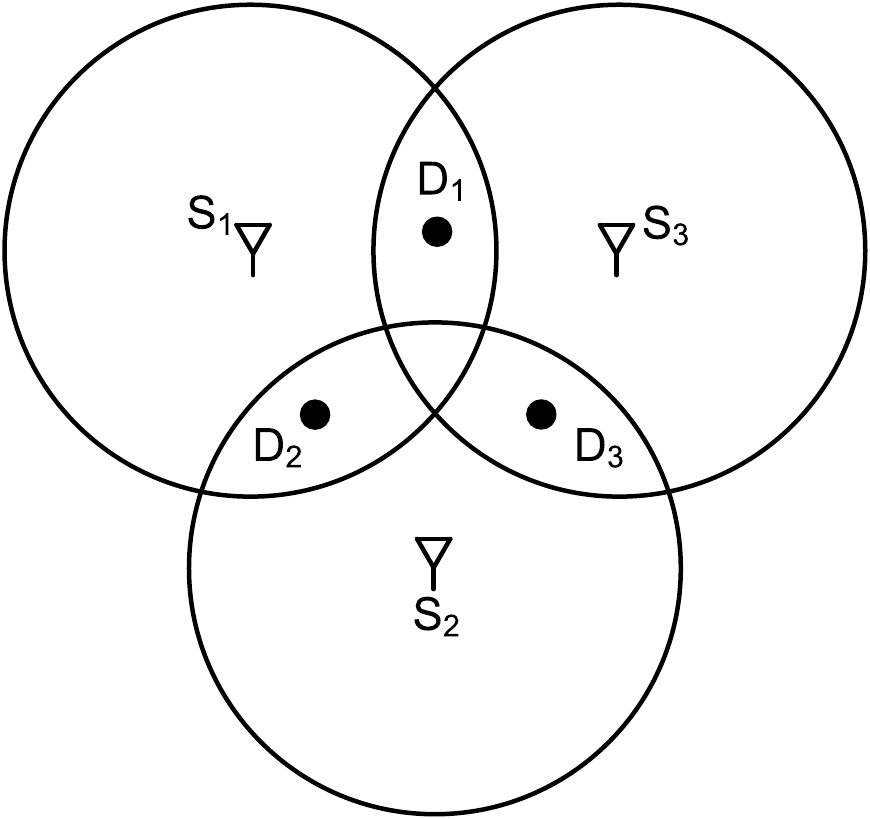}
\caption{\it Downlink cellular network with $3$ base stations and $3$ destinations. \label{fig:downlink}}
\end{figure}

\begin{figure}[ht]
\centering
\includegraphics[height = 3cm]{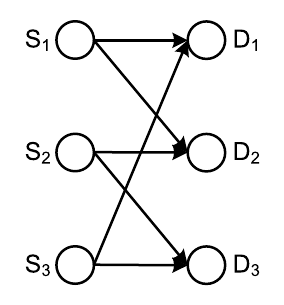}
\caption{\it The corresponding network $\mathcal{G}$ for the downlink cellular system illustrated in Figure~\ref{fig:downlink}.\label{fig:downlinkG}}
\end{figure}

\begin{definition}
A {\it single-layer $(K,m)$ folded-chain} network is a single-layer network with $K$ S-D pairs. In this network source ${\sf S}_i$ is connected to destinations with ID's $1 + \left[ \{ (i-1)^+ + (j-1) \} \textbf{ mod } K \right]$ where $i = 1,\ldots,K$, $j=1,\ldots,m$ and $(i-1)^+ = \max\{(i-1),0\}$. We assume $1 \le m \leq K$.
\end{definition}

\begin{lemma}
\label{THM:cyclic1}
The normalized sum-capacity of a single-layer $(K,m)$ folded-chain network (linear deterministic or Gaussian) with $1$-local view is $\alpha^\ast = \frac{1}{m}$ 
% \begin{equation}
% \alpha^*=
% \left\{ \begin{array}{ll}
% 1 & K = 1\\
% \frac{1}{m} & K \geq 2\\
% \end{array} \right.
% \end{equation}
and is achieved by CL scheduling.
\end{lemma}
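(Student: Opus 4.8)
The plan is to prove matching achievability and converse bounds, $\alpha \ge 1/m$ and $\alpha^\ast \le 1/m$, for the single-layer $(K,m)$ folded-chain. Throughout it helps to record the combinatorial structure: since the network is single-layer, each induced subgraph $\mathcal{G}_{jj}$ is the single edge ${\sf S}_j \to {\sf D}_j$, and ${\sf S}_i$ reaches exactly ${\sf D}_i, {\sf D}_{i+1}, \ldots, {\sf D}_{i+m-1}$ (indices modulo $K$). Equivalently, ${\sf D}_j$ is fed by the length-$m$ window of sources $\{{\sf S}_{j-m+1}, \ldots, {\sf S}_j\}$, and these $m$ sources are precisely the ones sharing the common receiver ${\sf D}_j$. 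The case $m=K$ is the fully-connected network and $m=2$ is covered by Lemma~\ref{lemma:generalhalf}; the real content is the banded regime $2 < m < K$, where coding is genuinely needed because the source conflict graph can require more than $m$ colors (e.g.\ it is the complete graph for $K=4, m=3$, so interference avoidance would give only $1/4$).

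For achievability I would invoke Theorem~\ref{THM:CL} and exhibit a valid Coded Layer coloring of $\mathcal{G}_{\sf exp}$ using $T=m$ colors $\{c_0,\ldots,c_{m-1}\}$; since there are no relays I take $\mathcal{C}_{i,j}=\emptyset$ everywhere, so only repetition coding at the sources and cancellation at the destinations are used. Start from the periodic assignment ${\sf S}_i \mapsto c_{(i-1) \bmod m}$. When $m \mid K$ this already makes the color of ${\sf S}_j$ unique inside ${\sf D}_j$'s window, so each destination listens on that single color and sees no interferer (condition \textbf{C.6}, no-interferer case), giving pure per-layer interference avoidance. When $m \nmid K$, the $r = K \bmod m$ positions where the assignment wraps around create destinations at which the desired source shares a color with an interferer; at those seams I would enlarge the transmit color sets of the offending sources into short color-arcs (as in the motivating example, where one source receives the full set $\{B,W\}$), so that at each affected destination the interferers coincide on one shared color while each appears an even number of times across an appropriately chosen $\mathcal{R}_{{\sf D}_j}$. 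Summing the received signal over $\mathcal{R}_{{\sf D}_j}$ then cancels every interferer (even multiplicity) while the desired signal survives with odd multiplicity, which is exactly the shared-color cancellation enforced by \textbf{C.6}. After verifying \textbf{C.1}--\textbf{C.6}, Theorem~\ref{THM:CL} yields $\alpha = 1/m$.

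For the converse I would fix a worst-case realization and force a single receiver to decode an entire window. Set all gains on the relevant links equal --- $n_{ij}=n$ in the deterministic model, $h_{ij}=h$ in the Gaussian model --- so the $m$ window sources $\{{\sf S}_{j-m+1},\ldots,{\sf S}_j\}$ superimpose at their common receiver ${\sf D}_j$, zeroing any wrap-around links outside the window. Because the nodes have only $1$-local view, each ${\sf S}_i$ must use a codebook independent of the other pairs' channels and hence, as at the beginning of the converse of Theorem~\ref{THM:K22K}, must transmit at rate at least $\alpha C_i - \tau = \alpha n - \tau$ (resp.\ $\alpha \log(1+|h|^2)-\tau$). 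The key step is to argue that ${\sf D}_j$ must actually decode \emph{all} $m$ window messages $W_{j-m+1},\ldots,W_j$: since ${\sf S}_j$'s encoder and ${\sf D}_j$'s decoder recover $W_j$ for \emph{every} choice of the interfering messages $W_{j-m+1},\ldots,W_{j-1}$ (the source cannot adapt to them), the $n$-bit (resp.\ $\log(1+m|h|^2)$) received space at ${\sf D}_j$ must simultaneously accommodate the desired message and the full ensemble of interfering-message patterns. A MAC/counting bound then gives $\sum_{i=j-m+1}^{j} R_i \le n + o(n)$ (resp.\ its Gaussian analogue). Combining with the $m$ per-source lower bounds yields $m(\alpha n - \tau) \le n + o(n)$, and letting $n \to \infty$ gives $\alpha \le 1/m$; the Gaussian case follows the same template with $\log(1+|h|^2)$ in place of $n$, exactly as in Theorem~\ref{THM:K22K}.

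The main obstacle is this joint-decodability step in the converse. One cannot simply hand ${\sf D}_j$ the received signals of neighboring destinations as genie side information: each extra observation adds up to $n$ bits to the receiver's space and inflates the sum-rate bound from $n$ to $mn$, destroying the argument. Equal gains keep the receiver's space at $n$ bits but align the $m$ sources into a single XOR, from which the individual messages are not separable, so joint decodability must be extracted from the robustness forced by $1$-local view rather than from a genie or from level-staircasing. Making this quantitative --- showing that universal reliable decoding of $W_j$ against all interference patterns of total rate $(m-1)(\alpha n - \tau)$ forces $\sum R_i \le n$ --- is the delicate part, and is where I would lean on the single-layer converse machinery of \cite{VaneetIT}. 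On the achievability side the analogous subtlety is to choose the seam color-arcs consistently around the cycle so that a repetition introduced to fix one destination does not create a new, uncancellable collision at a neighbor; a direct check on small residues $r=K \bmod m$ suggests the arcs can always be nested to avoid such cascades.
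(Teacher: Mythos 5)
Your achievability argument is essentially the paper's own: the paper likewise invokes Theorem~\ref{THM:CL} with $T=m$ colors and $\mathcal{C}_{i,j}=\emptyset$, assigning the sources transmit color sets that are intervals (``arcs'') of colors --- explicitly, for $m<K<2m$ it sets $m'=K-m+1$, puts $c_j\in\mathcal{T}_{{\sf S}_i}$ iff $j+1\le i\le j+m'$, and lets $\mathcal{R}_{{\sf D}_i}=\{c_{i-1}\}$ for $i\le m$ and $\mathcal{R}_{{\sf D}_{m+r}}=\{c_0,\ldots,c_r\}$ --- and it handles general $K$ not by patching seams around the cycle but by partitioning the pairs into consecutive blocks of $2m-1$ and running the scheme independently per block (the last block with $m'=\max\{r-m+1,1\}$). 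Your seam-repair sketch would still have to be made precise, but the idea is the same.

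The converse, however, has a genuine gap, and it sits exactly at the step you flag as delicate. Your chosen realization --- all $m$ window sources connected to the single receiver ${\sf D}_j$ with equal gains, everything else zeroed --- cannot force joint decodability, and the claimed ``MAC/counting bound'' fails on it because the interfering codewords can align. Concretely, in the deterministic model let two interferers use the same linear codebook, $X_{i_1}=GW_{i_1}$ and $X_{i_2}=GW_{i_2}$: each is decodable at its own interference-free destination, yet their combined contribution at ${\sf D}_j$ is $G(W_{i_1}\oplus W_{i_2})$, occupying only $\alpha n$ dimensions instead of $2\alpha n$. The received space at ${\sf D}_j$ thus need only accommodate $R_j$ plus the entropy of the aligned sum, and counting yields nothing better than $\alpha\le 1/2$ for every $m\ge 2$; no appeal to ``robustness from $1$-local view'' can rescue this on that same realization, because a topology-based alignment strategy achieving per-user rate $n/2$ is itself consistent with $1$-local view there. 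The paper's fix is precisely the level-staircase you set aside: assign gain $n$ (resp.\ $h$) to the links from ${\sf S}_i$ to ${\sf D}_i,{\sf D}_{i+1},\ldots,{\sf D}_m$ for $i=1,\ldots,m$, and zero all other links. Then ${\sf D}_1$ hears only ${\sf S}_1$; ${\sf D}_k$ decodes $W_k$, reconstructs and subtracts its codeword, and is left with exactly ${\sf D}_{k-1}$'s received signal, so by induction ${\sf D}_m$ decodes all of $W_1,\ldots,W_m$. The MAC bound at ${\sf D}_m$, combined with the per-source requirement of rate $\alpha n-\tau$, gives $m(\alpha n-\tau)\le n$, hence $\alpha\le 1/m$. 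The nested connectivity is what defeats alignment --- every interferer at ${\sf D}_m$ is also heard at a strictly less cluttered receiver below it --- and no further machinery from \cite{VaneetIT} is needed.
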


\begin{proof}

\noindent {\bf Converse}: Assume that a normalized sum-rate of $\alpha$ is achievable, \emph{i.e.} there exists a transmission strategy with $1$-local view, such that for all channel realizations, it achieves a sum-rate satisfying $\sum_{i=1}^K{R_i} \ge \alpha C_{\mathrm{sum}} -\tau$ with error probabilities going to zero as $N \rightarrow \infty$ and for some constant $\tau \in R$ independent of the channel gains. We will show that $\alpha \leq \frac{1}{m}$.

%Assume that a normalized sum-rate of $\alpha$ is achievable, \emph{i.e.} there exists a constant $\tau \in \mathbb{R}$ independent of the channel gains and we have $\sum_{i=1}^K{R_i} \ge \alpha C_{\mathrm{sum}} -\tau$ with error probabilities going to zero as $N \rightarrow \infty$. 

The proof of the converse for $K=1$ is trivial. Consider a single-layer $(K,m)$ folded-chain network, where the channel gain of a link from source $i$ to destinations $i, i+1, \ldots, m$ is equal to $n$ (for the linear deterministic model) or $h$ (for the Gaussian model), $i=1,2,\ldots,m$, and all the other channel gains are equal to zero, where $n \in \mathbb{N}$ and $h \in \mathbb{C}$. See Figure~\ref{fig:kdcyclic} for a depiction.

\begin{figure}[ht]
\centering
\includegraphics[width = 3.5cm]{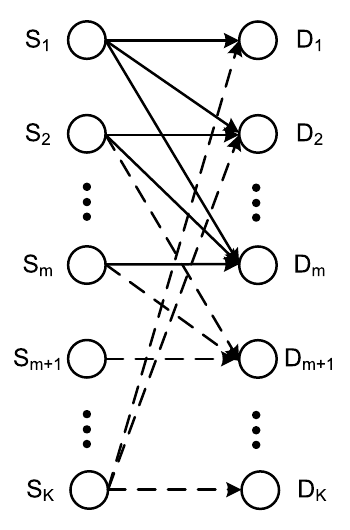}
\caption{\it Channel gain assignment in a single-layer $(K,m)$ folded-chain network. All solid links have capacity $n$ (for the linear deterministic model) or $h$ (for the Gaussian model), and all dashed links have capacity $0$. \label{fig:kdcyclic}}
\end{figure}

Suppose, a normalized sum-rate of $\alpha$ is feasible in this network with $1$-local view. Each source due to its local view of the network, should transmit at a rate greater than or equal to $\alpha n - \tau$ (for the linear deterministic model) or $\alpha \log \left( 1 + |h|^2 \right) - \tau$ (for the Gaussian model), to guarantee a normalized sum-rate of $\alpha$. Destination $1$ receives no interference and decodes its message. Destination $2$, decodes its message and removes it from the received signal, what is left is exactly the same as what destination $1$ receives, therefore destination $2$ is able to decode $\hbox{W}_1$ and $\hbox{W}_2$. If we continue this argument, we see that destination ${\sf D}_m$ is be able to decode all $\hbox{W}_i$'s, $i=1,2,\ldots,m$. The MAC capacity at destination ${\sf D}_m$, for the linear deterministic model gives us
\begin{equation}
m ( \alpha n - \tau ) \leq n \Rightarrow ( m \alpha - 1 ) n \leq d \tau.
\end{equation}
Since this has to hold for all values of $n$, and $\alpha$ and $\tau$ are independent of $n$, we get $\alpha \leq \frac{1}{m}$. For the Gaussian model, the MAC capacity at destination ${\sf D}_m$ gives us
\begin{equation}
m ( \alpha \log(1+|h|^2) - \tau ) \leq \log(1+ m \times |h|^2),
\end{equation}
which results in
\begin{equation}
( m \alpha - 1 ) \log(1+|h|^2) \leq \log(m) + m \tau.
\end{equation}
Since this has to hold for all values of $h$, and $\alpha$ and $\tau$ are independent of $h$, we get $\alpha \leq \frac{1}{m}$.

\noindent {\bf Achievability}: We will present a coded layer coloring with $m$ colors for the single-layer $(K,m)$ folded-chain network. Then by Theorem~\ref{THM:CL}, we know that the upper bound of $\frac{1}{m}$ is achievable. Suppose $\mathcal{C} = \{ c_0, c_1, \ldots, c_{m-1} \}$, and assume that $m < K < 2m$ (we will later generalize the achievability scheme for arbitrary $K$). Note that the route-expanded graph of this network is the same as itself (since we have a single-layer network). Let $m^\prime = K - m + 1$. To each source ${\sf S}_i$, $i=1,\ldots,K$, we assign $\mathcal{C}_{{\sf S}_i} = \emptyset$ and $\mathcal{T}_{{\sf S}_i}$ as follows,
\begin{equation}
\label{eq:Kmcyclictransmitset}
c_j \in \mathcal{T}_{{\sf S}_i} \Leftrightarrow j+1 \leq i \leq j+m^{'}, \quad j = 0,1,\ldots,m-1.
\end{equation}

We also set
\begin{equation}
\label{eq:Kmcyclicreceiveset}
\mathcal{R}_{{\sf D}_i} =
\left\{ \begin{array}{ll}
\{ c_{i-1} \} & \text{if }1 \le i \le m \\
\{ c_0, \ldots, c_r \} & \text{if } i = m + r, 1 \le r \le K-m \\
\end{array} \right.
\end{equation}
and $\mathcal{R}_{{\sf D}_i} = \emptyset$, $i=1,\ldots,K$.

Since $\mathcal{C}_{{\sf S}_i} = \emptyset$, for any $i \in \{1,\ldots,K\}$, it is straight forward to verify that this coloring satisfies conditions \textbf{C.1-C.7} in Section~\ref{section:CL}:
\begin{itemize}
\item \textbf{C.1} is satisfied since to any two nodes that have different S-D pair ID's, we have assigned different colors.

\item \textbf{C.2}, \textbf{C.3}, and \textbf{C.5} are satisfied since we have set $\mathcal{C}_{i,j} = \emptyset$ for all ${\sf V}_{i,j} \in \mathcal{V}_{\sf exp}$.

\item \textbf{C.4} is satisfied due to (\ref{eq:Kmcyclictransmitset}) and (\ref{eq:Kmcyclicreceiveset}).

\item \textbf{C.6} is satisfied since from (\ref{eq:Kmcyclictransmitset}) and (\ref{eq:Kmcyclicreceiveset}) we have $|\mathcal{R}_{{\sf D}_i} \cap \mathcal{T}_{{\sf S}_i}| = 1$, $i=1,\ldots,K$.

\item \textbf{C.7} is satisfied since with the given coloring at any node ${\sf V}_{i,j} \in \mathcal{V}_{\sf exp}$ due to (\ref{eq:Kmcyclictransmitset}), all interferers share  a common color in their transmit color sets, which is in $\mathcal{R}_{{\sf D}_i} \setminus \mathcal{T}_{{\sf S}_i}$.
\end{itemize}

Therefore from Theorem~\ref{THM:CL}, we know that we can achieve $\alpha = \frac{1}{m}$. 

For general $K$ the achievability works as follows. Suppose, $K = c (2m - 1) + r$, where $c \geq 1$ and $0 \leq r < (2 m - 1)$, we implement the scheme for S-D pairs $1,2,\ldots,2m-1$ as if they are the only pairs in the network. The same for source-destination pairs $2m, 2m + 1,\ldots, 4m -2$ and etc. Finally, for the last $r$ S-D pairs, we implement the scheme with $m^{'} = \max \{ r - m + 1, 1 \}$. This completes the proof of the lemma.
\end{proof}

\begin{remark}
MIL coloring for single-layer networks is the same as a vertex coloring for the corresponding route-adjacency graph such that any two connected vertices are assigned different colors. Therefore, for a single-layer $(K,m)$ folded-chain network with $1$-local view, MIL scheduling achieves a normalized sum-rate of $\alpha = \frac{1}{\xi}$ where $\xi$ is the chromatic number of the correponding route-adjacency graph. For instance, for a single-layer $(K,m)$ folded-chain network with $1$-local view where $\frac{K}{2} \leq m \leq K$, MIL scheduling achieves a normalized sum-rate of $\alpha = \frac{1}{K}$, whereas MCL scheduling achieves a normalized sum-rate of $\alpha = \frac{1}{m} \geq \frac{1}{K}$ (note that for $\frac{K}{2} \leq m \leq K$, the route-adjacency graph of a single-layer $(K,m)$ folded-chain network is a complete graph). As another example, for a single-layer $(K,2)$ folded-chain network with $1$-local view where $K$ is an even number, MIL scheduling achieves a normalized sum-rate of $\alpha = \frac{1}{2}$ which turns out to be the normalized sum-capacity (note that the route-adjacency graph of a single-layer $(K,2)$ folded-chain network is a cycle and when $K=2$, the chromatic number is $\xi = 2$).
\end{remark}

\begin{remark}
Although MIL scheduling is not necessarily optimal for the single-layer $(K,m)$ folded-chain networks, in some cases, slight modifications of topology can result in a network where MIL scheduling is optimal. For instance consider a single-layer $(K,2)$ folded-chain network where $K$ is an odd number, see Figure~\ref{fig:K2odd}(a). If we remove the edge from source ${\sf S}_K$ to destination ${\sf D}_1$, we will have a network for which chromatic number is $\xi = 2$, see Figure~\ref{fig:K2oddremove}(a) and Figure~\ref{fig:K2oddremove}(b). MIL scheduling achieves a normalized sum-rate of $\alpha= \frac{1}{2}$ for the network in Figure~\ref{fig:K2oddremove}(a) with $1$-local view, and from Lemma~\ref{lemma:generalhalf}, we know that this is in fact the normalized sum-capacity in this case.
\end{remark}

\begin{figure}[ht]
\centering
\subfigure[]{\includegraphics[height = 4cm]{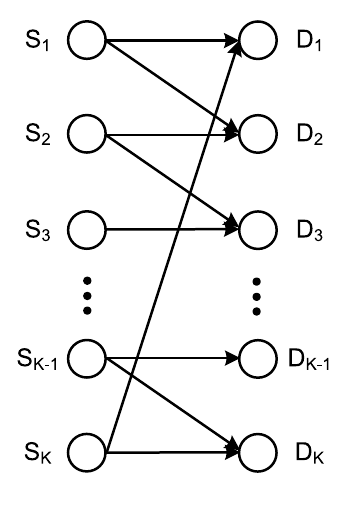}}
\hspace{0.4in}
\subfigure[]{\includegraphics[height = 3.5cm]{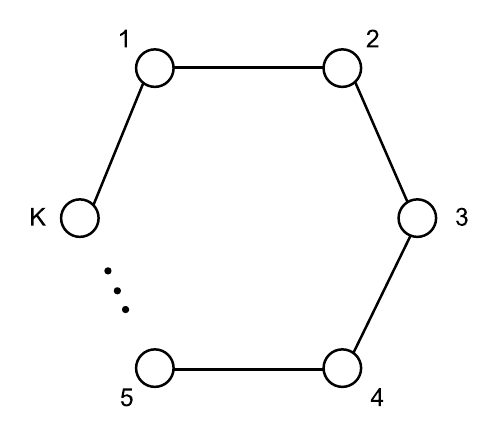}}
\caption{\it (a) A single-layer $(K,2)$ folded-chain network with $K$ odd, and (b) its conflict graph with $\xi = 3$.\label{fig:K2odd}}
\end{figure}

\begin{figure}[ht]
\centering
\subfigure[]{\includegraphics[height = 4cm]{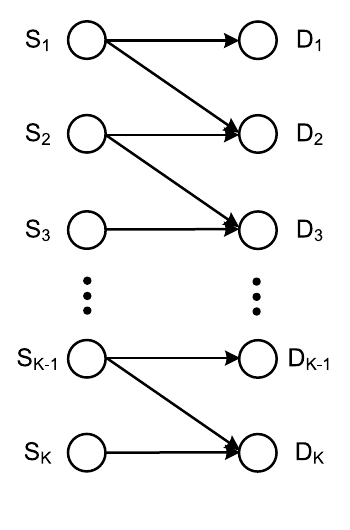}}
\hspace{0.4in}
\subfigure[]{\includegraphics[height = 3.5cm]{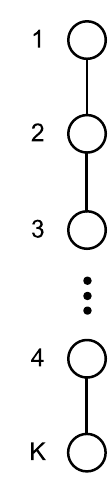}}
\caption{\it (a) The network created by removing an edge from a single-layer $(K,2)$ folded-chain network with $K$ odd, and (b) its conflict graph with $\xi = 2$.\label{fig:K2oddremove}}
\end{figure}

In the previous lemma, we investigated a class of networks in which repetition coding achieves the normalized sum-capacity. Next, we consider a class of networks for which we need to incorporate network coding in order to achieve the normalized sum-capacity with $1$-local view. This class of networks is created by cascading two single-layer $(K,m)$ folded-chain networks as defined below.

\begin{definition}
A {\it two-layer $(K,m)$ folded-chain} network is a two-layer network with $K$ S-D pairs and $K$ relays in the middle. Each S-D pair $i$ has $m$ disjoint paths, through relays with index $1 + \left[ \{ (i-1)^+ + (j-1) \} \textbf{ mod } K \right]$ where $i = 1,\ldots,K$, $j=1,\ldots,m$. We assume $1 \le m \leq K$.
\end{definition}

\begin{theorem}
\label{THM:cyclic2}
The normalized sum-capacity of a two-layer $(K,m)$ folded-chain network (linear deterministic or Gaussian) with $1$-local view is upper bounded by $\alpha = \frac{1}{m}$ 
% \begin{equation}
% \alpha=
% \left\{ \begin{array}{ll}
% 1 & K = 1\\
% \frac{1}{m} & K \geq 2\\
% \end{array} \right.
% \end{equation}
and is achievable by CL scheduling for $m \in \{ 1,2,K-1,K \}$. 
\end{theorem}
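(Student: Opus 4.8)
The plan is to establish the upper bound $\alpha \le \tfrac{1}{m}$ for every $m$ by a single bottleneck argument, and then to prove achievability for each of the four listed values of $m$ by exhibiting a coded layer coloring of $\mathcal{G}_{\sf exp}$ with $T=m$ colors and appealing to Theorem~\ref{THM:CL}. Throughout I would exploit the cyclic structure of the two-layer $(K,m)$ folded-chain: the sources feeding a relay ${\sf V}^2_r$ are exactly $\{{\sf S}_{r-m+1},\ldots,{\sf S}_r\}$ (cyclically), and the destinations it feeds are exactly $\{{\sf D}_{r-m+1},\ldots,{\sf D}_r\}$, so feeders and receivers of any relay carry the same index set.

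For the converse I would adapt the bottleneck-MAC argument of Lemma~\ref{THM:cyclic1}, but place the bottleneck at a \emph{relay} rather than a destination. Fixing $r=m$, I would assign channel gain $n$ (linear deterministic) or $h$ (Gaussian) to the links ${\sf S}_i \to {\sf V}^2_m$ and ${\sf V}^2_m \to {\sf D}_i$ for $i=1,\ldots,m$, and gain $0$ to every other link; this is consistent with the topology side information $\mathsf{SI}$. Then ${\sf V}^2_m$ is the only active node, receiving $Y_{{\sf V}^2_m}=\sum_{i=1}^m {\bf S}^{q-n}X_{{\sf S}_i}$, and each destination ${\sf D}_i$, $i\le m$, observes ${\bf S}^{q-n}X_{{\sf V}^2_m}$ (the same codeword plus independent noise in the Gaussian case). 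Since ${\sf V}^2_m$ computes $X_{{\sf V}^2_m}$ itself, it can simulate each destination's channel and decoder and hence recover all of $\hbox{W}_1,\ldots,\hbox{W}_m$; the MAC sum bound at ${\sf V}^2_m$ gives $\sum_{i=1}^m R_i \le n$ (resp.\ $\log(1+m|h|^2)$). Because the $1$-local view forces $R_i \ge \alpha n - \tau$ (resp.\ $\alpha\log(1+|h|^2)-\tau$), exactly as in Lemma~\ref{THM:cyclic1}, combining and letting $n$ (resp.\ $|h|$) grow yields $\alpha \le \tfrac{1}{m}$.

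For achievability I would treat the four cases separately. When $m=1$ every pair is a vertex-disjoint path, so a single color with all $\mathcal{C}_{i,j}=\emptyset$ trivially satisfies \textbf{C.1}--\textbf{C.6} and gives $\alpha=1$. When $m=K$ both layers are complete, and plain TDMA --- assigning pair $i$ the single color $c_{i-1}$ throughout, with empty coding sets --- is a valid Independent Layer coloring with $K$ colors and no interferers in any active slot, matching $\alpha=\tfrac1K$. The case $m=2$ is the network-coding case exemplified by the two-layer $(3,2)$ network of Figure~\ref{fig:333}: I would color the first layer exactly as in the single-layer $(K,2)$ scheme of Lemma~\ref{THM:cyclic1}, so that, by the argument of Theorem~\ref{THM:CL}, every relay reconstructs the signal it would see inside its own induced subgraph; I would then color the second layer by the mirror-image pattern, augmented with a nonempty coding set $\mathcal{C}_{i,j}$ at each relay that must forward two flows, generalizing the XOR combinations of Figure~\ref{fig:MCLexpanded}. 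Verifying \textbf{C.1}--\textbf{C.6} is the content here, the crucial one being \textbf{C.6}: the two interferers seen at each destination must share the common cancellation color prescribed by a receive set analogous to (\ref{eq:Kmcyclicreceiveset}).

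The case $m=K-1$ is where I expect the real difficulty. Here each source misses exactly one relay and each relay misses exactly one destination, and the full interference-neutralization must be packed into only $K-1$ colors. I would again place a single-layer $(K,K-1)$ repetition pattern in the first layer so each relay recovers its isolated received signal, and then design the relays' coding color sets so that, at every destination, the up to $K-2$ interferers collapse onto one shared cancellation color, while respecting the stringent constraints \textbf{C.2} (a network-coding node transmits only once) and \textbf{C.3} (the structure of the coding set). Showing that such an assignment exists and meets \textbf{C.6} simultaneously at all destinations is the main obstacle, and it is precisely why the theorem is stated only for $m\in\{1,2,K-1,K\}$: for intermediate $m$ the two layers' cancellation requirements cannot in general be satisfied with $m$ colors. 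Once a valid coloring is exhibited and checked, Theorem~\ref{THM:CL} immediately delivers $\alpha=\tfrac1m$, meeting the converse.
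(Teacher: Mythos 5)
Your converse is sound and complete for all $m$: concentrating the $m$ live flows on the single relay ${\sf V}^2_m$, letting it simulate each destination's channel and decoder, and applying the forced-rate-plus-MAC argument is a valid variant of the paper's converse, which instead activates one relay-to-destination link per pair (so that relay ${\sf A}_i$ can decode $\hbox{W}_i$) and then runs the successive decode-and-subtract chain of Lemma~\ref{THM:cyclic1} at relay ${\sf A}_m$; your version even avoids the successive-decoding step. Your $m=1$ and $m=K$ achievability (one color; TDMA) also matches the paper.

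The genuine gap is that for $m=2$ and $m=K-1$ --- the only substantive cases of the achievability claim --- you never exhibit a coded layer coloring nor verify \textbf{C.1}--\textbf{C.6}, and Theorem~\ref{THM:CL} delivers $\alpha=\frac{1}{T}$ only once such a coloring is actually in hand. For $m=2$ you outline a plan (the repetition pattern of Lemma~\ref{THM:cyclic1} in the first layer, a ``mirror-image'' coded pattern in the second) and defer precisely the verification that constitutes the proof; for $m=K-1$ you explicitly concede that showing the required assignment exists ``is the main obstacle.'' A proof that stops at the obstacle is not a proof. The idea you are missing is that the construction should be asymmetric across pairs, not symmetric: in the paper, pairs $1,\ldots,K-1$ do plain per-pair interference avoidance with dedicated colors, $\mathcal{T}_{i,j}=\mathcal{R}_{i,j}=\{c_{j-1}\}$ and $\mathcal{C}_{i,j}=\emptyset$, and only pair $K$ uses coding: each relay carrying pair $K$ gets $\mathcal{C}_{i,K}=\bigcup_{j'\neq K}\mathcal{T}_{i,j'}$, transmits in its single leftover color the XOR of pair $K$'s signal with all the other signals it forwards, and ${\sf D}_K$ listens to all $K-1$ colors, so every interfering signal enters its summed observation exactly twice and cancels. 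Condition \textbf{C.6} then holds at the remaining destinations for a purely topological reason: the unique relay that sends pair $K$'s coded signal in color $c_{j-1}$ is relay $j-1$, which is exactly the one relay \emph{not} connected to ${\sf D}_j$ in the $(K,K-1)$ folded chain, so those destinations see no interferers at all. The $m=2$ case is handled in the same asymmetric way: for $K$ even no coding is needed whatsoever (the conflict cycle is $2$-colorable, so end-to-end interference avoidance with two colors suffices), and for $K$ odd only the wrap-around pair $K$ is coded at its two relays. Finally, your closing assertion that intermediate $m$ is excluded because the cancellation requirements ``cannot in general be satisfied'' is an unsupported impossibility claim; the theorem, and the paper, are simply silent about intermediate $m$.
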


\begin{proof}

\noindent {\bf Converse}: For the two-layer $(K,m)$ folded-chain network, in the first layer use the same channel gain assignment described for proof of Lemma~\ref{THM:cyclic1}, and in the second layer, we set the channel gain from relay $i$ to destination $i$ equal to $n$ (for the linear deterministic model) or $h$ (for the Gaussian model), and all other channel gains equal to $0$, $i = 1,2,\ldots,m$, see Figure~\ref{fig:kdcyclic2}. With this configuration, each destination ${\sf D}_i$ is {\it only} connected to relay ${\sf A}_i$, $i=1,\ldots,m$, and each relay ${\sf A}_i$ has all the information that destination ${\sf D}_i$ requires in order to decode its message, $i=1,\ldots,m$. Therefore, the converse argument presented for the single-layer case is valid here, and hence we have $\alpha \leq \frac{1}{m}$.

\begin{figure}[ht]
\centering
\includegraphics[width = 4.5cm]{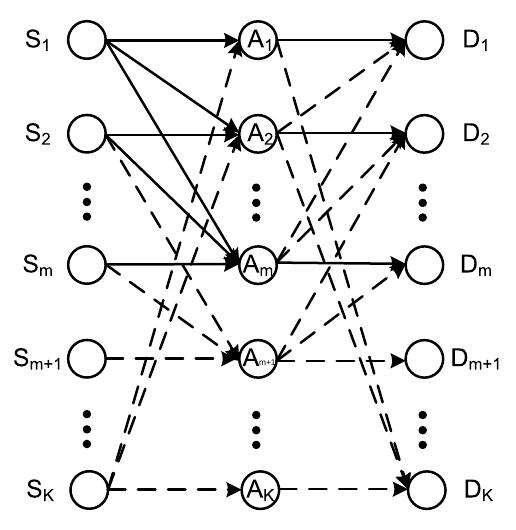}
\caption{\it Channel gain assignment in a two-layer $(K,m)$ folded-chain network. All solid links have capacity $n$ (for the linear deterministic model) or $h$ (for the Gaussian model), and all dashed links have capacity $0$. \label{fig:kdcyclic2}}
\end{figure}

\noindent {\bf Achievability}: The result is trivial for $m=1$. For $m=K$, the upper bound of $\alpha = \frac{1}{K}$ can be achieved by TDMA.

\noindent $\bullet$ $m=2$: Suppose $\mathcal{C} = \{ c_0, c_1 \}$. We have two cases: (a) $K$ is even: in this case, to any node ${\sf V}_{i,j} \in \mathcal{V}_{\sf exp}$ assign $\mathcal{C}_{i,j} = \emptyset$ and $\mathcal{T}_{i,j} = \mathcal{R}_{i,j} = \{ c_0 \}$ if $j$ is odd and $\mathcal{T}_{i,j} = \mathcal{R}_{i,j} = \{ c_1 \}$ if $j$ is even. The achievability scheme in this case turns out to be end-to-end interference avoidance; (b) $K$ is odd: in this case for $j \in \{ 1,\ldots,K-1 \}$, assign to any node ${\sf V}_{i,j} \in \mathcal{V}_{\sf exp}$, $\mathcal{C}_{i,j} = \emptyset$ and $\mathcal{T}_{i,j} = \mathcal{R}_{i,j} = \{ c_0 \}$ if $j$ is odd and $\mathcal{T}_{i,j} = \mathcal{R}_{i,j} = \{ c_1 \}$ if $j$ is even. For $j=K$, to any node ${\sf V}_{i,K} \in \mathcal{V}_{\sf exp}$, assign $\mathcal{C}_{i,K} = \mathcal{T}_{i,j^\prime}$ and $\mathcal{T}_{i,j} = \mathcal{C} \setminus \mathcal{T}_{i,j^\prime}$ and $\mathcal{R}_{i,K} = \{ c_0, c_1 \}$ where $j^\prime \in \mathcal{J}_{{\sf V}_i}$ and $j^\prime \neq K$. With this coloring, S-D pairs $1,\ldots,K-1$, communicate interference-free and we use network coding for S-D pair $K$. Note that we only have two nodes in the route-expanded graph where $\mathcal{C}_{i,j} \neq \emptyset$, and there are no interferers at any node ${\sf V}_{i,j} \in \mathcal{V}_{\sf exp}$. It is straightforward to verify that conditions \textbf{C.1-C.7} are satisfied and hence, by Theorem~\ref{THM:CL} a normalized sum-rate of $\alpha = \frac{1}{2}$ is achievable.

\noindent $\bullet$ $m=K-1$: Suppose $\mathcal{C} = \{ c_0, \ldots, C_{K-2} \}$. In this case for $j \in \{ 1,\ldots,K-1 \}$, assign to any node ${\sf V}_{i,j} \in \mathcal{V}_{\sf exp}$, $\mathcal{C}_{i,j} = \emptyset$ and $\mathcal{T}_{i,j} = \mathcal{R}_{i,j} = \{ c_{j-1} \}$, and to any node ${\sf V}_{i,K} \in \mathcal{V}_{\sf exp}$, assign $\mathcal{C}_{i,K} = \bigcup_{\small j^\prime \in \mathcal{J}_{{\sf V}_i}, j^\prime \neq K}{\mathcal{T}_{i,j^\prime}}$ and $\mathcal{T}_{i,j} = \mathcal{C} \setminus \mathcal{C}_{i,K}$ and $\mathcal{R}_{i,K} = \mathcal{C}$. Note that we only have two nodes in the route-expanded graph where $\mathcal{C}_{i,j} \neq \emptyset$, and there are no interferers at any node ${\sf V}_{i,j} \in \mathcal{V}_{\sf exp}$. It is straightforward to verify that conditions \textbf{C.1-C.7} are satisfied and hence, by Theorem~\ref{THM:CL} a normalized sum-rate of $\alpha = \frac{1}{m}$ is achievable.
\end{proof}

\subsection{MCL scheduling vs. MIL scheduling: nested folded-chain networks}

As we already know, MIL scheduling is a special subclass of MCL scheduling. Moreover in this subsection, we show that the gain from using MCL scheduling over MIL scheduling can be unbounded. To do so, we first define the following class of networks.
\begin{definition}
An {\it $L$-nested folded-chain} network is a single-layer network with $K = 3^L$ S-D pairs, $\{ {\sf S}_1, \ldots, {\sf S}_{3^L} \}$ and $\{ {\sf D}_1, \ldots, {\sf D}_{3^L} \}$. For $L=1$, an $L$-nested folded-chain network is the same as a single-layer $(3,2)$ folded-chain network. For $L > 1$, an $L$-nested folded-chain network is formed by first creating $3$ copies of an $(L-1)$-nested folded-chain network. Then,
\begin{itemize}
\item The $i$-th source in the first copy is connected to the $i$-th destination in the second copy, $i=1,\ldots,3^{L-1}$,
\item The $i$-th source in the second copy is connected to the $i$-th destination in the third copy, $i=1,\ldots,3^{L-1}$,
\item The $i$-th source in the third copy is connected to the $i$-th destination in the first copy, $i=1,\ldots,3^{L-1}$.
\end{itemize}
Figure~\ref{fig:nested} illustrates a $2$-nested folded-chain network.
\end{definition}

\begin{figure}[ht]
\centering
\includegraphics[height = 7cm]{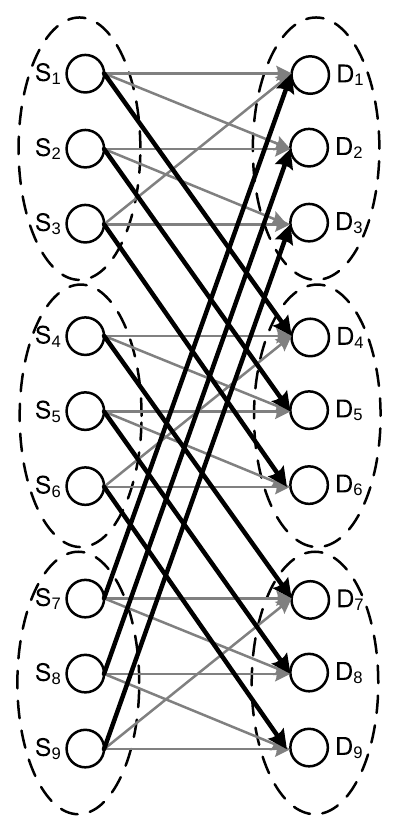}
\caption{\it A $2$-nested folded-chain network. \label{fig:nested}}
\end{figure}

Consider an $L$-nested folded-chain network. The conflict graph of this network is fully connected, and as a result, MIL scheduling achieves a normalized sum-rate of $\left( \frac{1}{3} \right)^L$. However, we know that for a single-layer $(3,2)$ folded-chain network, MCL scheduling results in a normalized sum-rate of $\frac{1}{2}$. Hence, 
applying CL scheduling to an $L$-nested folded-chain network, a normalized sum-rate of $\left( \frac{1}{2} \right)^L$ is achievable. For instance, consider the $2$-nested folded-chain network in Figure~\ref{fig:nested}. Consider the induced subgraphs of all three S-D pairs. We show that any transmission strategy over these induced subgraphs can be implemented in the original network by using only four time-slots, such that all nodes receive the same signal as if they were in the induced subgraphs. To achieve a normalized sum-rate of $\alpha = \left( \frac{1}{2} \right)^2$, we split the communication block into $4$ time-slots of equal length. During time-slot $1$, sources 1, 2, 4, and 5 transmit the same codewords as if they are in the induced subgraphs. During time-slot $2$, sources 3 and 6 transmit the same codewords as if they are in the induced subgraphs and sources 2 and 5 repeat their transmit signal from the first time-slot. During time-slot $3$, sources 7 and 8 transmit the same codewords as if they are in the induced subgraphs and sources 4 and 5 repeat their transmit signal from the first time-slot. During time-slot $4$, source 9 transmits the same codewords as if it is in the induced subgraph and sources 5, 6, and 8 repeat their transmit signal. It is straight forward to verify that with this scheme, all destinations receive the same signal as if they were in the induced subgraphs. Hence, a normalized sum-rate of $\alpha = \left( \frac{1}{2} \right)^2$ is achievable for the network in Figure~\ref{fig:nested}. Therefore, the gain of using MCL scheduling over MIL scheduling is $\left( \frac{3}{2} \right)^L$ which goes to infinity as $L \rightarrow \infty$. As a result, we can state the following lemma.

\begin{lemma}
Consider an $L$-nested folded-chain network. The gain of using MCL scheduling over MIL scheduling is $\left( \frac{3}{2} \right)^L$ which goes to infinity as $L \rightarrow \infty$.
\end{lemma}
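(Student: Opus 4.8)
The plan is to read the claimed gain as the ratio of the normalized sum-rates that the two schemes achieve on the $L$-nested folded-chain network, and to pin down each rate separately. Writing $\alpha_{\mathrm{MIL}}$ and $\alpha_{\mathrm{MCL}}$ for the two achieved rates, I would show $\alpha_{\mathrm{MIL}} = (1/3)^L$ and $\alpha_{\mathrm{MCL}} = (1/2)^L$, so that the gain is $\alpha_{\mathrm{MCL}}/\alpha_{\mathrm{MIL}} = (3/2)^L$, which diverges because $3/2 > 1$. The two rates are in fact already isolated in the discussion preceding the lemma, so the formal proof can cite them; below I outline how I would justify each, since that is where the content lies.

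For $\alpha_{\mathrm{MIL}}$ I would invoke the remark following Lemma~\ref{THM:cyclic1}, that on a single-layer network MIL scheduling is exactly a proper vertex coloring of the conflict graph and achieves $1/\xi$ with $\xi$ its chromatic number. Thus it suffices to establish that the conflict graph of the $L$-nested folded-chain network is complete on $3^L$ vertices, so that $\xi = 3^L$ and $\alpha_{\mathrm{MIL}} = (1/3)^L$ exactly. I would prove completeness by induction on $L$: the base case $L=1$ is the $(3,2)$ folded chain, whose conflict graph is $K_3$; for the step I would start from the three copies of the $(L-1)$-nested network, which are complete on $3^{L-1}$ vertices by hypothesis, and show that the cyclic index-matched inter-copy connections make every remaining cross-copy pair interfere, yielding a complete graph on $3^L$ vertices.

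For $\alpha_{\mathrm{MCL}}$ I would construct a coded layer coloring of $\mathcal{G}_{\sf exp}$ with $T = 2^L$ colors recursively, generalizing the explicit four-slot scheme described above for the $2$-nested network of Figure~\ref{fig:nested}. The construction is a product of two colorings: at the outer level the three copies are scheduled against one another using the two-color, repetition-coded coloring of the base $(3,2)$ folded chain (Figure~\ref{fig:33-exp}) --- one copy active in the first super-slot, one active in the second, and one repeating its first-super-slot signal --- while inside each super-slot the $(L-1)$-nested copy runs with its own $2^{L-1}$-color coloring supplied by the induction hypothesis. This doubles the palette at each level, giving $T = 2 \cdot 2^{L-1} = 2^L$. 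I would then check that the combined transmit, coding, and receive color sets satisfy conditions \textbf{C.1}--\textbf{C.6}, and apply Theorem~\ref{THM:CL} to conclude that $\alpha_{\mathrm{MCL}} = 1/2^L$ is achievable.

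Combining the two rates gives the gain $(3/2)^L$, and the divergence as $L \to \infty$ is immediate. I expect the MCL step to be the main obstacle: one must verify that condition \textbf{C.6} continues to hold after taking the product of the outer base coloring with the inner $(L-1)$-level coloring, i.e.\ that at every destination the interferers created by the inter-copy links all share one color that sits in the receive set but outside the desired-signal colors, so that summing the received signals over the $2^L$ slots cancels each interferer while reproducing every desired codeword exactly once. Establishing that the interference-neutralization structure of the base two-slot scheme is preserved under the recursion --- and is not spoiled by the newly added cross-copy edges at each level --- is the crux; the conflict-graph induction underlying $\alpha_{\mathrm{MIL}}$, though it must be carried out with care, is comparatively routine.
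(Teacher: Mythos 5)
Your proposal is correct and follows essentially the same route as the paper: the paper also obtains $\alpha_{\mathrm{MIL}} = (1/3)^L$ from the fact that the conflict graph on $3^L$ pairs is complete, obtains $\alpha_{\mathrm{MCL}} = (1/2)^L$ by recursively applying the two-slot repetition scheme of the single-layer $(3,2)$ folded chain to the nested structure (illustrated explicitly for $L=2$), and takes the ratio. If anything, your plan is more rigorous than the paper's proof, which asserts the conflict-graph completeness and the recursive $(1/2)^L$ achievability without carrying out the inductions or verifying conditions \textbf{C.1}--\textbf{C.6} that you rightly identify as the crux.
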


\section{Concluding Remarks}
\label{conclusion}
In this paper, we developed a new transmission strategy (Coded Layer scheduling) for multi-layer wireless networks with partial netwrok knowledge (i.e., $1$-local view) that combines multiple ideas including interference avoidance and network coding. We established the optimality of our proposed strategy and its special case (Independent Layer scheduling) for some classes of networks, in terms of achieving the normalized sum-capacity. We also demonstrated several connections between network topology, normalized sum-capacity, and the achievability strategies. However, it remains open to evaluate the performance of Coded Layer scheduling in a more general class of networks.

So far, we have only studied cases with $1$-local view. One major direction is to characterize the increase in normalized sum-capacity as nodes learn more about the network (i.e., $h$-local view, $h>1$). We have also focused on the case that the nodes know the network-connectivity globally, but the actual values of the channel gains are only known for a subset of flows. Another important direction would be to understand the effects of local knowledge about network connectivity on the capacity and develop distributed strategies to optimally route information with partial knowledge about network connectivity. Finally, it is quite interesting to see how CL scheduling can be improved by incorporating more general network codes.

%is optimal. As mentioned before, the very idea of developing fundamental information-theorectic based foundations for networks with partial information at nodes is in its beginning stages and as a result, there are many salient directions to work on.
\section{Acknowledgement}
The authors would like to thank Kanes Sutuntivorakoon of Rice University for helpful discussions related to Section V.C of this paper.
\appendices

\section{Proof of Lemma~\ref{lemma:generalhalf}}
\label{Appendix:generalhalf}
Consider a path from source ${\sf S}_i$ to destination ${\sf D}_j$, $i \neq j$, $i,j \in \{1,\ldots,K\}$. Assign channel gain of $n$ (for the linear deterministic model) and $h$ (for the Gaussian model) to all edges in this path. For each one of the two S-D pairs $i$ and $j$, pick exactly one path from the source to the destination and assign channel gain of $n$ (for the linear deterministic model) or $h$ (for the Gaussian model) to all edges in these paths. Assign channel gain of $0$ to all remaining edges in the network $\mathcal{G}$, where $n \in \mathbb{N}$ and $h \in \mathbb{C}$. See Figure~\ref{fig:upperbound} for an illustration. 

\begin{figure}[h]
\centering
\includegraphics[height = 3.5cm]{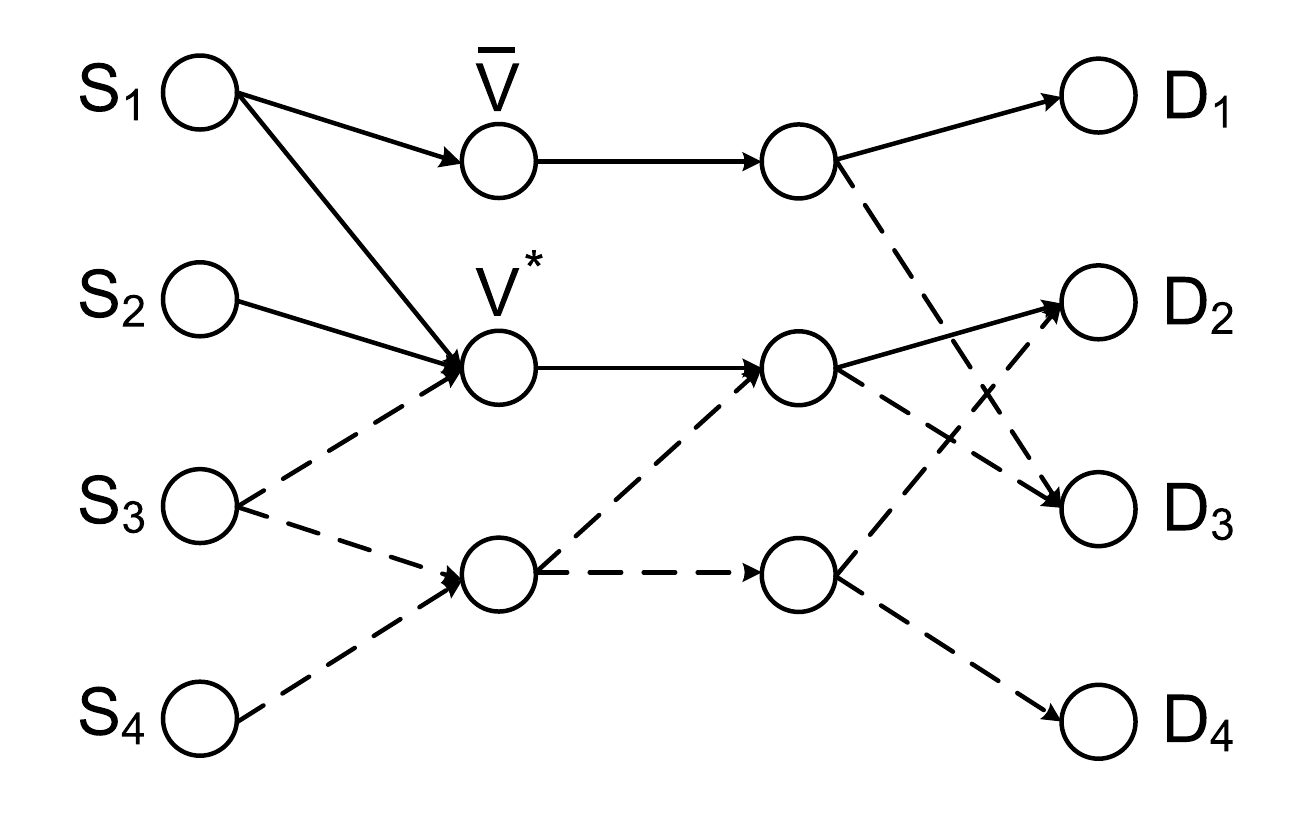}
\caption{\it A path exists from source $\sf S_1$ to destination $\sf D_2$; all solid edges have capacity $n$ (for linear deterministic model) or $h$ (for the Gaussian model) and the rest have capacity $0$. \label{fig:upperbound}}
\end{figure}

% In order to guarantee a normalized sum-rate of $\alpha$ with $1$-local view, each source has to transmit at a rate greater than or equal to $\alpha n - \tau$. This is due to the fact that from each source's point of view, it is possible that the other S-D pairs have capacity $0$. Suppose this rate is feasible, therefore ${\sf D}_j$ is able to decode $\hbox{W}_j$ and it will remove it from the received signal. The remaing signal is exactly the signal that ${\sf D}_i$ receives, and since ${\sf D}_i$ should be able to decode $\hbox{W}_i$, ${\sf D}_j$ can decode it as well. Hence, the MAC capacity at ${\sf D}_j$ gives us

In order to guarantee a normalized sum-rate of $\alpha$ with $1$-local view, each source has to transmit at a rate greater than or equal to $\alpha n - \tau$ (for the linear deterministic model) or $\alpha \log(1+|h|^2) - \tau$ (for the Gaussian model). This is due to the fact that from each source's point of view, it is possible that the other S-D pairs have capacity $0$. Suppose this rate is feasible, \emph{i.e.} destinations ${\sf D}_i$ and ${\sf D}_j$ are able to decode $\hbox{W}_i$ and $\hbox{W}_j$ respectively.

Since there exists a path from source ${\sf S}_i$ to destination ${\sf D}_j$, $i \neq j$, $i,j \in \{1,\ldots,K\}$, we can find a node ${\sf V}^\ast \in \mathcal{V}$ such that ${\sf V}^\ast \in \mathcal{G}_{ij}$ and ${\sf V}^\ast \in \mathcal{G}_{jj}$, see Figure~\ref{fig:upperbound}. Node ${\sf V}^\ast$ is able to decode $\hbox{W}_j$ since it has more information than destination ${\sf D}_j$. Node ${\sf V}^\ast$ decodes $\hbox{W}_j$ and removes it from the received signal. Now, for the Gaussian model, it has statistically the same received signal as node $\bar{\sf V}$ (a node in $\mathcal{G}_{ii}$ and in the same layer as ${\sf V}^\ast$) and for the linear deterministic model it has the same received signal as node $\bar{\sf V}$. Node $\bar{\sf V}$ is able to decode $\hbox{W}_i$ since it has more information than destination ${\sf D}_i$., as a result, ${\sf V}^\ast$ is also able to decode $\hbox{W}_i$. This means that there exists a node ${\sf V}^\ast \in \mathcal{V}$ that can decode both $\hbox{W}_i$ and $\hbox{W}_j$. 

Hence, the MAC capacity at ${\sf V}^\ast$ in the linear deterministic model gives us
\begin{equation}
2 \alpha n - 2 \tau \leq n \Rightarrow (2 \alpha - 1) n \leq 2 \tau.
\end{equation}
Since this has to hold for all values of $n$, and $\alpha$ and $\tau$ are independent of $n$, we get $\alpha \leq \frac{1}{2}$.

For the Gaussian model, the MAC capacity at ${\sf V}^\ast$ gives us
\begin{equation}
2 ( \alpha \log(1+|h|^2) - \tau ) \leq \log(1 + 2|h|^2),
\end{equation}
which results in
\begin{equation}
( 2 \alpha - 1 ) \log(1+|h|^2) \leq 1 + 2 \tau.
\end{equation}
Since this has to hold for all values of $h$, and $\alpha$ and $\tau$ are independent of $h$, we get $\alpha \leq \frac{1}{2}$.

\section{}
\label{Appendix:kappa}
\begin{claim}
\label{claim:kappa}
Consdier a complex multi-hop Gaussian relay network with one source ${\sf S}$ and one destination ${\sf D}$ represented by a directed graph $\mathcal{G} = (\mathcal{V}, \mathcal{E},\{ h_{ij} \}_{(i,j) \in \mathcal{E}})$ where $\{ h_{ij} \}_{(i,j) \in \mathcal{E}}$ represent the channel gains associated with the edges.

% The channel input at node ${\sf V}_i \in \mathcal{V}$ at time $t$ is denoted by $X_{{\sf V}_{i}}[t] \in \mathbb{C}$, and the received signal at node ${\sf V}_j \in \mathcal{V}$ at time $t$ is denoted by $Y_{{\sf V}_{j}}[t] \in \mathbb{C}$ given by
% \begin{equation}
% Y_{{\sf V}_{j}}[t] = \sum_i{h_{ij} X_{{\sf V}_{i}}[t] + Z_j[t]},
% \end{equation}
We assume that at each receive node the additive white complex Gaussian noise has variance $\sigma^2$. We also assume a power constraint of $P$ at all nodes, \emph{i.e.} $\lim_{n\to\infty}\frac{1}{n} \mathbb{E}(\sum_{t=1}^n{|X_{{\sf V}_{i}}[t]|^2}) \le P$. %We assume a power constraint of $P$ at all transmitting nodes and a zero-mean unit-variance complex Gaussian noise at each receiving node. 
Denote the capacity of this network by $C(\sigma^2,P)$. Then, for all $T \geq 1$, $T \in \mathbb{R}$, we have 
\begin{equation}
C(\sigma^2,P) - \tau \le C(T\sigma^2,P/T) \le C(\sigma^2,P),
\end{equation}
where $\tau = |\mathcal{V}| \left( 2 \log T + 17 \right)$ is a constant independent of channel gains, $P$, and $\sigma^2$.
\end{claim}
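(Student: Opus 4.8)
The plan is to prove the two inequalities separately: the right-hand one by a direct monotonicity argument, and the left-hand one by passing through the cut-set bound together with the constant-gap (approximate max-flow min-cut) theorem of \cite{ADTJ10}. The upper bound $C(T\sigma^2, P/T) \le C(\sigma^2, P)$ is immediate. Since $T \ge 1$ we have $T\sigma^2 \ge \sigma^2$ and $P/T \le P$, and the capacity of a Gaussian relay network is non-increasing in every receiver noise variance and non-decreasing in every node power budget (more noise and less power can only shrink the set of achievable rates). Hence degrading the operating point from $(\sigma^2, P)$ to $(T\sigma^2, P/T)$ cannot increase capacity.

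For the lower bound I would invoke two facts. First, the cut-set bound $\bar C(\cdot)$ is always an upper bound on capacity, so $\bar C(\sigma^2, P) \ge C(\sigma^2, P)$. Second, the ADT theorem \cite{ADTJ10} guarantees that for a single-source single-destination Gaussian relay network the capacity lies within a gap of the cut-set bound that depends only on the number of nodes, namely $\bar C(\cdot) - \kappa \le C(\cdot)$ with $\kappa \le 17|\mathcal V|$ uniformly in the channel gains, $P$, and $\sigma^2$. The crucial feature is that this gap is SNR-\emph{independent}, which lets me decouple it from the loss incurred by worsening the operating point.

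The heart of the argument is to show that the cut-set bound itself degrades by at most $2|\mathcal V|\log T$, i.e.
\[
\bar C(T\sigma^2, P/T) \ge \bar C(\sigma^2, P) - 2|\mathcal V| \log T .
\]
I would prove this cut-by-cut. Fix a cut $\Omega$ separating $\sf S$ from $\sf D$, let $H_\Omega$ be the associated MIMO channel, and let $Q^\ast$ be the input covariance (satisfying the per-node constraint $Q^\ast_{ii} \le P$) that is optimal for $\bar C_\Omega(\sigma^2, P)$. The scaled covariance $Q^\ast/T$ is feasible for the degraded network because $(Q^\ast/T)_{ii} \le P/T$, and substituting it gives
\[
\bar C_\Omega(T\sigma^2, P/T) \ge \log\det\!\left(I + \tfrac{1}{T^2\sigma^2}\, H_\Omega Q^\ast H_\Omega^\dagger\right).
\]
Writing $\mu_1,\ldots,\mu_r$ for the nonzero eigenvalues of $H_\Omega Q^\ast H_\Omega^\dagger$ (with $r \le |\mathcal V|$) and setting $a_i = \mu_i/\sigma^2$, the per-eigenvalue loss is $\log\frac{1+a_i}{1+a_i/T^2}$; since $a \mapsto \frac{1+a}{1+a/T^2}$ is increasing on $[0,\infty)$ with supremum $T^2$ for $T \ge 1$, each term is at most $2\log T$. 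Summing the $r \le |\mathcal V|$ terms bounds the per-cut loss by $2|\mathcal V|\log T$, and since this constant is independent of $\Omega$ I can take the minimum over cuts on both sides to obtain the displayed inequality.

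Chaining everything yields
\[
C(T\sigma^2, P/T) \ge \bar C(T\sigma^2, P/T) - \kappa \ge \bar C(\sigma^2, P) - 2|\mathcal V|\log T - \kappa \ge C(\sigma^2, P) - 2|\mathcal V|\log T - \kappa,
\]
so that $\kappa \le 17|\mathcal V|$ gives $C(T\sigma^2, P/T) \ge C(\sigma^2,P) - |\mathcal V|(2\log T + 17) = C(\sigma^2,P) - \tau$, as claimed. An equivalent and perhaps cleaner bookkeeping is to first note that the invertible rescaling $X_i \mapsto X_i/\sqrt{T}$ of all transmit signals shows $C(T\sigma^2, P/T) = C(T^2\sigma^2, P)$, reducing everything to comparing capacities at two noise levels with the power held fixed. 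The step I expect to be the main obstacle is pinning the ADT gap down to the explicit per-node value in the \emph{complex} Gaussian setting so that it fits inside the $17|\mathcal V|$ budget; once the SNR-dependent loss ($2|\mathcal V|\log T$) and the SNR-independent gap ($\kappa$) are kept separate, the remainder is routine.
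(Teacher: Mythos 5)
Your proposal is correct and takes essentially the same route as the paper: the right inequality by monotonicity, and the left by sandwiching capacity between cut-set-type bounds with the ADT constant gap (which the paper splits as $15|\mathcal{V}|$ below and $2|\mathcal{V}|$ above the i.i.d.\ evaluation, summing to your $17|\mathcal{V}|$ budget) combined with the identical per-cut eigenvalue argument bounding the log-det loss by $2\log T$ per eigenvalue via monotonicity of $(1+a)/(1+a/T^2)\le T^2$. The only difference is bookkeeping—the paper chains through $\bar C_{i.i.d.}$ directly, while you chain through the per-cut optimized cut-set bound with a single gap constant $\kappa$—and these are equivalent.
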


\begin{proof}
First note that by increasing noise variances and decreasing power constraint, we only decrease the capacity, hence $C(T\sigma^2,P/T) \le C(\sigma^2,P)$. To prove the other inequality, we use the results in \cite{ADTJ10}. The cut-set bound $\bar{C}$ is defined as
\begin{align}
\bar{C}(\sigma^2,P) = \max_{p(\{ X_j \}_{{\sf V}_j \in \mathcal{V}})}{\min_{\Omega \in \Lambda_D}{I(Y_{\Omega^c};X_{\Omega}|X_{\Omega^c})}},
\end{align}
where $\Lambda_D = \{ \Omega : {\sf S} \in \Omega, {\sf D} \in \Omega^c \}$ is the set of all S-D cuts. \footnote{ A cut $\Omega$ is a subset of  $\mathcal{V}$ such that ${\sf S} \in \Omega, {\sf D} \notin \Omega$, and $\Omega^c = \mathcal{V} \setminus \Omega$.} Also $\bar{C}_{i.i.d}(\sigma^2,P) = \min_{\Omega \in \Lambda_D}{\log|{\bf I} + \frac{P}{\sigma^2} {\bf G}_{\Omega} {\bf G}^{\ast}_{\Omega}|}$ is the cut-set bound evaluated for i.i.d. $\mathcal{N}(0,P)$ input distributions and ${\bf G}_{\Omega}$ is the transfer matrix associated with the cut $\Omega$, \emph{i.e.} the matrix relating the vector of all the inputs at the nodes in $\Omega$, denoted by ${\bf X}_\Omega$, to the vector of all the outputs in $\Omega^c$, denoted by ${\bf Y}_{\Omega^c}$, as in ${\bf Y}_{\Omega^c} = {\bf G}_{\Omega} {\bf X}_\Omega + {\bf Z}_{\Omega^c}$ where ${\bf Z}_{\Omega^c}$ is the noise vector. In \cite{ADTJ10}, it has been shown that 
\begin{align}
\label{eq:appendix1}
\bar{C}_{i.i.d}(\sigma^2,P) - 15 |\mathcal{V}| \le C(\sigma^2,P) \le \bar{C}_{i.i.d}(\sigma^2,P) + 2 |\mathcal{V}|,
\end{align}
where $|\mathcal{V}|$ is the total number of nodes in the network. Similarly, we have
\begin{align}
\label{eq:appendix2}
& \bar{C}_{i.i.d}(T\sigma^2,P/T) - 15 |\mathcal{V}| \le C(T\sigma^2,P/T) \nonumber \\
& C(T\sigma^2,P/T) \le \bar{C}_{i.i.d}(T\sigma^2,P/T) + 2 |\mathcal{V}|.
\end{align}

Now, we will show that
\begin{align}
C(\sigma^2,P) - C(T\sigma^2,P/T) \le |\mathcal{V}| \left( 2 \log T + 17 \right).
% & \le \min_{\Omega \in \Lambda_D}{\log|{\bf I} + \frac{P}{\sigma^2} {\bf G}_{\Omega} {\bf G}^{\ast}_{\Omega}|} - \min_{\Omega \in \Lambda_D}{\log|{\bf I} + \frac{P}{T^2\sigma^2} {\bf G}_{\Omega} {\bf G}^{\ast}_{\Omega}|} \nonumber \\
\end{align}
For any S-D cut $\Omega \in \Lambda_D$, $\frac{P}{\sigma^2} {\bf G}_{\Omega} {\bf G}^{\ast}_{\Omega}$ is a positive semi-definite matrix. Hence, there exists a unitary matrix ${\bf U}$ such that ${\bf U} {\bf G}_{diag} {\bf U}^\ast = \frac{P}{\sigma^2} {\bf G}_{\Omega} {\bf G}^{\ast}_{\Omega}$ where ${\bf G}_{diag}$ is a diagonal matrix. Refer to the non-zero elements in ${\bf G}_{diag}$ as $g_{ii}$'s. We then have
\begin{align}
\label{eq:monotone}
& \log|{\bf I} + \frac{P}{\sigma^2} {\bf G}_{\Omega} {\bf G}^{\ast}_{\Omega}| - \log|{\bf I} + \frac{P}{T^2\sigma^2} {\bf G}_{\Omega} {\bf G}^{\ast}_{\Omega}| \nonumber \\
& = \log|{\bf I} + {\bf U} {\bf G}_{diag} {\bf U}^\ast| - \log|{\bf I} + \frac{1}{T^2} {\bf U} {\bf G}_{diag} {\bf U}^\ast| \nonumber \\
& = \log|{\bf U} {\bf U}^\ast + {\bf U} {\bf G}_{diag} {\bf U}^\ast| - \log|{\bf U} {\bf U}^\ast + \frac{1}{T^2} {\bf U} {\bf G}_{diag} {\bf U}^\ast| \nonumber \\
& = \log \left( |{\bf U}| |{\bf I}+ {\bf G}_{diag}| |{\bf U}^\ast| \right) - \log \left( |{\bf U}| |{\bf I}+ \frac{1}{T^2} {\bf G}_{diag}| |{\bf U}^\ast| \right) \nonumber \\
& = \log|{\bf I}+ {\bf G}_{diag}| - \log|{\bf I}+ \frac{1}{T^2} {\bf G}_{diag}| \nonumber  \\
& = {\sf tr} \log \left( {\bf I}+{\bf G}_{diag} \right) - {\sf tr} \log \left( {\bf I}+ \frac{1}{T^2} {\bf G}_{diag} \right) \nonumber \\
& = \sum_{i}{\log \left( 1+g_{ii} \right) } - \sum_{i}{\log \left( 1+\frac{1}{T^2} g_{ii} \right)} \nonumber \\
& = \sum_{i}{\log \left( \frac{1+g_{ii}}{1+\frac{1}{T^2}g_{ii}} \right)} \nonumber \\
& \overset{(a)}{\le} \sum_{i}{\lim_{g_{ii} \rightarrow \infty} \log \left( \frac{1+g_{ii}}{1+\frac{1}{T^2}g_{ii}} \right) } \nonumber \\
& = \sum_{i}{\log T^2} \le 2 |\mathcal{V}| \log T,
\end{align}
where $(a)$ follows from the fact that $\frac{1+g_{ii}}{1+\frac{1}{T^2}g_{ii}}$ is monotonically increasing in $g_{ii}$.

Now suppose that $\min_{\Omega \in \Lambda_D}{\log|{\bf I} + \frac{P}{T^2\sigma^2} {\bf G}_{\Omega} {\bf G}^{\ast}_{\Omega}|} = \log|{\bf I} + \frac{P}{T^2\sigma^2} {\bf G}_{\Omega^\prime} {\bf G}^{\ast}_{\Omega^\prime}|$. Hence, from (\ref{eq:monotone}), we have
\begin{align}
\label{eq:maxmin}
& \min_{\Omega \in \Lambda_D}{\log|{\bf I} + \frac{P}{\sigma^2} {\bf G}_{\Omega} {\bf G}^{\ast}_{\Omega}|} - \min_{\Omega \in \Lambda_D}{\log|{\bf I} + \frac{P}{T^2\sigma^2} {\bf G}_{\Omega} {\bf G}^{\ast}_{\Omega}|} \nonumber \\
& = \min_{\Omega \in \Lambda_D}{\log|{\bf I} + \frac{P}{\sigma^2} {\bf G}_{\Omega} {\bf G}^{\ast}_{\Omega}|} -  \log|{\bf I} + \frac{P}{T^2\sigma^2} {\bf G}_{\Omega^\prime} {\bf G}^{\ast}_{\Omega^\prime}| \nonumber \\
& \le \log|{\bf I} + \frac{P}{\sigma^2} {\bf G}_{\Omega^\prime} {\bf G}^{\ast}_{\Omega^\prime}| - \log|{\bf I} + \frac{P}{T^2\sigma^2} {\bf G}_{\Omega^\prime} {\bf G}^{\ast}_{\Omega^\prime}| \nonumber \\
& \overset{(a)}{\le} 2 |\mathcal{V}| \log T,
\end{align}
where (a) follows from (\ref{eq:monotone}). Hence, from (\ref{eq:appendix1}) and (\ref{eq:appendix2}) we have
\begin{align}
& C(\sigma^2,P) - C(T\sigma^2,P/T) \le \min_{\Omega \in \Lambda_D}{\log|{\bf I} + \frac{P}{\sigma^2} {\bf G}_{\Omega} {\bf G}^{\ast}_{\Omega}|} \nonumber \\
& - \min_{\Omega \in \Lambda_D}{\log|{\bf I} + \frac{P}{T^2\sigma^2} {\bf G}_{\Omega} {\bf G}^{\ast}_{\Omega}|} + 17 |\mathcal{V}|  \overset{(a)}{\le} |\mathcal{V}| \left( 2 \log T + 17 \right),
\end{align}
where $(a)$ follows from (\ref{eq:maxmin}). Therefore, we get
\begin{align}
C(\sigma^2,P) - \tau \le C(T\sigma^2,P/T) \le C(\sigma^2,P),
\end{align}
where $\tau = |\mathcal{V}| \left( 2 \log T + 17 \right)$ is a constant independent of channel gains.
\end{proof}

% \begin{corollary}
% \label{corollarykappa}
% Consdier a complex multi-hop Gaussian relay network as described in Claim~\ref{claim:kappa}. Denote the capacity of this network by $C$. Now, consider another network where all noise terms have variances greater than or equal to $1$ and less than or equal to $\kappa_1$, $\kappa_1 > 1$, and at node ${\sf V}_i \in \mathcal{V}$ assume a power constraint of $P_i$ such that $P_i \ge \frac{P}{\kappa_2}$, $\kappa_2 \ge 1$, and the rest of the network is the same. Denote the capacity of this network by $C_{\kappa_{12}}$. Then $C - \tau_{\kappa} \le C_{\kappa_{12}} \le C$ where $\tau_{\kappa}$ is a constant independent of channel gains.
% \end{corollary}

% \begin{proof}
% Since increasing noise variances and decreasing transmit power can only decrease capacity, we have $C_{\kappa_{12}} \le C$. Similarly, $C_{\kappa} \le C_{\kappa_{12}}$. Then, from Claim~\ref{claim:kappa} by setting $\kappa = \kappa_1 \times \kappa_2$, we get $C - \tau_{\kappa} \le C_{\kappa_{12}} \le C$.
% \end{proof}

\bibliographystyle{ieeetr}
\bibliography{bib_partial}

\end{document}